\documentclass[a4paper,11pt]{article}
\usepackage[bookmarks,colorlinks,breaklinks]{hyperref}  
\hypersetup{linkcolor=blue,citecolor=blue,filecolor=dullmagenta,urlcolor=blue} 
\usepackage{amsmath, amssymb, amsbsy, amsthm,slashed, simplewick}
\usepackage{epsfig}
\usepackage{axodraw}
\usepackage{tabularx}





\topmargin -15mm
\oddsidemargin -1mm
\evensidemargin -1mm
\textheight 240mm
\textwidth 165mm

\setlength{\parskip}{\medskipamount}

\numberwithin{equation}{section}






\newtheorem{lemma}[equation]{Lemma}

\newtheorem{cor}[equation]{Corollary}
\newtheorem{proposition}[equation]{Proposition}
\theoremstyle{definition}

\theoremstyle{remark}


\usepackage{graphicx}
\usepackage{slashed} 
\usepackage{comment}
\usepackage{bbm}

\usepackage[utf8]{inputenc}
\usepackage{dsfont}

\def\beq{\begin{equation}}
\def\eeq{\end{equation}}
\def\bea{\begin{eqnarray}}
\def\eea{\end{eqnarray}}

\newcommand{\Tr}{\,{\rm Tr}\,}
\def\={\ =\ }

\newcommand{\dd}{\ensuremath{\mathrm{d}}}
\newcommand{\e}{\ensuremath{\,\mathrm{e}\,}}
\renewcommand{\i}{\ensuremath{\,\mathrm{i}\,}}
\newcommand{\N}{\ensuremath{\mathbb{N}}}
\newcommand{\R}{\ensuremath{\mathbb{R}}}
\renewcommand{\C}{\ensuremath{\mathbb{C}}}
\newcommand{\Z}{\ensuremath{\mathbb{Z}}}
\newcommand{\K}{\ensuremath{\mathbb{K}}}

\newcommand{\grad}{\ensuremath{\mathrm{\grad}\,}}

\newcommand{\p}[1]{\partial_{#1}}

\newcommand{\pdm}[2]{\frac{\partial^{#1}}{\partial #2^{#1}}}

\newcommand{\ket}[1]{|#1\rangle}
\newcommand{\bra}[1]{\langle #1|}
\newcommand{\bk}[2]{\langle #1|#2\rangle}
\newcommand{\kb}[2]{|#1\rangle\langle#2|}

\newcommand{\s}{\text{s}}

\newcommand{\f}{\mathcal{F}}

\renewcommand{\Im}{\ensuremath{\mathfrak{Im}}}
\renewcommand{\Re}{\ensuremath{\mathfrak{Re}}}

\renewcommand{\d}{\ensuremath{{\sf D}}}

\renewcommand{\v}{\ensuremath{\vech{V}}}
\newcommand{\h}{\ensuremath{\vech{\sf H}}}

\newcommand{\half}{\ensuremath{\frac{1}{2}}}

\newcommand{\eq}{\begin{eqnarray}}
\newcommand{\eqend}{\end{eqnarray}}

\def\vec#1{\mathchoice{\mbox{\boldmath$\displaystyle#1$}}{\mbox{\boldmath$\textstyle#1$}}{\mbox{\boldmath$\scriptstyle#1$}}{\mbox{\boldmath$\scriptscriptstyle#1$}}}
\newcommand{\vech}[1]{\ensuremath{\vec{\hat{\sf{#1}}}}}

\newcommand{\we}[1]{\ensuremath{{\sf{W}}^{-1}\left[#1\right]}}
\newcommand{\wee}[1]{\ensuremath{{\sf{W}}^{-1}[#1]}}
\newcommand{\wii}[1]{\ensuremath{{\sf W}}[#1]}
\newcommand{\wi}[1]{\ensuremath{{\sf W}}\left[#1\right]}

\newcommand{\vt}{\ensuremath{\vartheta}}
\newcommand{\so}{\ensuremath{\mathcal{S}_0}}

\renewcommand{\s}{\ensuremath{\mathcal{S}}}
\renewcommand{\P}{\ensuremath{\textsf P}}
\newcommand{\Pt}{\ensuremath{\tilde{\textsf P}{}^2}}
\newcommand{\Pii}{\ensuremath{\textsf P}{}_i^2}
\newcommand{\Pmm}{\ensuremath{\textsf P}{}_{\hspace{-0.045 cm}\mu}^2}
\newcommand{\Pti}{\ensuremath{\tilde{\textsf P}{}_i^2}}
\newcommand{\Ptm}{\ensuremath{\tilde{\textsf P}{}_{\hspace{-0.045 cm}\mu}^2}}
\renewcommand{\K}{\ensuremath{\textsf K}}
\newcommand{\Kt}{\ensuremath{\tilde{\textsf K}{}^2}}
\newcommand{\Kii}{\ensuremath{\textsf K}{}_i^2}
\newcommand{\Kmm}{\ensuremath{\textsf K}{}_{\hspace{-0.045 cm}\mu}^2}
\newcommand{\Kti}{\ensuremath{\tilde{\textsf K}{}_i^2}}
\newcommand{\Ktm}{\ensuremath{\tilde{\textsf K}{}_{\hspace{-0.045 cm}\mu}^2}}

\title{Propagators and Matrix Basis \\ on Noncommutative Minkowski
  Space\footnote{Report numbers: \ ITP--UH--04/11 \ , \ HWM--11--20 \ , \ EMPG--11--19}}

\author{Andr\'e Fischer\thanks{Email: {\tt 
   afischer3@kpmg.com}}\\[0.2cm]
 {\normalsize\slshape Institut f\"ur Theoretische Physik}\\[-0.1cm]
 {\normalsize\slshape Leibniz Universit\"at Hannover}\\[-0.1cm]
 {\normalsize\slshape Appelstra\ss e 2, D-30167 Hannover,
   Germany}\\[+0.5cm] 
Richard J. Szabo\footnote{Email: {\tt 
   R.J.Szabo@ma.hw.ac.uk}}\\[0.2cm]
 {\normalsize\slshape Department of Mathematics}\\[-0.1cm]
 {\normalsize\slshape Heriot-Watt University}\\[-0.1cm]
 {\normalsize\slshape Colin Maclaurin Building, Riccarton, Edinburgh
   EH14 4AS, U.K.}\\[-0.1cm]
 {\normalsize and}\\[-0.1cm]
 {\normalsize\slshape Maxwell Institute
   for Mathematical Sciences, Edinburgh, U.K.}
}

\begin{document}

\maketitle

\begin{abstract}
We describe an analytic continuation
of the Euclidean Grosse-Wulkenhaar and LSZ models which  
defines a one-parameter family of duality covariant noncommutative
field theories interpolating between Euclidean and Minkowski space
versions of these models, and provides an
alternative regularization to the usual Feynman prescription. This
regularization allows for a matrix model representation of the field
theories in
terms of a complex generalization of the usual basis of Landau
wavefunctions. The corresponding propagators are calculated and
identified with the Feynman propagators of the field theories. The
regulated quantum field theories are shown to be UV/IR-duality
covariant. We study the asymptotics of the regularized propagators in
position and matrix space representations, and confirm that they
generically possess a comparably good decay behaviour as in the Euclidean case.
\end{abstract}


\section{Introduction and Summary\label{Intro}}

This paper is devoted to an in-depth study of the perturbative
properties and renormalizability of noncommutative $\phi^{\star 4}$-type scalar field theories on real vector
spaces subjected to a Moyal deformation. The vast majority of the
literature on this subject has been devoted to Euclidean quantum field
theory. The most prominent feature of these models is the notorious mixing of
ultraviolet and infrared modes, which renders the noncommutative
$\phi^{\star n}$ field theories nonrenormalizable~\cite{mvrs00} (see
e.g.~\cite{sza01} for a review). Grosse and Wulkenhaar demonstrated
how to obtain field theories which are renormalizable to
all orders in perturbation theory by extending the kinetic
term of the $\phi^{\star 4}$ Lagrangian by an additional harmonic
oscillator potential~\cite{gw03,gw05}. The Grosse-Wulkenhaar
model also has vanishing beta-functions and
its perturbation series is likely to be Borel summable~\cite{riv07};
in two dimensions this has been recently established
in~\cite{wang11}. In four dimensions, the Euclidean Grosse-Wulkenhaar model is the first rigorous four-dimensional quantum field theory without unnatural cutoff which is expected to exist non-perturbatively and is not asymptotically free. 

The continuation of the Euclidean models to noncommutative
Minkowski space is presently an open problem which is plagued by both
conceptual and technical difficulties. The original problems were
unveiled in~\cite{gm00}, where it was found that
the standard perturbative expansion in terms of Feynman diagrams leads
to a violation of unitarity if space and {time} do not commute. As
subsequently pointed out in~\cite{BDFP02}, this due to the failure of
Wick's theorem, which does not apply to
non-local interactions in general. By using canonical quantization in the Hamiltonian framework involving the Dyson series and
time-ordered products of the interaction Hamiltonian, the resulting
field theory is still unitary but no longer equivalent to the Lagrangian formulation of the
quantum field theory in the path integral framework. For models built on the Hamiltonian framework see
e.g.~\cite{dfr94a,bahns04,pia10}. Yet another inequivalent
perturbative approach is based on the
Yang-Feldman formalism~\cite{BDFP02,bahns04}, which also gives a
unitary noncommutative quantum field theory on Minkowski space with
time-like noncommutativity.\footnote{Still another approach is
  provided by the twist-deformation
formalism for noncommutative quantum field theory on Minkowski space which
is considered in~\cite{bal06,gl07,gl08,sol08}; here one first
quantizes the classical field theory before deforming spacetime, and the
free part of the quantum field theory also differs from its commutative counterpart.}

The UV/IR-mixing problem of the ordinary $\phi^{\star n}$ field theory
is absent in the Hamiltonian framework to lowest orders, and it has
long been an open question as to whether it exists at all. Only
recently has UV/IR-mixing been shown to still occur, albeit through a mechanism which is different from that of
the Euclidean setting with modified Feynman rules~\cite{bahns10}. It
has also been shown that UV/IR-mixing arises in the Yang-Feldman
formalism~\cite{zahn11}.

Since the perturbative setups in the
Hamiltonian and Yang-Feldman formalisms are quite
complicated, it would be desirable to have an equivalent Euclidean
path integral formalism which simplifies the combinatorial aspects of
perturbation theory. However, the relationship between the Euclidean and
Minkowski space theories when time and space do not commute is unclear. In~\cite{bahns09} it has been shown that the Euclidean
counterparts of the $n$-point functions for the Klein-Gordon theory on
noncommutative Minkowski space in the Hamiltonian formalism are not those which follow from the
standard Euclidean framework, but appear with {on-shell} twisting
factors involving only on-shell momenta.

In this paper we will pursue the other
direction of this correspondence, starting with specific field
theories in Euclidean space. In order to find noncommutative field
theories in hyperbolic signature which are free from UV/IR-mixing, we
construct Minkowski space counterparts of the Grosse-Wulkenhaar model
and its generalizations known as the LSZ
models~\cite{lsz03,lsz04}. These models differ from the standard
noncommutative $\phi^{\star 4}$ field theory by the introduction of an
external background field into the kinetic term of the Lagrangian,
making them covariant under the duality comprising Fourier
transformation plus a rescaling of the fields~\cite{ls02}. This
duality is believed to be related to the improved asymptotic
behaviours of the propagators, which suppresses the UV/IR-mixing. This
UV/IR duality may thus be responsible for the renormalizability of
these field theories. The Euclidean Grosse-Wulkenhaar and LSZ models
are defined via path integral quantization, which leads to a violation
of unitarity for the usual noncommutative field theories in Minkowski
space. Here we will be interested in the
renormalization properties of their hyperbolic counterparts; unitarity
of these quantum field theories will be addressed elsewhere.

In Euclidean space the introduction of an external field has the useful
additional effect that the corresponding wave operators have discrete spectra
and the models can be analysed with the help of a matrix
basis for expansion of fields; the countably infinite set of
eigenfunctions are the Landau wavefunctions which diagonalize the free
parts of the action. This basis defines a mapping of the duality
covariant field theories onto matrix models, which permits a simple
and natural regularization of the field theories while maintaining duality manifestly at quantum level. In this way Grosse and Wulkenhaar were able to prove the renormalizability of their model to all orders of perturbation theory. In addition, it has been used to solve the LSZ model exactly and demonstrate the vanishing of the beta-function. 

However, in passing to hyperbolic
signature, the background field, which is a magnetic field in
the Euclidean metric, now plays the role of an electric field. 
This yields a qualitative
change due to the work
done on the particles by the field. The electric
field accelerates and splits virtual dipole pairs leading to pair
production. This is reflected in the spectra of
the wave operators, which now have a continuous part and are unbounded from
below. 

In~\cite{zahn} the perturbative
expansion in terms of modified Feynman diagrams in the continuous eigenvalue
representation has been investigated. At one-loop order unusual
divergences arise which are very likely to be
non-renormalizable. Moreover, the retarded propagator in position
space is no longer a tempered distribution in general.

In~\cite{fs08} a
different approach has been investigated, where a set of {resonance}
states has been used to expand the field theory in a discrete set
of functions. In the following we will take yet another path, which is related the resonance expansion found
in~\cite{fs08}, but which potentially avoids the associated technical
problems. We will show that the Grosse-Wulkenhaar and LSZ models allow for well-defined analytic continuations to Minkowski
space with the help of a special regularization, that we call the
``$\vt$-regularization'', which is a suitable replacement for
Feynman's $\i\epsilon$-prescription. 

This approach may also avoid the strange divergences found in~\cite{zahn}. These divergences come from squares of Dirac
delta-functions which arise from undetermined loop integrations. 
They are not ultraviolet divergences in the usual sense, as they occur
before performing loop integrals, and they show up in every
$\phi^{\star n}$-theory with $n\geq3$ for graphs with an unbroken
internal line. Using the $\vt$-regularization instead of the usual
$\i\epsilon$ regularization, one gets a
discrete spectrum instead of a continuous spectrum, leading to Kronecker delta-functions and sums
rather than Dirac delta-distributions and integrals. This procedure
renders these diagrams finite, and at
the same time keeps the model duality covariant. 

The outline of this paper is as follows. We will define duality covariant quantum field theories
on Minkowski space based on the work~\cite{fs08,fs10}, and describe some of their renormalization
properties. In order to employ an expansion of the action functionals
in terms of the resonance states found in~\cite{fs08}, we will
regularize the models such that the resonances turn into genuine
eigenfunctions of the regularized wave operators; this new matrix
basis and the corresponding matrix model representations of the LSZ
and Grosse-Wulkenhaar models on Minkowski space are described in
detail in \S\ref{Mink}--\S\ref{matrix3}. These wave operators are
related by Weyl-Wigner correspondence to the complex harmonic
oscillator Hamiltonian, which interpolates between the ordinary and
inverted harmonic oscillator Hamiltonians, and thus between the
Euclidean and Minkowski space theories; this unifies both theories
into a one-parameter family of duality covariant noncommutative quantum field
theories. The Feynman graphs are analytic continuations of the
Euclidean diagrams. We show that this regularized matrix basis is a
bi-orthogonal system whose linear span is the space of
square-integrable functions. At the quantum level and in the limit of
vanishing electric background, this regularization turns into the
usual $\i\epsilon$-prescription. For the special case of a
Klein-Gordon theory in a constant external electric field, where the various
propagators are known, we recalculate in \S\ref{cont} the propagator using the complex
matrix basis and verify that the regularization leads to Feynman
propagators. This confirms the equivalence to the
$\i\epsilon$-prescription, and demonstrates that the
$\vt$-regularization is also connected to causality of the quantum field theory. Using the
$\vt$-regularization, we show in \S\ref{quantum1} that a cutoff can be
introduced that renders the duality covariant field theories finite at
every order of perturbation theory and at the same time imposes duality
covariance manifestly. In \S\ref{ren} we derive the propagators for
the regularized models which include the Euclidean space propagators
and the Minkowski space causal propagators as special cases; away from
the hyperbolic point our propagators have a good decay behaviour in
all directions and are singular at coincident positions. The
$\vt$-regularization turns out to improve their asymptotic behaviour
and may thus be crucial for the renormalization programme. However,
due to the oscillatory behaviours of the occuring integrands in
Minkowski space, the corresponding asymptotics are much more
difficult to derive than in the Euclidean case. For a special case of
the LSZ model we find that the exponential decay in the short
Euclidean space variables ceases if one goes over to Minkowski space,
but persists in a neighbourhood of the hyperbolic point in the
one-parameter family of field theories. The regularization thus gives
a means to control the decay behaviour of the propagators. The
applicability of the matrix basis in this context, however, is still
an open question; the detailed analysis of the removal of the matrix
regularization will not be addressed in this paper. As we discuss in
the following, the Minkowski limit of our one-parameter family is very
singular; see~\cite{thesis} for a detailed analysis of some of the
uncontrollable divergences which arise. In the following we will
simply regard the $\vt$-regularized field theories as the appropriate
well-defined analytic continuations of the Euclidean space models.

The derivations of propagators with the help of the matrix basis may
be compared to calculations using other methods, such as Schwinger's
proper time formalism~\cite{Schw51}, the ``sum over
solutions method''~\cite{fgs91}, or the eigenvalue method using the
continuous eigenbasis~\cite{ritus78}. Compared to the latter technique
the matrix basis involves only polynomials and sums instead of
complicated integral expressions, and thus brings along a huge
simplification. In \S\ref{regprop} the causal propagator for a massive
complex scalar field in four dimensions in the background of a
constant electric field in the $\vt$-regularization is computed. As a further demonstration of how the matrix basis can be
applied, we calculate the one-loop effective action of the Klein-Gordon theory in a
background electric field (see Appendix~\ref{effaction}). Finally, in
Lemma~\ref{massshelllem} we demonstrate that the $\vt$-regularization
can likewise be used regulate the standard mass-shell singularities in the
Feynman propagator for the free Klein-Gordon theory.
We propose that going beyond the case
of a constant background field might be possible using our alternative
regularization and the matrix basis, by perturbing varying field
configurations around a uniform background. This might help in probing
quantum electrodynamics in the non-perturbative regime (see
e.g.~\cite{ringwald01,heinzl08,dunne09,ilderton10}). We conclude that
the matrix basis may serve as a powerful computational tool in
simplifying some otherwise cumbersome calculations. 

\section{Covariant Relativistic Noncommutative Field Theory}\label{Mink}

In this section we will introduce the duality covariant models in Minkowski space. We show that it is possible to construct a well-defined matrix model representation of the corresponding quantum field theories through a suitable regularization, that we call \emph{$\vt$-regularization}, which is an alternative to the usual $\i\epsilon$-prescription. For this, we will use the Weyl-Wigner transformation to map the eigenvalue problem of the $\vt$-regularized wave operators to that of the \emph{complex} harmonic oscillator. 

\subsection{Formulation of the Duality Covariant Models\label{Formulation}}

We work in $D=2n$ spacetime dimensions with metric of signature $(1,-1,\ldots,-1)$. Tensors will be labelled by Greek indices $\mu,\nu,\ldots$ ranging from $0$ to $d=D-1$. Throughout we use the Einstein summation convention. For simplicity we will denote the hyperbolic norm square of vectors $\vec a=(a^\mu)$ as
\beq
\|\vec a\|_{\rm M}^2=a_0^2-a_1^2-\dots-a_d^2=a_\mu\, a^\mu=:a_\mu^2 \ . 
\label{Mnorm}\eeq
Euclidean space dimensions are labelled by Latin indices $i,j,\ldots$ ranging from $1$ to $D$, and norm squares of vectors $\vec a=(a^i)$ with respect to the $D$-dimensional Euclidean metric are denoted 
\beq
\|\vec a\|_{\rm E}^2= a_1^2+\dots+a_D^2=a_i\, a^i=:a_i^2 \ . 
\label{Enorm}\eeq
Position vectors are denoted $\vec x=(x^\mu)$, with derivatives
$\partial_\mu:=\frac\partial{\partial x^\mu}$; in two dimensions we
often write $\vec x=(t,x)$. The dual pairing between a covariant vector $\vec x=(x^\mu)\in\R^D$ and a contravariant vector $\vec k=(k_\mu)\in(\R^D)^*$ is written $\vec k\cdot \vec x=k_\mu\, x^\mu$.

The LSZ model is a complex $\phi_D^{\star4}$-theory defined by the action
\beq
\s_{\sf LSZ}=\so+\s_{\sf{int}}
\eeq
with
\begin{eqnarray}
\so&=&\int\,\dd^{D}\vec x \ \phi^*(\vec x)\left(\sigma\, \Kmm+(1-\sigma)\, \Ktm-\mu^2\right)\phi(\vec x) \ , \label{lszfree} \\[4pt]
\s_{\sf{int}}&=&-g\, \int\,\dd^{D}\vec x\ \Big(\alpha\,(\phi^*\star_\Theta\phi\star_\Theta \phi^*\star_\Theta \phi)(\vec x)+\beta\,(\phi^*\star_\Theta \phi^*\star_\Theta \phi\star_\Theta \phi)(\vec x)\Big) \ ,\label{lszmink1}
\end{eqnarray}
where $\sigma\in[0,1]$, $\alpha,\beta\in\R_+:=[0,\infty)$, and $\mu^2,g>0$ are the mass and coupling parameters. The generalized momentum operators $\K_\mu$ and generalized dual momentum operators $\tilde\K_\mu$ are given by
\begin{equation}
\K_\mu=\i\p{\mu}-F_{\mu\nu}\, x^\nu \qquad \mbox{and}\qquad\tilde\K_\mu=\i\p{\mu}+F_{\mu\nu}\, x^\nu\ ,
\end{equation}
and they obey the commutation relations
\begin{eqnarray}
[\K_\mu,\K_\nu]=2\i F_{\mu\nu}\ ,\qquad[\tilde\K_\mu,\tilde\K_\nu]=-2\i F_{\mu\nu} \qquad \mbox{and} \qquad [\K_\mu,\tilde\K_\nu]=0 \label{com3}\ .
\end{eqnarray}
The star-product of arbitrary Schwartz functions $f(\vec{x}),g(\vec{x})\in\s(\R^D)$ is given by
\eq
(f\star_\Theta g)(\vec x):=\frac{1}{\pi^D\,|\det\Theta|}\, \int\, \dd^D\vec y\ \int\, \dd^D\vec z\ f(\vec x+\vec y)\,g(\vec x+\vec z)\,\e^{-2\i\vec y\cdot\Theta^{-1}\vec z} =(g^*\star_\Theta f^*)(\vec x) \label{moyal1} 
\eqend
with respect to a constant, real-valued, antisymmetric and non-degenerate $D\times D$ deformation matrix $\Theta$.

The coordinate system is chosen such that $\Theta$ takes the canonical skew-diagonal form 
\begin{eqnarray}
\Theta=(\Theta^{\mu\nu})=
\begin{pmatrix}
0 & \theta_0 & & & & & 0 \\
-\theta_0 &0 & & & & & \\
 & & 0 & \theta_1 & & & \\
& & -\theta_1 & 0 & & & \\ 
& & & & \ddots & & \\
& & & & & 
0 & \theta_{n-1} \\
0 & & & & & -\theta_{n-1} &0
\end{pmatrix}
\label{multitheta0}
\end{eqnarray}
with $\theta_k>0$ for $k=0,1,\dots,\frac D2-1$. The constant electromagnetic field strength tensor $F_{\mu\nu}$ is likewise given by
\begin{equation}
F=(F_{\mu\nu})=\begin{pmatrix}0 & E & & & & & 0 \\
-E & 0 & & & & & \\
 & & 0 & B_1 & & & \\
& & -B_1 & 0 & & & \\
& & & & \ddots& & \\
& & & & & 0 & B_{n-1}\\
0 & & & & & -B_{n-1} & 0
\end{pmatrix} \ ,
\end{equation}
with $E,B_k>0$ and
\eq
E\,\theta_0=B_k\,\theta_k=2\Omega
\eqend
for $k=1,\dots,\frac D2-1$ and $0<\Omega\leq1$. We will sometimes regard
$\Theta$ and $F$ as invertible linear maps $\Theta:(\R^D)^*\to \R^D$
and $F:\R^D\to(\R^D)^*$. The LSZ models with $\sigma=1$, where only
the generalized momentum operator $\K_\mu$ appears, are called
``critical'' models, while those with $\Omega=1$, where the field
theory is invariant under the UV/IR duality between position and
momentum space representations, are called ``self-dual'' models.

We will solve the eigenvalue equation for the wave operator and relate it to that of the Euclidean case. For this, we note that the $2n$-dimensional wave operators break up into $n$ blocks with
\begin{equation}
\Kmm=\sum_{k=0}^{n-1}\, (\Pmm)_k\qquad \mbox{and} \qquad\Ktm=\sum_{k=0}^{n-1}\,(\Ptm)_k\ .
\label{minkparts}\end{equation}
The operators
\begin{equation}
\begin{aligned}
(\Pmm)_k&=(\p{2k}^2+\p{2k+1}^2)+2\i B_k\,(x^{2k+1}\,\p{2k}-\,x^{2k}\,\p{2k+1})-\,B_k^2\, (x_{2k}^2+x_{2k+1}^2)\ , \\[4pt]
(\Ptm)_k&=(\p{2k}^2+\p{2k+1}^2)-2\i B_k\,(x^{2k+1}\,\p{2k}-\,x^{2k}\,\p{2k+1})-\,B_k^2\, (x_{2k}^2+x_{2k+1}^2)
\end{aligned}\label{minkpart2}
\end{equation}
for $k=1,\ldots,n-1$ act on two-dimensional Euclidean Klein-Gordon fields in a constant external magnetic background of field strengths $\pm\, 2B_k$, respectively, while the operators
\begin{equation}
\begin{aligned}
(\Pmm)_0&=-(\p{0}^2-\p{1}^2)-2\i E\, (x^1\,\p{0}+\,x^0\,\p{1})-\,E^2\, (x_0^2-x_1^2) \ , \\[4pt]
(\Ptm)_0&=-(\p{0}^2-\p{1}^2)+2\i E\,(x^1\,\p{0}+\,x^0\,\p{1})-\,E^2\,(x_0^2-x_1^2)
\end{aligned}\label{minkpart1}
\end{equation}
act on 1+1-dimensional Klein-Gordon fields in a constant {electric} background with field strengths $\pm\, 2E$, respectively. Since all component operators $(\Pmm)_k$ and $(\Ptm)_k$ for $k=0,1,\ldots,n-1$ mutually commute, the diagonalization of the full wave operators amounts to diagonalizing each of their two-dimensional blocks independently. 

As is well-known the spectra of the operators \eqref{minkpart2} are discrete with corresponding eigenfunctions the {Landau wavefunctions} $f_{mn}^{(B)}$ satisfying
\begin{eqnarray}
(\Pmm)_k\,f_{mn}^{(B_k)}(\vec
x_k)&=& -4B_k\,\big(m+\mbox{$\frac12$}\big)\, f_{mn}^{(B_k)}(\vec x_k) \
, \notag \\[4pt] (\Ptm)_k\, f_{mn}^{(B_k)}(\vec x_k)&=& -4B_k\, \big(n+\mbox{$\frac12$}\big)\, f_{mn}^{(B_k)}(\vec x_k)
\end{eqnarray}
for $m,n\in\N_0$, where we write $\vec x_k=(x^{2k},x^{2k+1})\in\R^2$ for $k=0,1,\dots,n-1$. These functions are Wigner transformations of tensor products of harmonic oscillator number basis states
\begin{eqnarray}
f_{mn}^{(B)}(x,y)=\wi{\,\kb{m}{n}\,}(x,y) 
\end{eqnarray}
of frequency $\Omega$, where, in two dimensions with deformation parameter $\Theta^{01}=\theta$, the Wigner distribution function of a compact operator $\vech \rho$ on Fock space is the Schwartz function on $\R^2$ given by~\cite{sza01}
\begin{eqnarray}
\wi{\vech\rho}(x,y)=\int\, \dd k\ \e^{\i k\,y/\theta}\, \bra{x+k/2}\,\vech\rho\,\ket{x-k/2}\ .\label{wignertrafo}
\end{eqnarray}
On the other hand, the spectra of the operators \eqref{minkpart1} are continuous, with corresponding eigenfunctions given by Wigner transformations of tensor products of parabolic cylinder functions~\cite{fs08,zahn}, denoted $\chi_{pq}$ with $p,q\in\R$, which solve the eigenvalue equations
\begin{eqnarray}
(\Pmm)_0\,\chi_{pq}(\vec x_0)=4E\,p\,\chi_{pq}(\vec x_0)\qquad
\mbox{and} \qquad(\Ptm)_0 \,\chi_{pq}(\vec x_0)=4E\,q\,\chi_{pq}(\vec x_0)\ .
\end{eqnarray}

For generic $\sigma$ the free part of the LSZ action (\ref{lszfree}) can be rewritten as
\begin{eqnarray}
\so=\int\,\dd^D\vec x\ \phi^*(\vec x)\,\left(\Kmm\big|_{F\rightarrow\tilde F}+\Omega^2\,\tilde x_\mu^2-\mu^2\right)\,\phi(\vec x)\label{lsz2}
\end{eqnarray}
with $\tilde F=(2\sigma-1)\,F=(2\sigma-1)\, (F_{\mu\nu})$ and $\tilde x_\mu=2\Theta^{-1}_{\mu\nu}\, x^\mu$. The free action thus describes a massive complex scalar field coupled to a constant electromagnetic background and in an oscillator potential proportional to $\Omega^2\,\tilde x_\mu^2$. The Grosse-Wulkenhaar model in $D=2n$ spacetime dimensions is the LSZ model for $\sigma=\frac12$ and $\alpha=\beta=\frac12$ with real scalar fields. The action is thus
\eq
\s_{\sf{GW}} =\int\,\dd^{D}\vec x\ \half\, \phi(\vec x)\, \left(-\p{\mu}^2+\Omega^2\, \tilde x_\mu^2-\mu^2\right)\,\phi(\vec x) - g\, \int\,\dd^{D}\vec x\ (\phi\star_\Theta \phi\star_\Theta \phi\star_\Theta \phi)(\vec x)\ .
\eqend
The $D$-dimensional wave operator again reduces to a sum of $n-1$ Euclidean wave operators plus a two-dimensional wave operator in Minkowski signature
\begin{equation}
\mbox{$\half$}\, (\Pmm)_0+\mbox{$\half$}\, (\Ptm)_0-\mu^2=-(\p{0}^2-\p{1}^2)- \Omega^2\,(x_0^2-x_1^2)-\mu^2
\end{equation}
with frequency $\Omega=E\,\theta_0/2$. The main difference, besides the hyperbolic signature, is an extra minus sign in front of the $\Omega$-term. The corresponding wave operator is given by the Hamiltonian of a harmonic oscillator with imaginary frequency, known as the {inverted harmonic oscillator}.

The corresponding models on Euclidean space are defined in terms of
the wave operator $\sigma\,\Kii+(1-\sigma)\,\Kti+\mu^2$, which also
split up into $n$ blocks made up of the operators
\begin{eqnarray}
\Kii=\sum_{k=1}^n\, (\Pii)_k\qquad \mbox{and} \qquad \Kti=\sum_{k=1}^n\, (\Pti)_k \ .
\label{euclparts}\end{eqnarray}
After relabelling of coordinates, one can relate
$(\Pii)_k=-(\Pmm)_{k-1}$ and $(\Pti)_k=-(\Ptm)_{k-1}$ for
$k=2,\ldots,n$, whereas $(\Pii)_1$ and $(\Pti)_1$ are of the same form
as (\ref{minkpart2}). Thus in contrast to the mixed discrete and
continuous spectrum of the hyperbolic space wave operator, the
Euclidean case deals with purely discrete spectrum. This situation is
responsible for the powerful application of the matrix model
representation of Grosse and Wulkenhaar~\cite{gw03,gw03a,gw05}.

The duality covariant field theories involve two parameters $\Theta$
and $F$. In the commutative limit $\Theta=0$, one recovers the field
theory for an interacting scalar field in a constant electromagnetic
background; in \S\ref{regprop} and Appendix~\ref{effaction} we
demonstrate how to reproduce the known standard results in the
literature using the novel regularization we propose below. In the
vanishing background limit $F=0$, we recover the usual
$\phi^4$-theories on noncommutative Minkowski space together with
their UV/IR mixing problems as discussed in \S\ref{Intro}; in
Lemma~\ref{massshelllem} we illustrate how our regularization is
applicable in this case as well. Neither of these two limits possess
duality covariance. In the self-dual limit $F=(2\Theta)^{-1}$ the
field theory is duality invariant; the matrix representation
we obtain below at the self-dual point makes no sense in the limit $F=0$.

\subsection{Spectral Decomposition and $\vt$-Regularization\label{Spectral}}

The external electromagnetic background will be treated by considering all terms quadratic in the fields as being part of the free action. Then the path integral quantization gives the usual (modified) Feynman diagrams but with the dressed propagator for the scalar field moving in this background. It is a feature of most field theories defined on hyperbolic space that there is more than one propagator, i.e. a distribution whose kernel $\Delta(\vec x,\vec y)$ solves the partial differential equation $\d_{\vec x}\Delta(\vec x,\vec y)=\delta(\vec x-\vec y)$ with $\d_{\vec x}$ the wave operator of the field theory. This is due to the occurence of zero eigenvalues of $\d_{\vec x}$, which prevents the naive inversion of the operator to give a propagator. It is therefore necessary to impose further conditions so as to make the solution of this problem unique. This may be done either by imposing boundary conditions, by postulating a spectral representation, or by extending the wave operator so as to make the solution of the partial differential equation unique. 

For the ordinary scalar field theory, the $\i\epsilon$-prescription is a method to single out a specific propagator, namely the Feynman propagator. In the commutative field theories, this prescription enhances the action by an additional term $\i\epsilon\,\int\, \phi^2$ which for $\epsilon>0$ ensures the required asymptotic damping of the integrand in the partition function at $|\phi|\rightarrow\infty$ (rather than an oscillatory behaviour), and at the same time regularizes the singularity of the free propagator and furthermore imposes causality. In this particular case it is also the infinitesimal version of the Wick rotation to Euclidean space $t\mapsto\e^{\i\epsilon}\,t$.

However, in our case the field theories defined on the two different spacetimes are not related by this \emph{ordinary} Wick rotation -- it has to be accompanied by an additional transformation $E\mapsto\pm\i B$. This is not surprising, since the model can be viewed as a field theory on a curved non-stationary spacetime, for which this is a generic feature~\cite{dewitt75}. Another characteristic of those field theories is that the multitude of different equivalent definitions of the Feynman propagator is resolved~\cite{candelas77}; we return to this point in \S\ref{causprop}. Since we are interested in an analytic continuation of the Euclidean space models in a path integral framework, it is the propagator we obtain by this transformation that we are concerned with.
The extra transformation of the magnetic field strength is also in harmony with the fact that in order to ensure the commutation relation
$[x^0,x^i]=\i\Theta^{0i}$
for both Euclidean and Minkowski space, the deformation parameter $\Theta^{0i}$ has to transform accordingly to compensate the phase coming from the Wick rotation. For the duality invariant field theories, i.e. at the self-dual point $\Omega=1$, the deformation matrix is proportional to the field strength tensor, which in turn implies a rotation of the field strength.

In \cite{fs08} it was shown that the actions of the Minkowski and Euclidean space wave operators, $(\Pmm)_0$ and $(\Pii)_1$, can be represented as a star-product with a classical Hamiltonian at the self-dual point $\Omega=1$. To compare to the Euclidean version, we have to identify $B= E$ and the ordered coordinate pairs $\vec x=(x^1,x^2)=(t,x)$ to find\footnote{These identities are taken in the multiplier algebra corresponding to the Schwartz space $\s(\R^2)$.}
\begin{equation}
(\Pii)_1\,f(\vec x)=E^2\,(x^2+ t^2)\star_{2/E} f(\vec x)\qquad \mbox{and} \qquad
(\Pmm)_0\,f(\vec x)=E^2\,(x^2- t^2)\star_{2/E} f(\vec x)
\end{equation}
and likewise
\begin{equation}
(\Pti)_1\,f(\vec x)= f(\vec x)\star_{2/E}E^2\,(x^2+ t^2)\qquad \mbox{and} \qquad
(\Ptm)_0\,f(\vec x)
= f(\vec x)\star_{2/E}E^2\,(x^2- t^2)\ ,
\end{equation}
which can be verified by explicitly writing out the individual terms
\begin{equation}
\begin{aligned}
x^2\star_\theta f(\vec x)&=\left(x^2-\i \theta\, x\,\p{t}-\mbox{$\frac{1}4$}\,\theta^2\, \p{t}^2\right)f(\vec x)\ ,\\[4pt]
t^2\star_\theta f(\vec x)&=\left(t^2+\i \theta\, t\,\p{x}-\mbox{$\frac{1}4$}\,\theta^2\, \p{x}^2\right)f(\vec x)\ .
\end{aligned}\label{txstar1}
\end{equation}
Consequently, there is a one-parameter family of operators which continuously interpolates between the Euclidean and the Minkowski space wave operators. They are denoted by $\P^2(\vt)$ and $\Pt(\vt)$, with $\vt\in[-\frac\pi2,\frac\pi2]$, and are defined by
\begin{eqnarray}
\begin{aligned}
\P^2(\vt)&=\e^{\i\vt}\,\Big(\cos(\vt)\,(\Pii)_1-\i\sin(\vt)\,(\Pmm)_0\Big) \ , \\[4pt]
\Pt(\vt)&=\e^{\i\vt}\,\left(\cos(\vt)\,(\Pti)_1-\i\sin(\vt)\,(\Ptm)_0\right)\ .
\end{aligned}\label{ptheta3}
\end{eqnarray}
Using (\ref{txstar1}) one easily checks
\begin{equation}
\P^2(\vt)\,f(\vec x)= H(\vt)\star_{2/E}f(\vec x)\qquad \mbox{and} \qquad
\Pt(\vt)\,f(\vec x)= f(\vec x)\star_{2/E}H(\vt)\ ,
\label{genp1}
\end{equation}
where
\begin{eqnarray}
H(\vt):=E^2\,\big(x^2+\e^{2\i\vt}\,t^2\big)\ .
\end{eqnarray}

The wave operators (\ref{ptheta3}) relate both signatures, with $\vt=0$ corresponding to Euclidean signature and $\vt=\pm\,\frac\pi2$ to hyperbolic signature. 
In the limit $E\rightarrow0$ one easily verifies that this regularization reduces to the $\i\epsilon$-prescription for the usual Klein-Gordon operator. Hence it can be regarded as a generalization of the $\i\epsilon$-prescription to the case with an external electromagnetic field. To distinguish both schemes we will call this alternative prescription the \emph{$\vt$-regularization}.

Using the Weyl-Wigner correspondence, the eigenvalue equations of our original operators can be represented on the space of Weyl symbols by
\begin{equation}
\begin{aligned}
\P^2(\vt)\,f_{mn}^{(E_\vt)}(\vec x)&=\wi{\vech H(\vt)\,\vech f_{mn}^{(E_\vt)}}(\vec x)=\lambda_{m}^{(E_\vt)}\,f_{mn}^{(E_\vt)}(\vec x) \ , \\[4pt]
\Pt(\vt)\,f_{mn}^{(E_\vt)}(\vec x)&=\wi{\vech f_{mn}^{(E_\vt)}\,\vech H(\vt)}(\vec x)=\lambda_{n}^{(E_\vt)}\, f_{mn}^{(E_\vt)}(\vec x) \ ,
\end{aligned}
\end{equation}
with $\vech f_{mn}^{(E_\vt)}=\wee{f_{mn}^{(E_\vt)}}$ and the Weyl symbol
\begin{eqnarray}
\vech H(\vt)=\half\,\Big(\wee{\sqrt{2}\,E\,x}^2+\e^{2\i\vt}\, \wee{\sqrt{2}\,E\,t}^2\Big) = \half\,\Big(\vech p^2+\e^{2\i\vt}\, \vech q^2\Big)\ .
\label{Hamho}\end{eqnarray}
The eigenvalues will turn out to depend on $E$ and $\vt$ only through the combination 
\begin{eqnarray}
E_\vt:=E\, \e^{\i\vt} \ ,
\label{Evartheta}\end{eqnarray}
which explains the notation. The Hermitian symbols $\we{\sqrt{2}\,E\,x}=\vech p$ and $\we{\sqrt{2}\,E\,t}=\vech q$ obey the commutation relation of the Heisenberg algebra
\begin{eqnarray}
[\vech q,\vech p]=2E^2\,\we{t\star_{2/E}x - x\star_{2/E}t} =4\i E\ ,
\label{Hamhocommrel}\end{eqnarray}
where we used the fundamental property
\beq
\wii{\vech f}\star_{2/E}\wii{\vech g}=\wii{\vech f\, \vech g}
\eeq
of the Weyl-Wigner correspondence.

The operators $\vech H(\vt)$ for $\vt\in(-\frac\pi2,\frac\pi2)$ are
known as \emph{complex harmonic oscillator Hamiltonians}. When defined on $\s(\R)$ they have a \emph{discrete} spectrum given by
\begin{eqnarray}
\sigma\big(\vech H(\vt)\big)=\big\{\lambda^{(E_\vt)}_m=4E_\vt\,\big(m+\mbox{$\frac12$} \big)\ \big|\ m\in\N_0 \big\}\ .
\label{eigenvalues1}\end{eqnarray}
The spectrum of $\h(\vt)$ and its eigenoperators $\vech f_{mn}^{(E_\vt)}$ will be investigated in \S\ref{matrix3}.
The simultaneous eigenfunctions of $\P{}^2(\vt)$ and $\Pt(\vt)$ are given by the Wigner transformation
\begin{eqnarray}
f_{mn}^{(E_\vt)}(\vec x)=\wii{\,\vech f_{mn}^{(E_\vt)}\,}(\vec x)\ .\label{fmn0}
\end{eqnarray}
These functions are calculated explicitly in Appendix~\ref{matrixfunctions}. They have an exponential decay for $x,t\rightarrow\infty$ and are Schwartz functions. This is in contrast to the functions one obtains in the limit $\vt\rightarrow\pm\, \frac\pi2$, which are tempered distributions and have been found in \cite{fs08}. As shown in~\cite[App.~D]{thesis}, these functions span a dense subspace of $L^2(\R^2)$, thus every square-integrable function on $\R^2$ can be expanded pointwise into functions lying in the span. In addition, they fulfill the important projector property
\begin{eqnarray}
f_{mn}^{(E_\vt)}(\vec x)\star_{2/E}f_{kl}^{(E_\vt)}(\vec x)=\sqrt{\frac E{4\pi}}\ \delta_{nk}\ f_{ml}^{(E_\vt)}(\vec x)\ . \label{projprop1}
\end{eqnarray}

The generalization of the $\vt$-regularized wave operators to $D=2n$ dimensions are given by
\begin{equation}
\K^2(\vt)=\e^{\i\vt}\, \big(\cos(\vt)\, \K_i^2-\i\sin(\vt)\, \K_\mu^2\big) \qquad \mbox{and} \qquad
\Kt(\vt)=\e^{\i\vt}\, \big(\cos(\vt)\,\tilde\K{}_i^2-\i\sin(\vt)\, \tilde\K{}_\mu^2\big)\ ,
\label{ptheta1}
\end{equation}
which again split up into two-dimensional wave operators defined by \eqref{minkparts}--\eqref{minkpart1} and \eqref{euclparts}. In $D=2n$ dimensions the components $(\Pii)_k$ and $(\Pmm)_{k-1}$, and likewise $(\Pti)_{k}$ and $(\Ptm)_{k-1}$, differ only by a sign for $k=2,\ldots,n$ up to a relabelling of the coordinates. We thus have
\begin{eqnarray}
\K^2(\vt)=\P{}^2(\vt)+\e^{2\i\vt}\, \sum_{k=2}^{n}\,(\Pii)_k\qquad \mbox{and} \qquad 
\Kt(\vt)=\Pt(\vt) +\e^{2\i\vt}\, \sum_{k=2}^{n}\, (\Pti)_k
\label{ptheta2}
\end{eqnarray}
according to \eqref{ptheta3}. The eigenfunctions of the operators $(\Pii)_k$ and $(\Pti)_k$ are just Landau wavefunctions. What remains is to find the eigenfunctions of the remaining parts of the wave operators. Since all of these two-dimensional differential operators commute, the total eigenfunctions will be a product of the individual two-dimensional eigenfunctions.

\subsection{Perturbative Quantum Field Theory\label{Perturbative}}

Without loss of generality we will choose in the following always $\vt>0$ and define $\vt=\frac\pi2-\kappa>0$ for small $\kappa>0$. Denoting
\begin{eqnarray}
\big(\Kmm-\mu^2\big)_{\kappa}:=\e^{\i\kappa}\,\K^2\big(\mbox{$\frac\pi2$}-\kappa\big)-\e^{-\i\kappa}\, \mu^2\qquad \mbox{and} \qquad
\big(\Ktm-\mu^2\big)_{\kappa}:=\e^{\i\kappa}\,\tilde\K{}^2\big(\mbox{$\frac\pi2$}-\kappa\big)-\e^{-\i\kappa}\, \mu^2 \nonumber\\
\end{eqnarray}
the regularized LSZ model is defined by the classical action
\begin{equation}
\begin{aligned}
\s_{\sf LSZ}^{(\kappa)}&=\int\, \dd^D\vec x\ \phi^*(\vec x)\, \left(\sigma\, (\Kmm-\mu^2)_\kappa+(1-\sigma)\, (\Ktm-\mu^2)_\kappa\right)\, \phi(\vec x)\\
&\quad - g\, \Big(\alpha\, \int\,\dd^D\vec x\ (\phi^*\star_\Theta\phi\star_\Theta \phi^*\star_\Theta\phi)(\vec x)+\beta\, \int\,\dd^D\vec x\ (\phi^*\star_\Theta \phi^*\star_\Theta \phi\star_\Theta \phi)(\vec x)\Big) \ ,
\end{aligned}\label{lszreg}
\end{equation}
and the regularized Grosse-Wulkenhaar model by
\begin{eqnarray}
\begin{aligned}
\s_{\sf GW}^{(\kappa)}&=\int\,\dd^D\vec x\ \half \, \phi(\vec x)\, \Big(\,\half \, (\Kmm-\mu^2)_\kappa+\half \, (\Ktm-\mu^2)_\kappa\,\Big)\, \phi(\vec x) \\ &\quad -g\, \int\,\dd^D\vec x\ \left(\phi\star_\Theta\phi\star_\Theta \phi\star_\Theta \phi\right)(\vec x)\ .
\end{aligned}
\label{gwreg}\end{eqnarray}

The Minkowski space duality covariant noncommutative quantum field theory of the regularized LSZ model is defined by the partition function, which is the generating functional obtained by adding external sources $J(\vec x)$ and $J^*(\vec x)$ to the action (\ref{lszreg}) with
\begin{eqnarray}
Z[J,J^*]=\lim_{\kappa\to0^+}\, \mathcal N\,\int\, \mathcal D\phi\ \mathcal D\phi^*\ \exp\Big(\i \s^{(\kappa)}_{\sf LSZ}+\int\, \dd^D\vec x\ J^*(\vec x)\,\phi(\vec x)+\int\, \dd^D\vec x\ \phi^*(\vec x)\,J(\vec x)\Big)
\end{eqnarray}
and analogously for the real Grosse-Wulkenhaar model, where $\mathcal N$ is a normalization constant. The precise definition of the path integral measure is not required to determine $Z[J,J^*]$ perturbatively, since only the vanishing of the integrand for $|\phi|\rightarrow\infty$ is needed to find a functional differential equation for the partition function via formal integration by parts in field space; this is ensured by the $\vartheta$-regularization. The ``free'' partition function $Z_0[J,J^*]:=Z[J,J^*]\big|_{g=0}$ is then the solution of
\begin{equation}
\begin{aligned}
\lim_{\kappa\to0^+}\, \big(\sigma\, (\Kmm-\mu^2)_\kappa+(1-\sigma)\, (\Ktm-\mu^2)_\kappa\big) \frac{\delta Z_0[J,J^*]}{\delta J^*(\vec x)}&=\i J(\vec x)\, Z_0[J,J^*]\ , \\[4pt]
\lim_{\kappa\to0^+}\, \big(\sigma\, (\Kmm-\mu^2)_\kappa+(1-\sigma)\, (\Ktm-\mu^2)_\kappa\big) \frac{\delta Z_0[J,J^*]}{\delta J(\vec x)}&=\i J^*(\vec x)\, Z_0[J,J^*]
\end{aligned}
\end{equation}
given by
\begin{eqnarray}
Z_0[J,J^*]=\lim_{\kappa\rightarrow0^+}\, \exp\Big(\i \int\,\dd^D\vec x\ \int\,\dd^D\vec y\ J^*(\vec x)\,\Delta^{(\kappa,\sigma)}(\vec x,\vec y)\,J(\vec y)\Big) \ ,
\label{z0real}
\end{eqnarray}
with $\Delta^{(\kappa,\sigma)}(\vec x,\vec y)$ the regularized dressed propagator defined through the equation
\begin{equation}
\big(\sigma\, (\Kmm-\mu^2)_\kappa+(1-\sigma)\, (\Ktm-\mu^2)_\kappa\big)\Delta^{(\kappa,\sigma)}(\vec x,\vec y)=\delta(\vec x-\vec y) \ .
\label{dresseddef}\end{equation}
An explicit expression for $\Delta^{(\kappa,\sigma)}(\vec x,\vec y)$ will be derived in \S\ref{proppos}. The full interacting quantum field theory is given by the partition function 
\begin{eqnarray}
Z[J,J^*]=\lim_{\kappa\rightarrow0^+}\, \mathcal N\, \exp\Big[\i\s_{\sf int}\Big(\frac{\delta}{\delta J^*},\frac{\delta}{\delta J}\Big)\Big]\exp\Big(\i\int\,\dd^D\vec x\ \int\,\dd^D\vec y\ J^*(\vec x)\,\Delta^{(\kappa,\sigma)}(\vec x,\vec y)\,J(\vec y)\Big) \label{cgenfunctional1}
\end{eqnarray}
leading to a perturbative expansion in Feynman diagrams corresponding to the interaction part $\s_{\sf int}[\phi,\phi^*]$ of the action (\ref{lszmink1}) and the dressed propagator $\Delta^{(\kappa,\sigma)}(\vec x,\vec y)$. The corresponding Green's functions contain products of distributions and have to be regularized; this is described in \S\ref{quantum1}. For real scalar fields we get
\begin{eqnarray}
Z[J]=\lim_{\kappa\rightarrow0^+}\, \mathcal{N}\exp\Big[\i\s_{\sf int}\Big(\frac{\delta}{\delta J}\Big)\Big]\exp\Big(\, \frac{\i}2\, \int\,\dd^D\vec x \ \int\, \dd^D\vec y\ J(\vec x)\,\Delta^{(\kappa)}(\vec x,\vec y)\,J(\vec y)\Big)\ , \label{rgenfunctional1}
\end{eqnarray}
with regularized dressed propagator $\Delta^{(\kappa)}(\vec x,\vec y)$ given by (\ref{dresseddef}) for $\sigma=\frac12$.

In the following we will construct dynamical matrix models
representing the regularized duality covariant quantum field
theories. In fact, our construction has killed two birds with one
stone. First of all, we have regularized the wave operator such that
no zero eigenvalues occur, so we can invert it to get a unique
propagator. On the other hand, we have also found a discrete spectrum
for the regulated wave operator which, together with the projector
relation \eqref{projprop1}, is needed to define a proper matrix model
formulation of the quantum field theory. This is in marked contrast to
the usual $\i\epsilon$-prescription that gives a regulated wave
operator $\d^{(\epsilon)}_{\vec x}=\sigma\, \Kmm+(1-\sigma)\,
\Ktm-\mu^2+\i\epsilon$, which simply amounts to adding the constant
$\i\epsilon$ to the continuous spectrum of the electric part of the wave operators, but otherwise leaves its continuous character unaltered. A perturbative quantum field theory amenable for the continuous basis approach with functions $\chi_{pq}$ is analysed in this way in~\cite{zahn}.

In the following we shall address the following questions:
\begin{itemize}
\item What is the interpretation of the sign of $\vt$?
\item Which propagator do we obtain in the limit $\vt\rightarrow\pm\,\frac\pi2$?
\item Is it possible to prove duality covariance for $\vt\neq\pm\,\frac\pi2$ at the quantum level?
\item Are the Feynman diagrams finite in the limit $\vt\rightarrow\pm\,\frac\pi2$?
\end{itemize}
In \S\ref{cont} we shall argue that, like the $\i\epsilon$ prescription at $E=0$, the $\vt$-regularization is related to causality, and flipping the sign of the regulator corresponds to interchanging the Feynman (causal) propagator with the Dyson (anti-causal) propagator, i.e. interchanging the particle and anti-particle descriptions in a background electric field. The proof of duality covariance at the classical level follows easily from the Euclidean and Minkowski space proofs given in \cite{ls02} and \cite{fs08}, respectively. The spacetime metric plays no role in the proof, which relies solely on Fourier expansion techniques and Gaussian integrations; in \S\ref{quantum1} we will calculate the partition function of duality covariant noncommutative quantum field theories, and hand in the proof of the duality invariance at quantum level.

\section{Dynamical Matrix Models}\label{matrix3}

In this section we will work out the matrix model representations of the perturbative quantum field theories defined above. In \S\ref{Mink} we used the Weyl-Wigner transformation to map the eigenvalue problem for the $\vt$-regularized wave operators to that of the {complex} harmonic oscillator. Below we investigate its spectrum and eigenfunctions, and construct the appropriate generalizations of the Landau wavefunctions. Using their Fock space representation, we will finally arrive at the matrix model representation for the two-dimensional classical models and their corresponding quantum field theories. The generalization to higher dimensions is also presented.

\subsection{Complex Harmonic Oscillator Wavefunctions}\label{genoscbasis0}

We will begin by investigating the spectrum of the complex harmonic oscillator Hamiltonian $\h(\vt)$ defined in (\ref{Hamho}), with commutation relation (\ref{Hamhocommrel}) and positive real frequency $E\in\R_+$, which turns out to have a discrete spectrum (\ref{eigenvalues1}) resembling the usual harmonic oscillator spectrum rotated into the complex plane by a phase factor $\e^{\i\vt}$. Since $\vech q=\we{\sqrt{2}\, E\, t}$, it is natural to work in the representation defined by the eigenbasis of $\vech q$ such that 
\begin{eqnarray}
\bra{q'\,}\, \vech q\, \ket{q}=\sqrt{2}\, E\, q\,\bk{q'\,}{q} \qquad \mbox{and} \qquad \bra{q'\,}\, \vech p\, \ket q=-\i \sqrt{8}\, \partial_q\,\bk{q'\,}{q}\label{rep0}
\end{eqnarray}
and thus
\begin{eqnarray}
\bra{q'\,}\,\h(\vt)\, \ket{q}=\left(-4\p{q}^2+E_\vt^2\, q^2\right)\bk{q'\,}{q}
\end{eqnarray}
with the condensed notation (\ref{Evartheta}). Firstly, note that the eigenvalue differential equation
\begin{eqnarray}
\left(-4\p{q}^2+E_\vt^2\, q^2\right)f^{(E_\vt)}_m(q)=4E_\vt\, \left(m+\mbox{$\half$}\right)\, f^{(E_\vt)}_m(q)
\end{eqnarray}
is fulfilled for complex values $E_\vt$ if $f^{(E_\vt)}_m(q)$ represent the usual Hermite oscillator wavefunctions $f_m^{(E)}(q)$ with the complex frequency $E_\vt$ substituted for $E$,
\begin{eqnarray}
f_m^{(E_\vt)}(q)=\Big(\,\frac{\sqrt{E_\vt}}{2^m\, m!\, \sqrt{2\pi}}\,\Big)^{1/2}\, \e^{-E_\vt\, q^2/4}\, H_m\big(\sqrt{E_\vt/2}\, q\big) \ , \label{genosc1}
\end{eqnarray}
where $H_m(z)=(-1)^m\,\e^{z^2}\,\partial_z^m\e^{-z^2}$ are the Hermite polynomials.
These functions will be called {complex harmonic oscillator wavefunctions}, as a generalization of the harmonic oscillator wavefunctions to complex frequencies $E_\vt$. They possess an exponential decay due to the Gaussian factor, and are thus Schwartz functions on $\R$ for $|\vt|<\frac\pi2$. We expect that by continuity, for $|\vt|$ small enough the eigenvalues of the complex harmonic oscillator Hamiltonian are given by the set (\ref{eigenvalues1}); the values \eqref{eigenvalues1} are indeed the eigenvalues of $\h(\vt)$ for $|\vt|<\frac\pi2$~\cite{davies98}. 

The complex harmonic oscillator wavefunctions \eqref{genosc1} are not orthogonal, and thus do not serve as a usual Hilbert space basis for $\s(\R)$. But together with their complex conjugated functions and for $\Re(E_\vt)>0$, they constitute a bi-orthogonal system with respect to the $L^2$-inner product $\bk{-}{-}$. This means that the two sets of functions $\big(f_m^{(E_\vt)}\big)_{m\in\N_0}$ and $\big(f_m^{(E_{-\vt})} \big)_{m\in\N_0}$ with nonzero $E_\vt$ and $\Re(E_\vt)>0$ fulfill
\begin{eqnarray}
\big\langle f_n^{(E_{-\vt})}\big| f_m^{(E_\vt)}\big\rangle = \int\, \dd q\ f_n^{(E_\vt)}(q)\,f_m^{(E_\vt)}(q)=\delta_{nm}\ ,
\end{eqnarray}
which follows immediately from the orthogonality of the Hermite functions on $\R$ by a deformation of the integration contour to a straight line from $-\infty\, \e^{\i\vt}$ to $+\infty\, \e^{\i\vt}$. This rotation is possible due to the Gaussian factor $\e^{-E_\vt\,q^2/2}$ in the integrand, ensuring an exponential decay for $\Re(E_\vt)>0$. In addition their linear span is dense in $L^2(\R)$, which means that every square-integrable function on $\R$ can be approximated pointwise by a linear combination of these functions; the proof can be found in~\cite[App.~D]{thesis}. But the series which occur are not convergent in the $L^2$-norm and thus do not build a Riesz basis~\cite{davies98,davies04}.

To ensure the applicability of this basis to arbitrary quantum field
theories, however, one has also to be able to deal with scalar
products and (tempered) distributions. The problem of uniform
convergence for $|\vt|\leq \frac\pi2$ might be circumvented by considering
a smaller space than the Schwartz space, like the space of smooth
functions with compact support, or by considering the Sturm-Liouville
problem for the complex harmonic oscillator Hamiltonian on a finite
interval $[-L,L]$ in $\R$; the expansion on $\s(\R)$ might then be
defined in some limiting procedure. The applicability of the Gel'fand-Shilov space of
type $\s_{\alpha}^{\alpha}(\R)\subset\s(\R)$ with $\alpha=\frac12$ as
an appropriate dense subspace of
fields on $\R$ is discussed in~\cite[App.~C]{thesis}, see also~\cite{Sol07a,Sol07b}; but the question of which precise spaces of functions this complex oscillator basis is applicable in a distributional sense is still an open problem.

Although it would be desirable to have a general rule which tells us
for which functions the matrix basis is applicable, for a given field
theory it suffices to derive the asymptotics of the matrix space
representation of the corresponding propagator in order to ensure the
convergence of the sums in Feynman diagrams; this is investigated in
\S\ref{ren}, but at present it is an open issue. In the following we
will use the matrix basis to derive the propagators of the various
field theories and find that they coincide with the position space
propagators in all cases for which results are already known in the
literature. In Appendix~\ref{effaction} the one-loop effective action
of the Klein-Gordon theory in a constant background electric field is
calculated with the help of the matrix basis, and shown to also
coincide with the known results. By picking up the regularization
scheme imposed on the position space propagator in the Euclidean case,
which effectively cuts off the matrix element summations at some
finite rank $N$, the occuring Feynman diagrams of the
$\vt$-regularized field theories are well-defined and duality
covariant. Whether or not new divergences arise in the limit
$N\to\infty$ remains to be investigated.

\subsection{Complex Landau Wavefunctions}\label{genlandau}

We will now construct the complex Landau wavefunctions $f_{mn}^{(E_\vt)}$, defined by \eqref{fmn0}, through Wigner distribution of the tensor product of two complex oscillator wavefunctions $f_m^{(E_\vt)}$. We will also derive a ``ladder operator'' type construction, which allows us to obtain the matrix model representation of the duality covariant field theories. For the moment we set $\theta=2/E$.

We will first relate the ordinary and complex harmonic oscillator wavefunctions using {complex scaling} methods. Introducing the Hermitian scaling operator
\begin{eqnarray}
\v (\vartheta)=\exp\Big(-\frac{\vartheta}{2E}\,\big(\vech p\, \vech q+\vech q\, \vech p\big)\Big)
\end{eqnarray}
we see that 
\begin{equation}
\v(\vt)\,\vech q\,\v(\vt)^{-1}=\e^{\i\vt/2}\, \vech q\qquad \mbox{and} \qquad
\v(\vt)\,\vech p\,\v(\vt)^{-1}=\e^{-\i\vt/2}\, \vech p\ . \label{pqscaled1}
\end{equation}
The complex harmonic oscillator Hamiltonian (\ref{Hamho}) is thus related to the ordinary oscillator Hamiltonian by
\begin{eqnarray}
\h(\vartheta)=\e^{\i\vartheta}\,\v(\vartheta)\,\mbox{$\frac12$}\,\big(\vech p^2+\vech q^2\big)\,\v(\vartheta)^{-1} \ ,
\end{eqnarray}
while the complex eigenfunctions can now easily be obtained from the orthonormal oscillator number basis states $\ket{m}$, $m\in\N_0$, where $\bk{m}{n}=\delta_{mn}$ and $\bk{q}{m}=f_m^{(E)}(q)$ are the ordinary harmonic oscillator wavefunctions
\begin{eqnarray}
f_m^{(E)}(q)=\Big(\,\frac{\sqrt{E}}{2^m\, m!\, \sqrt{2\pi}}\,\Big)^{1/2}\, \e^{-E\,q^2/4}\, H_m\big(\sqrt{E/2}\, q\big)\ .\label{hosci1}
\end{eqnarray}
By noting that
\begin{eqnarray}
\h(\vartheta)\,\v(\vartheta)\ket{m}=\e^{\i\vartheta}\, \v(\vartheta)\,\h(0)\,\ket{m}=\e^{\i\vartheta}\, 4E\,\big(m+\mbox{$\frac12$}\big)\,\v(\vartheta)\,\ket{m} \ , \label{eigeneq1}
\end{eqnarray}
the corresponding eigenfunctions are related to the oscillator wavefunctions by
\begin{eqnarray}
f_m^{(E_\vt)}(q)=\bk{q}{f_m^{(E_\vt)}}:=\bra{q}\,\v(\vartheta)\,\ket{m}=\e^{\i\vartheta/4}\, f_m^{(E)}\big(\e^{\i\vartheta/2}\,q\big)\ .
\end{eqnarray}

Clearly the left/right eigenoperators of $\h(\vt)$ are tensor products of the form
\begin{eqnarray}
\vech f_{mn}^{(E_\vt)}=\sqrt{\frac{E}{4\pi}}\,\v(\vartheta)\kb{m}{n}\v(-\vartheta)\ ,
\end{eqnarray}
and the complex Landau wavefunctions are thus given by
\begin{eqnarray}
f_{mn}^{(E_\vt)}(\vec x) =\sqrt{\frac{E}{4\pi}}\,\wi{\v(\vt)\kb{m}{n}\v(-\vt)}(\vec x) \ .
\end{eqnarray}
The normalization has been chosen such that
\begin{eqnarray}
\int\,\dd^2\vec x\ f_{mn}^{(E_\vt)}(\vec x)&=& \sqrt{\frac{E}{4\pi}}\, \int \, \dd t\ \int\,\dd x \ \int\, \dd k\ \e^{\i E\, k\, x/2}\, \langle t+k/2|\, \v(\vt)\, \kb{m}{n}\, \v(-\vt)\, |t-k/2\rangle \nonumber\\[4pt] &=&\sqrt{\frac{E}{4\pi}}\, \frac{4\pi}E\, \big\langle f_n^{(E_{-\vt})}\big| f_m^{(E_\vt)}\big\rangle \ = \ 
\sqrt{\frac{4\pi}E}\ \delta_{mn}\ ,
\label{normcond}\end{eqnarray}
where we have used the explicit representation for the Wigner transformation \eqref{wignertrafo}. From \eqref{wignertrafo} we also see that complex conjugation yields
\begin{eqnarray}
f_{mn}^{(E_\vt)}(\vec x)^*=\sqrt{\frac{E}{4\pi}}\, \int\, \dd k\ \e^{-\i E\, k\, x/2}\, \bra{t+k/2}\, \v(-\vt)\, \kb{n}{m}\, \v(\vt)\, \ket{t-k/2}=f_{nm}^{(E_{-\vt})}(\vec x)
\label{fmnconj}\end{eqnarray}
and the {projector property} takes the form
\begin{eqnarray}
\big(f_{mn}^{(E_\vt)}\star_{2/E} f_{kl}^{(E_\vt)}\big)(\vec x)= \frac{E}{4\pi}\, \wi{\v(\vt)\ket{m} \, \bk{n}{k} \, \bra{l}\v(-\vt)}(\vec x)=\sqrt{\frac{E}{4\pi}}\,\delta_{nk}\, f_{ml}^{(E_\vt)}(\vec x)\ .\label{genprojector}
\end{eqnarray}
Together with the normalization condition this implies the bi-orthogonality of the system of complex Landau wavefunctions with respect to the $L^2$-inner product
\begin{eqnarray}
\bk{f_{mn}^{(E_\vt)}}{f_{kl}^{(E_{-\vt})}}&=&\int\, \dd^2\vec x\ f_{nm}^{(E_{-\vt})}(\vec x)\,f_{kl}^{(E_{-\vt})}(\vec x)\notag\\[4pt]
&=&\int\,\dd^2\vec x\ \big(f_{nm}^{(E_{-\vt})}\star_{2/E} f_{kl}^{(E_{-\vt})}\big)(\vec x)\notag\\[4pt]
&=&\sqrt{\frac{E}{4\pi}}\, \int\, \dd^2\vec x\ \delta_{mk}\, f_{nl}^{(E_{-\vt})}(\vec x)\ = \ \delta_{mk}\, \delta_{nl}\ .
\end{eqnarray}

The explicit expressions for the matrix basis functions are given by
\begin{proposition}\label{genlandautheo}
The complex Landau wavefunctions $f_{mn}^{(E_\vt)}(\vec x)$ for $m,n\in\N_0$ are given by
\begin{eqnarray}
f_{mn}^{(E_\vt)}(t,x)&=&(-1)^{\min(m,n)}\, \sqrt{\frac{E}{\pi}}\, \sqrt{\frac{\min(m!,n!)}{\max(m!,n!)}}\, E_\vt^{|m-n|/2}\notag\\
&&\times\, \e^{-E_\vt\, x_+^{(\vt)}\, x_-^{(\vt)}/2}\,\big(x^{(\vt)}_{-{\rm sgn}(m-n)}\big)^{|m-n|}\,L_{\min(m,n)}^{|m-n|}\big(E_\vt\,x_+^{(\vt)}\, x_-^{(\vt)}\big) \ ,
\label{genlandau0}\end{eqnarray}
where
\beq
x_\pm^{(\vt)}=t\pm\i\e^{-\i\vartheta}\, x
\label{genLCcoords}\eeq
are {complex light-cone coordinates} and 
$L_n^k(z)$ are the associated Laguerre polynomials.
\end{proposition}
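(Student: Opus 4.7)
The plan is to compute the Wigner transform
\[
f_{mn}^{(E_\vt)}(t,x)=\sqrt{\frac{E}{4\pi}}\,\wi{\v(\vt)\kb{m}{n}\v(-\vt)}(t,x)
\]
directly from the integral representation (\ref{wignertrafo}). I would first insert complete sets of position eigenstates on either side of $\kb{m}{n}$, then use the identity $\bra{q}\v(\vt)\ket{m}=f_m^{(E_\vt)}(q)$ together with Hermiticity of $\v(\vt)$ and reality of the Hermite polynomial coefficients (which forces $\overline{f_n^{(E_{-\vt})}(q)}=f_n^{(E_\vt)}(q)$ on the real axis, by analytic continuation of the Hermite function in the frequency) to reduce the kernel inside the Wigner integral to a clean product $f_m^{(E_\vt)}(t+k/2)\,f_n^{(E_\vt)}(t-k/2)$. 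Substituting the explicit Hermite--Gauss form (\ref{genosc1}) and rescaling $y=\sqrt{E_\vt/2}\,k$ factorises the Gaussian weight as $\e^{-E_\vt t^2/2}\,\e^{-y^2/4}$ and turns the oscillatory factor $\e^{\i E k x/2}$ into $\e^{\i\tilde x\, y}$ with $\tilde x:=\sqrt{E/2}\,\e^{-\i\vt/2}\,x$. A short calculation then gives $\tilde t\pm\i\tilde x=\sqrt{E_\vt/2}\,x_\pm^{(\vt)}$ and $2(\tilde t^2+\tilde x^2)=E_\vt\,x_+^{(\vt)}x_-^{(\vt)}$ with $\tilde t:=\sqrt{E_\vt/2}\,t$, identifying the complex light-cone coordinates (\ref{genLCcoords}) inside the integrand.

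The remaining integral
\[
I_{mn}:=\int\,\dd y\ \e^{-y^2/4+\i\tilde x\, y}\,H_m(\tilde t+y/2)\,H_n(\tilde t-y/2)
\]
is a classical Hermite integral. The cleanest evaluation uses the generating function $\sum_m (s^m/m!)\,H_m(z)=\e^{2sz-s^2}$: forming the double series $\sum_{m,n}(s^m r^n/m!\,n!)\,I_{mn}$, completing the square in $y$, and performing the Gaussian integral (whose convergence is guaranteed by $\Re(E_\vt)>0$ on $|\vt|<\pi/2$) produces the closed form
\[
\sum_{m,n}\,\frac{s^m r^n}{m!\,n!}\,I_{mn}=2\sqrt{\pi}\,\e^{-\tilde x^2}\,\e^{2s(\tilde t+\i\tilde x)+2r(\tilde t-\i\tilde x)-2sr}\ .
\]
Expanding the last factor $\e^{-2sr}$ as a power series and regrouping, the coefficient of $s^M r^N$ with, say, $M\geq N$ is a finite alternating sum which, by the associated Laguerre identity $L_N^{M-N}(w)=\sum_k(-1)^k\binom{M}{N-k}w^k/k!$, collapses to $(-2)^N\,(2(\tilde t\pm\i\tilde x))^{M-N}\,L_N^{M-N}(2(\tilde t^2+\tilde x^2))/M!$; the reverse regime $M<N$ follows by the symmetry $s\leftrightarrow r$ (equivalently $x_+^{(\vt)}\leftrightarrow x_-^{(\vt)}$), producing the $\min/\max$ structure and the sign $(-1)^{\min(m,n)}$ in the final answer.

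Substituting back $2(\tilde t\pm\i\tilde x)=\sqrt{2E_\vt}\,x_\pm^{(\vt)}$ and $2(\tilde t^2+\tilde x^2)=E_\vt\,x_+^{(\vt)}x_-^{(\vt)}$, and combining the overall $\sqrt{E/(4\pi)}$, the Jacobian $\sqrt{2/E_\vt}$, and the normalisation constants of (\ref{genosc1}), a routine bookkeeping calculation reproduces the prefactor $(-1)^{\min(m,n)}\sqrt{E/\pi}\,\sqrt{\min(m!,n!)/\max(m!,n!)}\,E_\vt^{|m-n|/2}$ of (\ref{genlandau0}). The main obstacle I anticipate is this bookkeeping rather than any conceptual difficulty: the branches of fractional powers of $E_\vt$ and the sign choice $x_+^{(\vt)}$ versus $x_-^{(\vt)}$ in the monomial factor depend on the Wigner-transform convention (\ref{wignertrafo}) and on the phase convention in $f_m^{(E_\vt)}$, and are pinned down unambiguously by restricting to the strip $|\vt|<\pi/2$ (where $\Re(E_\vt)>0$ legitimises the contour rotations and Gaussian integrals) and by matching the Euclidean endpoint $\vt=0$ to the standard Landau wavefunctions. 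A conceptually cleaner route would exploit precisely this analyticity: both sides of (\ref{genlandau0}) are analytic in $\vt$ on the open strip $|\vt|<\pi/2$, so it would suffice to verify the formula at $\vt=0$ (where it reduces to the known Landau wavefunction identity of Grosse--Wulkenhaar) and then extend by analytic continuation.
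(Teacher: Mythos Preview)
Your proposal is correct and follows essentially the same route as the paper's proof in Appendix~\ref{matrixfunctions}: both reduce the Wigner integral to $\int\dd k\,\e^{\i Ekx/2}f_m^{(E_\vt)}(t+k/2)f_n^{(E_\vt)}(t-k/2)$, form a double generating series in the Hermite indices, perform the resulting Gaussian integral to obtain a closed exponential in the complex light-cone coordinates, and then extract coefficients to identify the associated Laguerre polynomial. The only organisational difference is that the paper works directly with the $k$-integral and generating variables $(\xi,\eta)$ without the intermediate rescaling $y=\sqrt{E_\vt/2}\,k$, while you first pass to rescaled variables $(\tilde t,\tilde x,y)$; your alternative suggestion of proving the $\vt=0$ case and invoking analyticity on $|\vt|<\pi/2$ is a legitimate shortcut the paper does not take.
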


The proof of Proposition~\ref{genlandautheo} is found in
Appendix~\ref{matrixfunctions}. Setting $\vt=0$ this result coincides
with the well-known expression for the Landau wavefunctions in the
Euclidean case. The coordinates (\ref{genLCcoords}) continuously interpolate between the
complex coordinates $t\pm\i x$ in the Euclidean case $\vartheta=0$
and the light-cone coordinates $t\pm x$ in the hyperbolic case
$\vartheta=\pm\, \frac\pi2$. Since
\begin{eqnarray}
E_\vt \,x_+^{(\vt)}\,x_-^{(\vt)}=E\,\big(\e^{\i\vt}\, t^2+\e^{-\i\vt}\, x^2\big) = E\, \big(\cos(\vt)\,(t^2+x^2)+\i\sin(\vt)\, (t^2-x^2)\big) \ ,
\end{eqnarray}
we see that similarly to $f_{m}^{(E_\vt)}$ these functions are Schwartz functions only for $|\vt|<\frac\pi2$; in particular they belong to the Gel'fand-Shilov spaces $\s_\alpha^\alpha(\R^2)$ for all $\alpha\geq\frac12$. At $\vt=\pm\,\frac\pi2$ they have a polynomial increase and are thus tempered distributions.

The Fock space representation of the harmonic oscillator wavefunctions has a counterpart in the complex scaled version, which will prove very useful in the explicit determination of the dynamical matrix models. For this, we note that
\begin{eqnarray}
\kb{f_n^{(E_\vt)}}{f_m^{(E_{-\vt})}}&=&\v(\vt)\,\frac{(\vech a^\dagger)^m}{\sqrt{m!}}\, \kb{0}{0}\, \frac{(\vech a)^n}{\sqrt{n!}}\, \v(\vt)^{-1}\notag\\[4pt]&=&\frac{1}{\sqrt{m!\, n!}}\, \big(\v(\vt)\, \vech a^\dagger\, \v(\vt)^{-1}\big)^m\, \kb{f_0^{(E_\vt)}}{f_0^{(E_{-\vt})}}\, \big(\v(\vt)\, \vech a\, \v(\vt)^{-1}\big)^n
\end{eqnarray}
with $\vech a=\frac1{\sqrt{8E}}\, (\vech q+\i\vech p)$. We can use the relations \eqref{pqscaled1} to get
\begin{eqnarray}
\v(\vt)\,\vech a^\dagger\,\v(\vt)^{-1}&=&\frac1{\sqrt{8E}}\, \big(\e^{\i\vt/2}\, \vech q-\i\e^{-\i\vt/2}\, \vech p\big) \ , \nonumber\\[4pt]
\v(\vt)\,\vech a\,\v(\vt)^{-1}&=&\frac1{\sqrt{8E}}\, \big( \e^{\i\vt/2}\,\vech q+\i\e^{-\i\vt/2}\,\vech p \big)\ .
\end{eqnarray}
Since $\we{\sqrt{2}\, E\, t}=\vech q$ and $\we{\sqrt{2}\, E\, x}=\vech p$ we find
\begin{eqnarray}
\wi{\v(\vt)\,\vech a^\dagger\, \v(\vt)^{-1}}=\sqrt{\frac{E_\vt}4}\, x_-^{(\vt)}\qquad \mbox{and} \qquad\wi{\v(\vt)\,\vech a\,\v(\vt)^{-1}}=\sqrt{\frac{E_\vt}4}\, x_+^{(\vt)}\ ,
\end{eqnarray}
where we used the complex light-cone coordinates (\ref{genLCcoords}).
The corresponding derivatives are given by
\begin{eqnarray}
\p{\pm}^{(\vt)}=\p t\mp\i\e^{\i\vt}\, \p x\ ,\label{rel1}
\end{eqnarray}
with $\p{\pm}^{(\vt)}x_\pm^{(\vt)}=2$ and $\p{\pm}^{(\vt)}x_\mp^{(\vt)}=0$. The matrix basis functions on $\R^2$ can now be obtained via the Weyl-Wigner correspondence
\begin{eqnarray}
f_{mn}^{(E_\vt)}(\vec x)&=&\sqrt{\frac{E}{4\pi}}\, \wi{\kb{f_m^{(E_\vt)}}{f_n^{(E_{-\vt})}}}(\vec x) \\[4pt]
&=&\frac{1}{\sqrt{m!\,n!}}\,\Big(\, \sqrt{\frac{E_\vt}4}\, x_-^{(\vt)}\, \Big)^{\star_{2/E}\, m}\star_{2/E}\sqrt{\frac{E}{4\pi}}\,\wi{\kb{f_0^{(E_\vt)}}{f_0^{(E_{-\vt})}}}\star_{2/E}\Big(\sqrt{\frac{E_\vt}4}\, x_+^{(\vt)}\, \Big)^{\star_{2/E}\, n}\ . \nonumber
\end{eqnarray}

We define ``ladder operators'' through the equations\footnote{Since
  $(a^+_{(E_\vt)})^\dagger\neq a^-_{(E_\vt)}$ and
  $(b^+_{(E_\vt)})^\dagger\neq b^-_{(E_\vt)}$ they are strictly
  speaking not ladder operators in the conventional sense, but we will nevertheless refer to them as such.}
\begin{equation}
\begin{aligned}
\Big(\,\sqrt{\frac{E_\vt}4}\, x_-^{(\vt)}\,\Big)\star_{2/E} g(\vec x)&=a^+_{(E_\vt)}g(\vec x)\qquad \mbox{and} \qquad \Big(\,\sqrt{\frac{E_\vt}4}\, x_+^{(\vt)}\,\Big)\star_{2/E} g(\vec x)=a^-_{(E_\vt)}g(\vec x)\ ,\\[4pt]
g(\vec x)\star_{2/E} \Big(\,\sqrt{\frac{E_\vt}4}\, x_+^{(\vt)}\,\Big)&=b^+_{(E_\vt)}g(\vec x)\qquad \mbox{and} \qquad g(\vec x)\star_{2/E} \Big(\, \sqrt{\frac{E_\vt}4\, }x_-^{(\vt)}\,\Big)=b^-_{(E_\vt)}g(\vec x)\ .
\end{aligned}
\end{equation}
The differential operators on the right-hand sides of these equations can most easily be obtained by expressing the star-product in terms of the complex light-cone coordinates. Inverting the relations (\ref{genLCcoords}) and \eqref{rel1}, after a bit of algebra we find that the ladder operators are then given by
\begin{equation}
a^\pm_{(E_\vt)}=\half\, \Big(\, \sqrt{E_\vt}\, x_\mp^{(\vt)}\mp\sqrt{\frac{1}{E_\vt}}\, \p{\pm}^{(\vt)}\, \Big)\qquad \mbox{and} \qquad
b^\pm_{(E_\vt)}=\half\,\Big(\, \sqrt{E_\vt}\, x_\pm^{(\vt)}\mp\sqrt{\frac{1}{E_\vt}}\, \p{\mp}^{(\vt)}\, \Big)\ ,
\label{minkladder1}
\end{equation}
and that they fulfill the commutation relations
\begin{eqnarray}
\big[a^-_{(E_\vt)}\,,\,a^+_{(E_\vt)}\big]=1\qquad \mbox{and} \qquad
\big[b^-_{(E_\vt)}\,,\,b^+_{(E_\vt)}\big]=1\ ,
\end{eqnarray}
with all other commutators equal to zero. As expected, we arrive at the usual Euclidean case from \cite{lsz04} when substituting $E_\vt$ by $E$. 

The ground state wavefunction is determined by the differential equations
\begin{eqnarray}
a^-_{(E_\vt)}f^{(E_\vt)}_{00}(\vec x)=b^-_{(E_\vt)}f^{(E_\vt)}_{00}(\vec x)=0
\end{eqnarray}
plus the normalization condition (\ref{normcond}) with $m=n=0$,
which has the solution 
\begin{eqnarray}
f^{(E_\vt)}_{00}(\vec x)=\sqrt{\frac{E}\pi}\, \e^{-E_\vt\, x_+^{(\vt)}\,x_-^{(\vt)}/2}\ .
\end{eqnarray}
The wavefunctions $f_{mn}^{(E_{\vt})}$ have the ladder operator representation
\begin{eqnarray}
f_{mn}^{(E_\vt)}(\vec x)=\frac{\big(a^+_{(E_\vt)}\big)^m}{\sqrt{m!}}\, \frac{\big(b^+_{(E_\vt)}\big)^n}{\sqrt{n!}}\, f_{00}^{(E_\vt)}(\vec x)\ .\label{fmnminkdef}
\end{eqnarray}
It immediately follows that
\begin{equation}
\begin{aligned}
a^-_{(E_\vt)}f_{mn}^{(E_\vt)}(\vec x)=\sqrt{m}\, f_{m-1,n}^{(E_\vt)}(\vec x)\qquad& \mbox{and} \qquad a^+_{(E_\vt)}f_{mn}^{(E_\vt)}(\vec x)=\sqrt{m+1}\, f_{m+1,n}^{(E_\vt)}(\vec x)\ ,\\[4pt]
b^-_{(E_\vt)}f_{mn}^{(E_\vt)}(\vec x)=\sqrt{n}\, f_{m,n-1}^{(E_\vt)}(\vec x)\qquad& \mbox{and} \qquad b^+_{(E_\vt)}f_{mn}^{(E_\vt)}(\vec x)=\sqrt{n+1}\, f_{m,n+1}^{(E_\vt)}(\vec x)\ .
\end{aligned}\label{creation2}
\end{equation}
We will use these relations to obtain the desired matrix model representations. Note that, by~\cite[Lem.~5]{fs08}, the problem of the right test function space is the same as in the complex oscillator case of \S\ref{genoscbasis0}; we can relate the subspaces of Gel'fand-Shilov spaces $\s^{\alpha}_{\alpha}(\R)$ to subspaces of $\s^{\alpha}_{\alpha}(\R^2)$ via Wigner transformation.

\subsection{Matrix Space Representation}\label{matrix4}

Using the Fock space representation of \S\ref{genlandau}, we will now derive the matrix model representation of the classical regularized actions of the LSZ model. In the following we denote 
\begin{eqnarray}
f_{mn}^{\kappa}:=f_{mn}^{(2/\theta_{-\vt})}
\end{eqnarray}
with $\vt=\frac\pi2-\kappa$ and $\theta\neq2/E$ in general; in this case the complex Landau wavefunctions diagonalize the interaction part of the action, but not necessarily the free part of the action. 

We expand the scalar fields in terms of the complex Landau
basis\footnote{This expansion is defined for $L^2$-functions in a
  limiting procedure which can be found in~\cite[App.~D]{thesis}.}
\begin{equation}
\phi(\vec x)=\sum_{m,n=0}^\infty\,f^{\kappa}_{mn}(\vec x)\,\phi^{\kappa}_{mn}\qquad \mbox{and} \qquad
\phi(\vec x)^*=\sum_{m,n=0}^\infty\,f^{\kappa}_{mn}(\vec x)\,\overline\phi{}^{\kappa}_{mn} \ ,
\end{equation}
where the complex expansion coefficients are given by
\begin{equation}
\phi{}^{\kappa}_{mn}=\bk{f^{-\kappa}_{mn}}{\phi}=\int\,\dd^2\vec x\ f^{\kappa}_{nm}(\vec x)\,\phi(\vec x)\qquad \mbox{and} \qquad
\overline{\phi}{}^{\kappa}_{mn}=\bk{f^{-\kappa}_{mn}}{\phi^*\,}= \int\,\dd^2\vec x\ f^{\kappa}_{nm}(\vec x)\,\phi(\vec x)^*
\label{minkmatrixcoeff}
\end{equation}
with $\overline\phi{}^\kappa_{mn}=\big(\phi^{-\kappa}_{nm}\big)^*$. The free parts of the actions can be deduced from
\begin{lemma}\label{matrixps1}
The $\vt$-regularized wave operator of the 1+1-dimensional LSZ model in matrix space is given by 
\begin{eqnarray}
D_{mn;kl}^{(\kappa,\sigma)}&=&\Big(-\e^{-\i\kappa}\, \mu^2+2\i\, \frac{(1+\Omega^2)}{\theta}\, (m+n+1)+\frac{4\i \tilde\Omega}\theta\, (n-m)\Big)\, \delta_{ml}\, \delta_{nk}\notag\\
&&\quad+\, 2\i\,\frac{\Omega^2-1}\theta\,\left(\sqrt{n\, m}\,\delta_{m,l+1}\,\delta_{n,k+1}+\sqrt{(n+1)\,(m+1)}\,\delta_{m,l-1}\,\delta_{n,k-1}\right)\label{LSZop}
\end{eqnarray}
with frequencies $\Omega=E\,\theta/2$ and $\tilde\Omega=(2\sigma-1)\, \Omega$.
\end{lemma}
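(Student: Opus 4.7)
The strategy is to express the $\vt$-regularized free wave operator $\mathcal D^{(\kappa,\sigma)}:=\sigma\,(\Kmm-\mu^2)_\kappa+(1-\sigma)\,(\Ktm-\mu^2)_\kappa$ as a polynomial in the ladder operators $a^\pm,b^\pm$ of \eqref{minkladder1} that generate the complex Landau basis $(f_{mn}^{\kappa})$, and then to use the ladder action \eqref{creation2} together with the bi-orthogonality of the basis to read off the claimed matrix elements.

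The first step is to use the algebraic identity \eqref{lsz2} to rewrite
\beq
\sigma\,\Kmm+(1-\sigma)\,\Ktm-\mu^2 \= \Kmm\big|_{F\to\tilde F}+\Omega^2\,\tilde x_\mu^2-\mu^2\ ,
\eeq
where $\tilde F=(2\sigma-1)\,F$ has effective field strength $\tilde E=(2\sigma-1)\,E$ and $\tilde\Omega=\tilde E\,\theta/2=(2\sigma-1)\,\Omega$. Applying the $\vt$-regularization as defined just before \eqref{lszreg} multiplies the kinetic-plus-oscillator piece by $\e^{\i\kappa}$ and the mass term by $-\e^{-\i\kappa}$; the latter directly matches the $-\e^{-\i\kappa}\,\mu^2$ scalar on the diagonal of \eqref{LSZop}, while the former turns the ``Klein-Gordon with background $\tilde F$'' piece into left star-multiplication by the rotated Hamiltonian $H(\vt)|_{E\to\tilde E}$ through \eqref{genp1}.

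The second step is to convert into ladder form. Inverting \eqref{minkladder1} yields explicit linear expressions for $x_\pm^{(\vt)}$ and $\partial_\pm^{(\vt)}$ as combinations of $a^\pm,b^\pm$, all evaluated at effective frequency $2/\theta$. Substituting into the star-product representation of $\Kmm|_{F\to\tilde F}$ and into the oscillator term $\Omega^2\,\tilde x_\mu^2$ produces a quadratic polynomial in the four ladder operators. Because the wave operator is built from $E$ while the basis is adapted to $2/\theta$, the substitution naturally brings out the dimensionless ratio $\Omega=E\,\theta/2$. Collecting terms, the number-operator bilinears $a^+a^-$ and $b^+b^-$ deliver the diagonal $(1+\Omega^2)\,(m+n+1)$ contribution, the relative sign between the left and right star-multiplications in \eqref{genp1} induces the asymmetric $\tilde\Omega\,(n-m)$ term, and the mixed bilinears $a^+b^+$ and $a^-b^-$---with coefficient proportional to $\Omega^2-1$, which vanishes at the self-dual point $\Omega=1$---generate the off-diagonal shifts $\delta_{m,l\mp 1}\,\delta_{n,k\mp 1}$.

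The final step is to apply \eqref{creation2} to this polynomial acting on $f_{mn}^{\kappa}$ and project onto $f_{kl}^{-\kappa}$ using bi-orthogonality to extract $D^{(\kappa,\sigma)}_{mn;kl}$. The principal obstacle is the bookkeeping: one must consistently track the overall phase $\e^{\i\kappa}$, the complex frequency $E_\vt=E\,\e^{\i\vt}$, and the ratio $\Omega$ between the two relevant star-product scales $2/E$ (arising in the wave operator through \eqref{genp1}) and $\theta$ (from the basis and the interaction). Once the ladder operator form of the wave operator is in hand, the identification with \eqref{LSZop} follows by direct algebra.
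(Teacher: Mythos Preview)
Your approach and the paper's are essentially the same at the core: express the regularized wave operator as a quadratic in the ladder operators of \eqref{minkladder1} and use \eqref{creation2} to read off the matrix elements. The paper carries this out more directly, writing down the ladder form \eqref{pminkladder1} of $\P^2(\vt)$ (and the analogous expression for $\Pt(\vt)$ with $a\leftrightarrow b$), verifying it by straightforward substitution of \eqref{minkladder1}, and only then taking the $\sigma$-weighted combination to obtain \eqref{LSZop}. Your route through the reformulation \eqref{lsz2} and the star-product identity \eqref{genp1} is a valid alternative organisation but a longer detour to the same algebra.

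Two points of imprecision deserve attention. First, the $\vt$-regularization does not literally ``multiply the kinetic-plus-oscillator piece by $\e^{\i\kappa}$'' as a position-space differential operator: it replaces $\K_\mu^2$ by $\e^{\i\kappa}\K^2(\tfrac\pi2-\kappa)$, which by \eqref{ptheta1} mixes in the Euclidean operator $\K_i^2$. The clean overall phase $\e^{\i\vt}$ only emerges once one is in the $\vt$-adapted ladder representation \eqref{pminkladder1}, so it is safer to pass to ladder form first and read off the phase there. Second, the star-product identity \eqref{genp1} is stated at the self-dual scale $\theta=2/E$; away from that point it cannot be invoked verbatim, and the $\Omega$-dependence you want arises instead from expressing the $E$-scale operators in the $2/\theta$-scale ladder variables, exactly as in \eqref{pminkladder1}. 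Finally, note that the matrix elements in the paper are extracted via $\int f^\kappa_{mn}f^\kappa_{kl}=\delta_{nk}\delta_{ml}$, which follows from the projector property \eqref{genprojector} and normalisation \eqref{normcond}, rather than from the bi-orthogonality pairing with $f^{-\kappa}$.
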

\begin{proof}
The wave operator is defined in the matrix basis by
\begin{eqnarray}
D_{mn;kl}^{(\kappa,\sigma)}=\int\, \dd^2\vec x\ f_{mn}^{\kappa}(\vec x)\,\left(\sigma\, \e^{\i\kappa}\,\P{}^2\big(\mbox{$\frac\pi2$}-\kappa\big)+(1-\sigma)\, \e^{\i\kappa}\,\Pt\big(\mbox{$\frac\pi2$}-\kappa\big)-\e^{-\i\kappa}\,\mu^2\right)\, f_{kl}^\kappa(\vec x)\ .
\end{eqnarray}
One has
\begin{eqnarray}
\P{}^2(\vt)&=& \frac{\e^{\i\vt}}{2\theta}\, \Big((2+E\, \theta)^2\, \big(a_{(2/\theta_\vt)}^+ \, a_{(2/\theta_\vt)}^-+\mbox{$\half$}\big)+(2-E\,\theta)^2\, \big(b_{(2/\theta_\vt)}^+\, b_{(2/\theta_\vt)}^-+\mbox{$\half$}\big) \nonumber\\ && \qquad \qquad +\, \left(\theta^2\, E^2-4\right)\, \big(a_{(2/\theta_\vt)}^+\, b_{(2/\theta_\vt)}^++a_{(2/\theta_\vt)}^-\, b_{(2/\theta_\vt)}^-\big)\Big)\ , \label{pminkladder1}
\end{eqnarray}
together with a similar expression for $\Pt(\vt)$ with $a_{(2/\theta_\vt)}^\pm$ and $b_{(2/\theta_\vt)}^\pm$ interchanged. These formulas are verified directly by inserting (\ref{minkladder1}). The matrix space representation of the partial differential operators $\P{}^2(\vt)$ and $\Pt(\vt)$ away from the self-dual point can be obtained from \eqref{pminkladder1} with the help of \eqref{creation2} to get
\begin{eqnarray}
\P{}^2_{mn;kl}(\vt)&=&\frac{\e^{\i\vt}}{2\theta}\, \Big((2+E\, \theta)^2\, \big(m+\mbox{$\half$}\big)\, \delta_{ml}\,\delta_{nk}+(2-E\, \theta)^2\, \big(n+\mbox{$\half$}\big)\, \delta_{ml}\,\delta_{nk}\label{pmm}\\
&& \qquad \quad+\, \left(\theta^2\, E^2-4\right)\,\big(\sqrt{n\,m}\,\delta_{m,l+1}\,\delta_{n,k+1}+\sqrt{(n+1)\, (m+1)}\,\delta_{m,l-1}\,\delta_{n,k-1}\big)\Big) \nonumber
\end{eqnarray}
and
\begin{eqnarray}
\Pt_{mn;kl}(\vt)&=&\frac{\e^{\i\vt}}{2\theta}\, \Big((2+E\, \theta)^2\, \big(n+\mbox{$\half$}\big)\, \delta_{ml}\,\delta_{nk}+(2-E\,\theta)^2\, \big(m+\mbox{$\half$}\big)\, \delta_{ml}\,\delta_{nk}\label{ptm}\\
&&\qquad \quad+\, \left(\theta^2\, E^2-4\right)\,\big(\sqrt{n\, m}\,\delta_{m,l+1}\,\delta_{n,k+1}+\sqrt{(n+1)\, (m+1)}\,\delta_{m,l-1}\,\delta_{n,k-1}\big)\Big) \ ,\nonumber
\end{eqnarray}
which can be combined to give \eqref{LSZop}.
\end{proof}

For the LSZ interaction terms, we use the projector property \eqref{genprojector} to get
\begin{eqnarray}
f^{\kappa}_{m_1n_1}\star_\theta f^{\kappa}_{m_2n_2}\star_\theta f^{\kappa}_{m_3n_3}\star_\theta f^{\kappa}_{m_4n_4}=\frac{1}{(2\pi\,\theta)^{3/2}}\ \delta_{n_1m_2}\, \delta_{n_2m_3}\, \delta_{n_3m_4}\ f^{\kappa}_{m_1n_4}\ .
\label{fepsilonproj}\end{eqnarray}
The regularized LSZ model in two-dimensional Minkowski space can then be represented in the matrix basis as
\begin{eqnarray}
\s_{\sf LSZ}^{(\kappa)}= \sum_{m,n,k,l}\, \overline{\phi}{}^{\kappa}_{mn}\, D_{mn;kl}^{(\kappa,\sigma)}\,\phi^{\kappa}_{l k}-\frac{g}{2\pi\, \theta}\, \sum_{m,n,k,l}\, \left(\alpha\,\overline{\phi}{}^{\kappa}_{mn}\, \phi^{\kappa}_{nk}\,\overline{\phi}{}^{\kappa}_{kl}\, \phi^{\kappa}_{l m}+\beta\,\overline{\phi}{}^{\kappa}_{mn}\, \overline{\phi}{}^{\kappa}_{nk}\, \phi^{\kappa}_{kl}\,\phi^{\kappa}_{l m}\right)\ .
\end{eqnarray}
As a one-matrix model with infinite complex matrices $\phi_\kappa=\big(\phi^\kappa_{mn}\big)_{m,n\in\N_0}$ this representation reads
\begin{equation}
\begin{aligned}
\s_{\sf LSZ}^{(\kappa)}&=\frac{1}{2\theta}\, \Tr\Big(\big((2-\theta\, E)^2+8\sigma\, \theta\, E\big)\, \phi_{-\kappa}^\dagger\, \mathcal E\, \phi_{\kappa}+\big((2+\theta E)-8\sigma\, \theta\, E\big)\, \phi_{\kappa}\, \mathcal E\, \phi_{-\kappa}^\dagger\\
&\quad \qquad \qquad +\i \big(\theta^2\, E^2-4\big)\, \big(\phi_{-\kappa}^\dagger\, \Gamma^\dagger\, \phi_{\kappa}\, \Gamma+\phi_{\kappa}\, \Gamma^\dagger\, \phi_{-\kappa}^\dagger\, \Gamma\big)-2\theta\, \e^{-\i\kappa}\, \mu^2\, \phi_{-\kappa}^\dagger\, \phi_{\kappa}\\
&\quad \qquad \qquad - \, \frac{g}{2\pi\, \theta}\, \big(\alpha\,\phi_{-\kappa}^\dagger\, \phi_{\kappa}\,\phi_{-\kappa}^\dagger\, \phi_{\kappa}+\beta\,\phi_{-\kappa}^\dagger\, \phi_{-\kappa}^\dagger\, \phi_{\kappa}\,\phi_{\kappa}\big)\Big) \ ,
\end{aligned}\label{matrixmodel2}
\end{equation}
with the diagonal matrix
\begin{eqnarray}
\mathcal E_{mn}=4\i\big(m+\mbox{$\half$}\big)\ \delta_{mn}
\end{eqnarray}
and the infinite shift matrix
\begin{eqnarray}
\Gamma_{mn}=\sqrt{n-1}\ \delta_{m,n-1}\ .
\end{eqnarray}

Using the perturbative solution \eqref{cgenfunctional1}, the duality covariant field theory can be defined perturbatively in the matrix basis by the partition function 
\begin{eqnarray}
Z[J,\overline{J}\,]&=&\lim_{\kappa\rightarrow0^+}\, \mathcal N\, \exp\Big(\, -\frac{\i \alpha\,g}{2\pi\,\theta}\, \sum_{m,n,k,l}\, \frac{\partial^4}{\partial J{}^{\kappa}_{ml}\,\partial\overline{J}{}^{\kappa}_{l k}\,\partial J{}^{\kappa}_{kn}\, \partial\overline{J}{}^{\kappa}_{nm}}\, \Big) \\
&&\qquad\times\, \exp\Big(\,-\frac{\i \beta\,g}{2\pi\,\theta} \, \sum_{m,n,k,l}\, \frac{\partial^4}{\partial J{}^{\kappa}_{ml}\, \partial J{}^{\kappa}_{l k}\, \partial\overline{J}{}^{\kappa}_{kn}\, \partial\overline{J}{}^{\kappa}_{nm}}\, \Big)\exp\Big(\i\, \sum_{m,n,k,l}\, \overline{J}{}^{\kappa}_{mn}\, \Delta{}^{(\kappa,\sigma)}_{mn;kl}\, J{}^{\kappa}_{kl}\Big)\ , \nonumber
\end{eqnarray}
with $J^{\kappa}_{mn}$, $\overline{J}{}^{\kappa}_{mn}$ external sources in the matrix basis and the propagator $\Delta{}^{(\kappa,\sigma)}_{mn;kl}$ defined as the inverse of $D_{mn;kl}^{(\kappa,\sigma)}$,
\begin{eqnarray}
\sum_{k,l}\, D^{(\kappa,\sigma)}_{mn;kl}\, \Delta^{(\kappa,\sigma)}_{l k;sr}=\sum_{k,l}\, \Delta^{(\kappa,\sigma)}_{nm;l k}\, D{}^{(\kappa,\sigma)}_{kl;rs}=\delta_{mr}\, \delta_{ns}\ .
\end{eqnarray}
An explicit expression for the propagator $\Delta{}^{(\kappa,\sigma)}_{mn;kl}$ will be derived in \S\ref{matrixprops}.
The modified Feynman rules are presented in the double line formalism and are exactly as in the Euclidean case~\cite{sza01,gw05}. The generically non-local propagators are represented by double lines with orientation pointing from $\phi^*$ to $\phi$ as
\vspace{-10pt}
\begin{center}
\begin{picture}(40,20)(0,10)
\ArrowLine(5,8)(35,8)
\ArrowLine(5,13)(35,13)
\Text(5,6)[t]{$m$}
\Text(5,14)[b]{$n$}
\Text(35,6)[t]{$l$}
\Text(35,14)[b]{$k$}
\end{picture}
$ \ = \ \Delta^{(\kappa,\sigma)}_{nm;l k}\ .$
\end{center}
\vspace{10pt}
The two interaction terms $\phi^*\star_\theta\phi\star_\theta\phi^*\star_\theta\phi$ and $\phi^*\star_\theta\phi^*\star_\theta\phi\star_\theta\phi$ are represented by different diagonal vertices given respectively by
\vspace{-25pt}
\begin{center}
\begin{picture}(70,70)(-10,25)
\DashArrowLine(0,25)(25,25){2}
\DashArrowLine(25,25)(25,0){2}
\DashArrowLine(30,25)(30,0){2}
\DashArrowLine(55,25)(30,25){2}
\DashArrowLine(55,30)(30,30){2}
\DashArrowLine(30,30)(30,55){2}
\DashArrowLine(25,30)(25,55){2}
\DashArrowLine(0,30)(25,30){2}
\end{picture}
$\ = \ -\frac{\i g\, \alpha}{2\pi\,\theta} \, \delta_{mp}\, \delta_{nq}\, \delta_{kr}\, \delta_{l s} \qquad \mbox{and} \qquad $
\begin{picture}(70,70)(-10,25)
\DashArrowLine(0,25)(25,25){2}
\DashArrowLine(25,0)(25,25){2}
\DashArrowLine(30,0)(30,25){2}
\DashArrowLine(55,25)(30,25){2}
\DashArrowLine(55,30)(30,30){2}
\DashArrowLine(30,55)(30,30){2}
\DashArrowLine(25,55)(25,30){2}
\DashArrowLine(0,30)(25,30){2}
\end{picture}
$ \ =\ -\frac{\i g\, \beta}{2\pi\,\theta} \, \delta_{mp}\, \delta_{nq}\, \delta_{kr}\, \delta_{l s} \ . $
\end{center}
\vspace{25pt}
Restricting to one of these interactions reduces the number of possible diagrams for the complex matrix model.

For real fields, one can apply Lemma~\ref{matrixps1} by setting $\sigma=\frac12$ to immediately get
\begin{lemma}
The $\vt$-regularized Grosse-Wulkenhaar wave operator in 1+1 dimensions has the matrix space representation given by
\begin{eqnarray}
D_{mn;kl}^{(\kappa)}&=&\Big(-\e^{-\i\kappa}\, \mu^2+2\i\, \frac{\Omega^2+1}{\theta}\, (m+n+1)\Big)\, \delta_{ml}\,\delta_{nk}\notag\\
&&\quad+\, 2\i\frac{\Omega^2-1}{\theta}\,\Big(\sqrt{n\,m}\,\delta_{m,l+1}\,\delta_{n,k+1}+\sqrt{(n+1)\,(m+1)}\, \delta_{m,l-1}\, \delta_{n,k-1}\Big)\label{GWop}
\end{eqnarray}
with frequency $\Omega=E\, \theta/2$.
\end{lemma}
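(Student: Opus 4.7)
The plan is essentially to specialize Lemma~\ref{matrixps1} to $\sigma=\tfrac12$. First I would observe that the free wave operator appearing inside the Grosse-Wulkenhaar action \eqref{gwreg},
\[
\tfrac12\,(\Kmm-\mu^2)_\kappa+\tfrac12\,(\Ktm-\mu^2)_\kappa,
\]
coincides with the LSZ free operator $\sigma\,(\Kmm-\mu^2)_\kappa+(1-\sigma)\,(\Ktm-\mu^2)_\kappa$ of Lemma~\ref{matrixps1} at the value $\sigma=\tfrac12$. The matrix elements obtained by sandwiching this differential operator between $f^\kappa_{mn}$ and $f^\kappa_{kl}$ depend only on the symbol of the operator and the bi-orthogonal matrix basis, not on whether the field being expanded is real or complex; consequently the representation in Lemma~\ref{matrixps1} applies verbatim.

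Next I would substitute $\sigma=\tfrac12$ directly into \eqref{LSZop}. The parameter $\tilde\Omega=(2\sigma-1)\,\Omega$ then vanishes, so the antisymmetric diagonal contribution $\frac{4\i\tilde\Omega}{\theta}(n-m)\,\delta_{ml}\,\delta_{nk}$ disappears, while the symmetric diagonal contribution $2\i\frac{(1+\Omega^2)}{\theta}(m+n+1)\,\delta_{ml}\,\delta_{nk}$ and the off-diagonal band proportional to $\Omega^2-1$ survive unchanged. This is precisely the formula \eqref{GWop}.

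Since the lemma reduces to a direct specialization of an already established result, there is no substantive technical obstacle. The only internal sanity check is that the averaging of the two contributions \eqref{pmm} and \eqref{ptm} — which in their raw form carry asymmetric diagonal weights $(2+E\theta)^2(m+\tfrac12)+(2-E\theta)^2(n+\tfrac12)$ and its $m\leftrightarrow n$ swap — symmetrizes the diagonal in $(m,n)$ to produce the $(m+n+1)$ coefficient. This is immediate from $\tfrac12\big((2+E\theta)^2+(2-E\theta)^2\big)=4+E^2\theta^2=4(1+\Omega^2)$, and combined with the prefactors from \eqref{ptheta3} and the definition of $(\cdots)_\kappa$, it yields the claimed coefficient $\frac{2\i(1+\Omega^2)}{\theta}$, exactly matching \eqref{GWop}.
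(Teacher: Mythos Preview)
Your proposal is correct and follows exactly the paper's approach: the paper simply states that one obtains the result by applying Lemma~\ref{matrixps1} with $\sigma=\tfrac12$, which is precisely your specialization argument. Your additional sanity check on the symmetrization of the diagonal terms is a nice extra but not required.
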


The Minkowski space Grosse-Wulkenhaar action in the matrix basis then reads
\begin{eqnarray}
\s_{\sf GW}^{(\kappa)}=\sum_{m,n,k,l}\,\Big(\, \half\,
\phi^{\kappa}_{mn}\,D_{mn;kl}^{(\kappa)}\, \phi^{\kappa}_{kl}- \frac{g}{2\pi\, \theta}\, \phi^{\kappa}_{mn}\, \phi^{\kappa}_{nk}\, \phi^{\kappa}_{kl}\, \phi^{\kappa}_{l m}\, \Big) \label{GWmatrixactionmink}
\end{eqnarray}
with $\overline{\phi}{}^{\kappa}_{mn}=\phi{}^{\kappa}_{nm}$; it thus corresponds to a Hermitian one-matrix model. From the perturbative solution \eqref{rgenfunctional1} the partition function in matrix space is given by
\begin{eqnarray}
Z[J]=\lim_{\kappa\rightarrow0^+}\, \mathcal N\,\exp\Big(\,-\frac{\i g}{2\pi\,\theta} \, \sum_{m,n,k,l}\, \frac{\partial^4}{\partial J^{\kappa}_{ml}\, \partial J^{\kappa}_{l k}\, \partial J^{\kappa}_{kn}\, \partial J^{\kappa}_{nm}}\, \Big)\exp\Big(\, \frac{\i}2\, \sum_{m,n,k,l}\, J^{\kappa}_{mn}\, \Delta^{(\kappa)}_{mn;kl}\, J^{\kappa}_{kl}\, \Big)
\label{ZJGWmatrix}\end{eqnarray}
where the propagator $\Delta^{(\kappa)}_{mn;kl}$ is the inverse of $D^{(\kappa)}_{mn;kl}$ and is represented by the unoriented double line
\vspace{-10pt}
\begin{center}
\begin{picture}(40,20)(0,10)
\ArrowLine(5,8)(35,8)
\ArrowLine(35,13)(5,13)
\Text(5,6)[t]{$m$}
\Text(5,14)[b]{$n$}
\Text(35,6)[t]{$l$}
\Text(35,14)[b]{$k$}
\end{picture}
$ \ =\ \Delta{}^{(\kappa)}_{nm;l k}\ . $
\end{center}
\vspace{10pt}
The vertex of the $\phi^{\star4}$ interaction is given by the graph
\vspace{-25pt}
\begin{center}
\begin{picture}(70,70)(-10,25)
\DashArrowLine(0,25)(25,25){2}
\DashArrowLine(25,25)(25,0){2}
\DashArrowLine(30,0)(30,25){2}
\DashArrowLine(30,25)(55,25){2}
\DashArrowLine(55,30)(30,30){2}
\DashArrowLine(30,30)(30,55){2}
\DashArrowLine(25,55)(25,30){2}
\DashArrowLine(25,30)(0,30){2}
\end{picture}
$ \ =\ -\frac{\i g}{2\pi\,\theta} \, \delta_{mp}\, \delta_{nq}\, \delta_{kr}\, \delta_{l s}\ . $
\end{center}
\vspace{25pt}
Since the vertex is unoriented there are as many diagrams as in the LSZ model with both parameters $\alpha$ and $\beta$ turned on.

\subsection{Generalization to Higher Dimensions}\label{higher2}

The spectra of both partial differential operators in (\ref{ptheta2}) in generic $D=2n$ dimensions are given by the set
\begin{eqnarray}
\Big\{4E\,
\e^{\i\vt}\,\big(l_0+\mbox{$\frac12$}\big)+\mbox{$\sum\limits_{k=1}^{n-1}
  $}\, 4B_k\, \e^{2\i\vt}\, \big(l_k+\mbox{$\frac12$}\big)\ \Big| \
l_0, l_1,\ldots,l_{n-1}\in\N_0\Big\}\ ,
\end{eqnarray}
where the eigenfunctions are products of the complex Landau wavefunctions $f_{m_0n_0}^{(E_\vt)}$ from \S\ref{genlandau} and the ordinary Landau wavefunctions $f_{m_kn_k}^{(B_k)}$, so that
\begin{eqnarray}
f^{(\vec F_\vt)}_{\vec m\vec n}(\vec x):=f_{m_0n_0}^{(E_\vt)}(\vec
x_0)\ \prod_{k=1}^{n-1} \, f^{(B_k)}_{m_kn_k}(\vec x_k) \label{multif2}
\end{eqnarray}
with $\vec x_k=(x^{2k},x^{2k+1})\in\R^2$ for $k=0,1,\dots,n-1$, $\vec
x=(x^\mu)=(x^0,x^1,\dots,x^d) \in\R^{D}$, $\vec m=(m_k),\vec
n=(n_k)\in\N_0^n$, and $\vec F_\vt=(E_\vt,B_1,\ldots,B_{n-1} )\in\C_+\times\R_+^{n-1}$ where $\C_+:=\{z\in\C\ |\ \Re(z)\geq0\}$. The star-product of two multi-dimensional complex Landau wavefunctions (\ref{multif2}) with respect to the deformation matrix \eqref{multitheta0} decomposes into star-products of Landau wavefunctions depending on $\vec x_k$ for $k=0,1,\ldots,n-1$. If in addition $E=2/\theta_0$ and $B_k=2/\theta_k$ for $k=1,\dots,n-1$, then
\begin{eqnarray}
\big(f^{(\vec F_\vt)}_{\vec m\vec n}\star_\Theta f^{(\vec F_\vt)}_{\vec k\vec l}\big)(\vec x)=\frac1{\det(2\pi\,\Theta)^{1/4}}\ \delta_{\vec n\vec k}\ f^{(\vec F_\vt)}_{\vec m\vec l}(\vec x)
\end{eqnarray}
with $\delta_{\vec m\vec n}=\prod_{k}\, \delta_{m_kn_k}$.

The generalization of the matrix model representation to higher spacetime dimensions is now straightforward. To conform with our previous conventions we again set $\vt=\frac\pi2-\kappa>0$ and use the notation
\begin{eqnarray}
f_{\vec m\vec n}^{\kappa}(\vec
x)=f^{(2/(\theta_0)_{-\vt})}_{m_0n_0}(\vec x_0)\, \prod_{k=1}^{n-1} \, f^{(2/\theta_k)}_{m_kn_k}(\vec x_k)\ .
\end{eqnarray}
The functions $f_{\vec m\vec n}^{\kappa}$ are arranged so as to diagonalize the interaction part but not necessarily the free part of the action. The scalar fields on $\R^D$ are expanded in the complex Landau basis
\begin{equation}
\phi(\vec x)=\sum_{\vec m,\vec n\,\in\,\N_0^{n}}\,f^{\kappa}_{\vec m\vec n}(\vec x)\,\phi^\kappa_{\vec m\vec n}\qquad \mbox{and} \qquad
\phi(\vec x)^*=\sum_{\vec m,\vec n\,\in\,\N_0^{n}}\,f^{\kappa}_{\vec m\vec n}(\vec x)\,\overline\phi{}^\kappa_{\vec m\vec n} \ ,
\end{equation}
where the complex expansion coefficients are given by
\begin{equation}
\phi_{\vec m\vec n}^\kappa=\bk{f^{-\kappa}_{\vec m\vec n}}{\phi}= \int\, \dd^D\vec x\ f^{\kappa}_{\vec n\vec m}(\vec x)\,\phi(\vec x)\qquad \mbox{and} \qquad
\overline{\phi}{}^\kappa_{\vec m\vec n}=\bk{f^{-\kappa}_{\vec m\vec n}}{\phi^*\,}= \int\, \dd^D\vec x\ f^{\kappa}_{\vec n\vec m}(\vec x)\,\phi(\vec x)^*\ .
\end{equation}

The matrix space representation of the LSZ model in $D=2n$ dimensions
away from the self-dual point can be obtained by comparing the
operators \eqref{ptheta2} with their two-dimensional constituents, and
their matrix representations given by \eqref{pmm}--\eqref{ptm}
together with their Euclidean counterparts for
$\vt=0$~\cite{lsz04}. The matrix space LSZ wave operator is thus the sum of the two-dimensional Minkowski space wave operator given by \eqref{LSZop} plus $n-1$ copies of the {massless} Euclidean wave operator for $\vt=0$~\cite{lsz04} times the factor $-\e^{-\i\kappa}$. Noting that the massless LSZ wave operators in Euclidean and Minkowski space differ only by a factor of the imaginary unit $\i$, we can write
\begin{eqnarray}
D_{\vec m\vec n;\vec k\vec l}^{(\kappa,\sigma)}=\i\mathcal
D_{m_0n_0;k_0l_0}^{0\, (\sigma)}-\e^{-\i\kappa}\, \sum_{i=1}^{n-1} \, \mathcal D_{m_in_i;k_il_i}^{i\, (\sigma)}-\e^{-\i\kappa}\, \mu^2\ \delta_{\vec m\vec l}\, \delta_{\vec n\vec k} \ , \label{2nlsz}
\end{eqnarray}
where $\vec m=(m_0,m_1, \ldots,m_{n}),\vec n=(n_0,n_1,
\ldots,n_{n}),\vec k=(k_0,k_1, \ldots,k_{n}),\vec l=(l_0,l_1, \ldots,l_{n})\in\N_0^{n}$
and $\mathcal D_{mn;kl}^{j\, (\sigma)}$ are the two-dimensional massless
Euclidean LSZ matrix space wave operators
\begin{eqnarray}
\mathcal D^{j\, (\sigma)}_{mn;kl}&=&\Big(2\, \frac{\Omega^2+1}{\theta_j}\, (m+n+1)+\frac{4\tilde\Omega}{\theta_j}\, (n-m)\Big)\, \delta_{ml}\,\delta_{nk}\notag\\
&&\quad+\, 2\, \frac{\Omega^2-1}{\theta_j}\, \left(\sqrt{n\, m}\, \delta_{m,l+1}\,\delta_{n,k+1}+\sqrt{(n+1)\,(m+1)}\, \delta_{m,l-1}\, \delta_{n,k-1}\right) \ , \label{G0}
\end{eqnarray}
with frequencies $\Omega=E\,\theta_0/2=B_i\, \theta_i/2$ and $\tilde\Omega=(2\sigma-1)\, \Omega$. The $2n$-dimensional regularized LSZ action is then given in the usual matrix space form
\begin{eqnarray}
\s_{\sf LSZ}^{(\kappa)} &=& \sum_{\vec m,\vec n,\vec k,\vec l}\,
\overline{\phi}{}^\kappa_{\vec m\vec n}\,
D^{(\kappa,\sigma)}_{\vec m\vec n;\vec k\vec l}\,\phi^\kappa_{\vec
  l \vec k} \\ && -\, \frac{g}{\sqrt{\det(2\pi\, \Theta)}}\ \sum_{\vec m,\vec n,\vec k,\vec l}\, \left(\alpha\,\overline{\phi}{}^\kappa_{\vec m\vec n}\, \phi^\kappa_{\vec n\vec k}\,\overline{\phi}{}^\kappa_{\vec k\vec l}\, \phi^\kappa_{\vec l \vec m}+\beta\,\overline{\phi}{}^\kappa_{\vec m\vec n}\, \overline{\phi}{}^\kappa_{\vec n\vec k}\, \phi^\kappa_{\vec k\vec l}\,\phi^\kappa_{\vec l \vec m}\right)\ . \nonumber
\end{eqnarray}
Every other result of this section (and ensuing ones) can now formally be generalized to higher dimensions by substituting multi-indices $\vec m,\vec n,\ldots\in\N_0^{n}$ for the usual matrix indices $m,n,\ldots\in\N_0$.

\section{Causality}\label{cont}

In this section we will treat the problem of determining the causal
propagator of the duality covariant field theories in Minkowski
space. Problematic for this issue is the lack of time translation
invariance, which allows for transitions that violate energy
conservation; this manifests itself in an instability of the vacuum
with respect to production of particle-antiparticle pairs. We review how the standard techniques must be altered to take care of these features. We will then examine the corresponding propagators which one obtains by removing the $\vt$-regularization.

\subsection{Causal Propagators}\label{causprop}

The way we chose the propagator of the Minkowski space theory was to find the analytically continued propagator of the Euclidean case. This also brought about the possibility of finding a matrix space representation. In the following we will show that the propagator which is prescribed by the $\vt$-regularization is the causal propagator of the duality covariant quantum field theory. For this, we will first review the connection between regularization and propagators by describing the {eigenvalue representation}, and the related {operator extension method}. 

The free partition function $Z_0[J,J^*]$ of a complex scalar field theory is defined as the vacuum-to-vacuum amplitude
\begin{eqnarray}
Z_0[J,J^*]=\bk{\Omega,{\rm out}}{\Omega,{\rm in}}^{J,J^*} \ ,
\label{vactovacampl}\end{eqnarray}
where $\ket{\Omega,{\rm in}}$ and $\bra{\Omega,{\rm out}}$ are the
vacuum states at time instances $t_{\rm in}$ and $t_{\rm out}$ of the
quantum field theory defined by the free action $\s_0[\phi,\phi^*]$ in
the presence of the sources $J$ and $J^*$. Using Schwinger's action principle, one can show that {causality} implies the relation
\begin{eqnarray}
\left.\frac{\delta^2\log Z_0[J,J^*]}{\delta J^*(\vec x)\,\delta J(\vec
    y)}\, \right|_{J=J^*= 0}=\frac{\bra{0,{\rm out}}T\big(\, \hat\phi(\vec
  x)\, \hat\phi(\vec y)^\dag\, \big)\ket{0,{\rm in}}}{\bk{0,{\rm out}}{0,{\rm in}}}\ ,\label{propfeyn1}
\end{eqnarray}
where $\hat\phi(\vec x)$ is the second quantized field operator, $T$
denotes time-ordering with respect to the time variables $x^0$ and
$y^0$, and $\ket{0,{\rm in}}$ and $\bra{0,{\rm out}}$ are the in- and
out- vacuum states for $J=J^*=0$ which in the presence of further
interactions are taken in the interaction picture where the field
operators satisfy the equations of motion obtained from varying
$\so[\phi,\phi^*]$. For field theories which allow for spontaneous
particle-antiparticle pair production, like the covariant models we
are considering, the in- and out- vacuum vectors are in general not
dual to each other. Thus a non-trivial vacuum-to-vacuum probability $|\bk{0,{\rm out}}{0,{\rm in}}|^2<1$ may occur, since $|\bk{0,{\rm out}}{0,{\rm in}}|^2$ measures the vacuum persistence which is equal to $1$ only if no spontaneous pair creation occurs.

The tempered distribution defined by the right-hand side of (\ref{propfeyn1}) is known as the \emph{causal} propagator and will be denoted as $\i\Delta_c(\vec x, \vec y)$, where the imaginary unit has been factored out to conform with our previous conventions. Quite generally, for a Klein-Gordon field which may be free or moving in an external background which preserves vacuum stability, the expression \eqref{propfeyn1} may be evaluated through the expansion
\begin{eqnarray}
\i\Delta_c(\vec  x,\vec  y)=\tau(x^0-y^0)\, \int_{{\cal C}}\,\dd m(\nu)\ \phi{}_\nu^{(+)}(\vec  x)\, \phi{}_\nu^{\,(+)}(\vec y)^*+ \tau(y^{0}-x^0)\, \int_{{\cal C}} \,\dd m(\nu)\ \phi{}_\nu^{(-)}(\vec x)\, \phi{}_\nu^{\,(-)}(\vec y)^* \label{e1}
\end{eqnarray}
with $\tau$ the Heaviside distribution function, $\big(\phi_\nu^{(\pm)}\big)$ a complete set of distributional {solutions} of the classical field equation with positive and negative frequency, respectively, and $\dd m(\nu)$ a suitable measure on the set ${\cal C}$ of all quantum numbers $\nu$ parametrizing the space of solutions. One can check that the distribution \eqref{propfeyn1}--\eqref{e1} propagates particles (positive frequency solutions) forward in time and anti-particles (negative frequency solutions) backward in time. This is the imprint of causality and lends the causal propagator its name.

The situation is more complicated if the background field spoils
vacuum persistence. A typical example is ordinary quantum
electrodynamics in an external field which allows for pair creation. Crucial for the canonical
quantization scheme and for the expression \eqref{e1} is the existence
of a complete set of classical solutions which have definite positive
or negative frequency for all times. However, such a set of solutions
only exists if we are working on a stationary spacetime, i.e. a
spacetime which admits a global timelike Killing vector
field~\cite{dewitt75}. In our case, there is no such vector field due
to the absence of time translation symmetry; production of
particle-antiparticle pairs manifests itself in an inevitable mixing of positive and negative frequencies at the level of solutions to the field equations.
The requisite methods in this case have been developed in~\cite{git77,fradkin81}.  

Since the asymptotic Hilbert spaces in the remote past and future (if they exist) are different, there are two complete sets of solutions denoted $\big({}\phi_\nu{}^{(\pm)}\big)$ and $\big(\phi_{\nu(\pm)}\big)$, having definite positive/negative frequency at times $t_{\rm in}$ and $t_{\rm out}$, respectively, which are the equivalent of the positive/negative frequency solutions above in the infinite future and past, i.e. in the limits $t_{\rm in}\to-\infty$ and $t_{\rm out}\to+\infty$. The generalization of the expansion into classical solutions \eqref{e1} then reads
\begin{eqnarray}
\i\Delta_c(\vec  x,\vec y)&=&\tau(x^0-y^{0})\, \int_{{\cal C}}\,\dd m(\mu)\ \int_{{\cal C}}\, \dd m(\nu)\ \phi_\mu{}^{(+)}(\vec  x)\,\omega^+(\mu|\nu)\, \phi{}_{\nu(+)}(\vec y)^*\notag\\
&&\ +\, \tau(y^{0}-x^0)\, \int_{{\cal C}}\,\dd m(\mu)\ \int_{{\cal C}}\, \dd m(\nu)\ \phi{}_{\nu(-)}(\vec  x)\,\omega^-(\mu|\nu)\, \phi_\mu{}^{(-)}(\vec y)^* \ . \label{e2}
\end{eqnarray}
Here $\omega^\pm(\mu|\nu)$ is the relative probability for a particle/anti-particle to be scattered by the vacuum in the external field, given by a generalized Wick contraction of creation-annihilation operators on Fock space which appear in the mode expansions of the in- and out- field operators, and which create the in- and out- particle/anti-particle states. For a field theory with a stable vacuum state one has $\omega^\pm(\mu|\nu)=\delta(\mu,\nu)$ and $\phi_\nu^{\,\,(\pm)}=\phi_{\nu(\pm)}$, where $\int_{{\cal C}}\,\dd m(\mu)\ \delta(\mu,\nu)\, f(\mu)=f(\nu)$. This construction determines the propagator uniquely and is equivalent to the definition \eqref{propfeyn1}, but can be quite technically cumbersome to carry out explicitly; hence it is desirable to have another method at hand.

Such an equivalent method, which will prove profitable for us, is the {eigenvalue representation}. Let $\varphi_\lambda(\vec  x)$ be a complete orthonormal set of eigenfunctions of the wave operator $\d_{\vec x}$ of the field theory with eigenvalues $\lambda\in\sigma(\d_{\vec x})$, i.e.
\begin{eqnarray}
\d_{\vec x}\varphi_\lambda(\vec  x)=\lambda\, \varphi_\lambda(\vec  x)
\end{eqnarray}
with
\begin{eqnarray}
\int_{\sigma(\d_{\vec x})}\,\dd\ell(\lambda)\ \varphi_\lambda(\vec x)^*\,\varphi_\lambda(\vec y)=\delta(\vec  x- \vec y)\qquad\text{and}\qquad\int\ \dd^D\vec x\ \varphi_\lambda(\vec x)^* \,\varphi_{\lambda'}(\vec  x)=\delta(\lambda,\lambda'\,) \ ,
\end{eqnarray}
where the measure $\dd\ell(\lambda)$ is discrete measure on the point spectrum and Lebesgue measure on the absolutely continuous spectrum in $\sigma(\d_{\vec x})\subset\C$ with $\int_{\sigma(\d_{\vec x})}\,\dd\ell(\lambda)\ \delta(\lambda,\lambda'\,)\, f(\lambda) = f(\lambda'\,)$.
Contrary to the functions $\phi^{(\pm)}_\nu$ above, these eigenfunctions need not solve the field equations. Decomposing the propagator into these eigenfunctions gives the formal expansion
\begin{eqnarray}
\Delta(\vec  x,\vec y)=\int_{\sigma(\d_{\vec x})}\,\dd\ell(\lambda)\ \varphi_\lambda(\vec x)^*\, \lambda^{-1}\, \varphi_\lambda(\vec y) \ .
\end{eqnarray}

However, the potential poles at $\lambda=0$ make this definition
problematic, which reflects the existence of more than one propagator
for a given field theory. Usually one modifies the denominator by
adding an adiabatic cutoff $\lambda\rightarrow\lambda+\i\epsilon\,
F(\lambda)$ with small $\epsilon>0$ and a function $F:\sigma(\d_{\vec x})\to \R$
on the spectrum of $\d_{\vec x}$, so that
\begin{eqnarray}
\lambda+\i\epsilon\, F(\lambda)\neq0
\label{eigreg1}
\end{eqnarray}
for all $\lambda\in\sigma(\d_{\vec x})$. A Green's function for the partial differential operator $\d_{\vec x}$ is finally obtained by taking the adiabatic limit $\epsilon\rightarrow0^+$. Equivalently, one can regularize the operator $\d_{\vec x}\rightarrow\d^{(\epsilon)}_{\vec x}$ with $\lim_{\epsilon\rightarrow0^+}\, \d^{(\epsilon)}_{\vec x}=\d_{\vec x}$ and solve the equation $
\d^{(\epsilon)}_{\vec x}\Delta^{(\epsilon)}(\vec x,\vec y)=\delta(\vec x-\vec y)$,
where $\lim_{\epsilon\rightarrow0^+}\, \Delta^{(\epsilon)}(\vec x,\vec y)$ is a Green's function of the original wave operator $\d_{\vec x}$.

Hence any well-defined operator which is continuously connected to the original wave operator and has no zero eigenvalues gives rise to a propagator for $\d_{\vec x}$. However, apart from the requirement of absence of zero eigenvalues of $\d^{(\epsilon)}_{\vec x}$ (or equivalently the condition \eqref{eigreg1}), the regularization is arbitrary and different regularizations may lead to different propagators. For example, in the free Klein-Gordon theory the choice of $F(\vec k)$ as a positive constant leads to the Feynman propagator, while $F(\vec k)=2k^0$ yields the retarded propagator. In general, one cannot be sure whether one obtains the causal propagator unless one compares it by hand to the result obtained from \eqref{propfeyn1}. This is the obvious drawback of the eigenvalue method, and while the problem is easily solved in free field theory, it is still unsolved for the general case of an arbitrary propagator and arbitrary external field; in particular, the equivalence of the propagators in the different representations for generic electromagnetic backgrounds is still an open question. 

For the LSZ model we already encountered the two regularized wave operators $\d_{\vec x}^{(\epsilon)}$ given by the $\vt$-regularization and the $\i\epsilon$-prescription. 
The question of which propagator they lead to in the limit $\epsilon\rightarrow0^+$ has been answered for the $\i\epsilon$-prescription for several related models. For the Klein-Gordon field moving in crossed or parallel uniform electric and magnetic fields, or in an electric field with an additional plane wave background, this method gives the causal propagator~\cite{ritus70,ritus78,batalin84}. Since an additional uniform magnetic background should not change the pole structure of the propagator, the $\i\epsilon$-prescription should also give the causal propagator in the background of a pure electric field. In \S\ref{regprop} we will confirm that the $\vt$-regularization also gives the causal propagator in the background of a uniform electric field along one direction.

\subsection{Causal Propagator in Matrix Space}\label{regprop}

Using the ``sum over solutions method'' \eqref{e2}, the causal propagator for a massive complex scalar field of charge $e$ in four dimensions in the background of a constant electric field along one space direction has been calculated in \cite[eq.~(6.2.40)]{fgs91} with the result
\begin{equation}
\begin{aligned}
\Delta_c(\vec x,\vec y)&=\frac{e\, E}{16\pi^2}\, \e^{- \i e\, \vec x_\parallel\cdot\vec E\vec y_\parallel/2}\, \int_0^\infty\, \frac{\dd s}s\ \frac{1}{\sinh(s\,e\, E)}\\
&\qquad\qquad \times\, \exp\Big(\, - \i s\, \mu^2-\frac{\i}4\, e\,
E\,\|\vec x_\parallel-\vec y_\parallel\|_{\rm M}^2\,
\coth(s\,e\,E)+\i\frac{\|\vec x_\bot-\vec y_\bot\|_{\rm E}^2}{4s}\, \Big) \ .\label{propfgs1}
\end{aligned}
\end{equation}
Here we defined $\vec x=(\vec x_\parallel,\vec x_\bot)\in\R^4$ with $\vec x_\bot\in\R^2$ denoting the two space components perpendicular to the electric field and
\begin{eqnarray}
\vec x_\parallel\cdot\vec E\vec y_\parallel:=E\, x_\parallel^\mu\, \epsilon_{\mu\nu}\, y_\parallel^\nu\ ,
\end{eqnarray}
where $\epsilon_{\mu\nu}$ is the rank two antisymmetric tensor with $\epsilon_{01}=1$ and $E>0$ the electric field strength. Below we will start with this four-dimensional wave operator, with the electric part regularized as in \eqref{dresseddef}, and calculate its (unique) propagator. For $\kappa\rightarrow0^+$ we find exact agreement with \eqref{propfgs1} confirming that this is the causal propagator. The calculations performed here using the matrix basis are comparably simple, so that the matrix basis can be alternatively regarded as a powerful computational tool in ordinary field theory.

We begin with some notation and a preliminary result. We define the symmetric
bilinear form $(-,-)_\vt:\R^2\otimes\R^2\rightarrow\C$ for $\vt\in[-\frac\pi2,\frac\pi2]$ by
\begin{eqnarray}
(\vec x,\vec y)_\vt=\cos(\vt)\,(\vec x,\vec y)_{\rm
  E}+\i\sin(\vt)\,(\vec x,\vec y)_{\rm M} =
\mbox{$\half$}\, \e^{\i\vt}\, \big(x^{(\vt)}_+\,y^{(\vt)}_-+x^{(\vt)}_-\, y^{(\vt)}_+ \big) \ ,
\end{eqnarray}
where $(-,-)_{\rm E}$ is the two-dimensional Euclidean scalar product and $(-,-)_{\rm M}$ the two-dimensional hyperbolic inner product. We also define the map $\|-\|_\vt:\R^2\rightarrow\C$ by
\begin{eqnarray}
\|\vec x\|_\vt^2=(\vec x,\vec x)_\vt= \cos(\vt)\, \|\vec x\|^2_{\rm
  E}+\i\sin(\vt)\, \|\vec x\|^2_{\rm M} = \e^{\i\vt}\,
x^{(\vt)}_+\, x^{(\vt)}_-
\end{eqnarray}
with $\|-\|_{\rm E}$ the two-dimensional Euclidean norm (\ref{Enorm}) and $\|-\|_{\rm M}$ the two-dimensional hyperbolic norm (\ref{Mnorm}). For arbitrary two-dimensional vectors $\vec x,\vec y\in\R^2$ we denote as above $\vec x\cdot\vec E\vec y=E\, x^\mu\, \epsilon_{\mu\nu}\, y^\nu$.
In Appendix~\ref{AppPartialSum} we prove
\begin{lemma}\label{partialsum1}
For any $\vec x,\vec y\in\R^2$ and $a\in\C^*$, one has
\begin{eqnarray}
\sum_{n=0}^\infty\,  f_{mn}^{(E_\vt)}(\vec x)\,f_{nm}^{(E_\vt)}(\vec
y)\, a^n &=&  \frac{E\, a^m}{\pi}\, \exp\Big(-\frac{E}2\, \|\vec x-\vec
y\|_\vt^2+(a-1) \, E\, (\vec x,\vec y)_\vt-a\i\vec x\cdot\vec E\vec
y\Big) \label{partialsum1expl} \\
&& \times\, L_m^0\big(E\, \|\vec x-\vec y\|_\vt^2-a\, (1-a^{-1})^2\, E\, (\vec x,\vec y)_\vt+(a-a^{-1})\i\vec x\cdot\vec E\vec y\big)\ .\notag
\end{eqnarray}
\end{lemma}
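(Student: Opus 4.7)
My plan is to compute the partial sum by recognizing it, via the Fock-space representation of the complex Landau basis, as a double Wigner transform in which the sum over $n$ collapses to a one-dimensional Mehler kernel; the remaining Gaussian integrations can then be evaluated in closed form using the classical Hermite-to-Laguerre reduction formula.

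First, I use the Fock-space representation $\vech f_{mn}^{(E_\vt)}=\sqrt{E/(4\pi)}\,\v(\vt)\kb{m}{n}\v(-\vt)$ from \S\ref{genlandau}, together with the Wigner transform formula \eqref{wignertrafo}, to rewrite $f_{mn}^{(E_\vt)}(\vec x)$ and $f_{nm}^{(E_\vt)}(\vec y)$ as integrals over Wigner momenta $k_1,k_2$. Denoting $\vec x=(t_x,x_s)$ and $\vec y=(t_y,y_s)$, the summand contains the Fock-space inner sum
\[
\sum_{n=0}^\infty a^n\,\bra{n}\v(-\vt)\ket{t_x-k_1/2}\bra{t_y+k_2/2}\v(\vt)\ket{n}\ .
\]
Using Hermiticity of $\v(\vt)$, which yields $\bra{q}\v(\vt)\ket{m}=f_m^{(E_\vt)}(q)$ and $\bra{n}\v(-\vt)\ket{q}=f_n^{(E_\vt)}(q)$, this collapses to the complex Mehler kernel
\[
M_{E_\vt}(q_1,q_2;a):=\sum_{n=0}^\infty a^n f_n^{(E_\vt)}(q_1)f_n^{(E_\vt)}(q_2)=\sqrt{\tfrac{E_\vt}{2\pi(1-a^2)}}\,\exp\!\Big(\!-\tfrac{E_\vt[(1+a^2)(q_1^2+q_2^2)-4a\,q_1q_2]}{4(1-a^2)}\Big),
\]
obtained from the classical Mehler generating formula for Hermite polynomials, initially valid for $|a|<1$ and $\Re(E_\vt)>0$, and extended to all $a\in\C^*$ by analytic continuation of the final closed-form expression.

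Second, substituting this closed-form Mehler kernel reduces the partial sum to a double Gaussian integral in $(k_1,k_2)$, weighted by the Hermite factors contained in $f_m^{(E_\vt)}(t_x+k_1/2)$ and $f_m^{(E_\vt)}(t_y-k_2/2)$ and the Wigner phase $\e^{\i E(k_1 x_s+k_2 y_s)/2}$. After completing the square in $(k_1,k_2)$, the shifted integrals are of the classical form
\[
\int\dd s\,\e^{-s^2}H_m(s+\alpha)H_m(s+\beta)=2^m\,m!\,\sqrt\pi\,L_m^0(-2\alpha\beta)\ ,
\]
applied twice after diagonalization of the resulting quadratic form in $(k_1,k_2)$. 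This reduction is precisely what produces the single associated Laguerre polynomial $L_m^0$ on the right-hand side of \eqref{partialsum1expl}; the prefactor $a^m$ comes from combining the $a$-dependent Gaussian shifts with the overall $\sqrt{1-a^2}$ from the Mehler kernel, while the exponential factor in \eqref{partialsum1expl} is the residual quadratic form left after integration.

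The principal obstacle is algebraic rather than conceptual: the three inequivalent bilinear pairings on the right-hand side, $\|\vec x-\vec y\|_\vt^2$, $(\vec x,\vec y)_\vt$ and the antisymmetric Moyal-type term $\i\vec x\cdot\vec E\vec y$, must be recovered from the various contractions between the symmetric Mehler exponent (which produces $\|{-}\|_\vt^2$ and $(-,-)_\vt$ after $k_j\to 0$ shifts) and the Wigner phase $\e^{\i E(k_1 x_s+k_2 y_s)/2}$ (whose crossings with the linear-in-$k_j$ terms in the completed square generate the antisymmetric $\vec x\cdot\vec E\vec y$). Expressing the final Gaussian exponent and the argument of $L_m^0$ in terms of the complex light-cone coordinates $x_\pm^{(\vt)},y_\pm^{(\vt)}$ via the identity $\e^{\i\vt}(x_+^{(\vt)}y_-^{(\vt)}+x_-^{(\vt)}y_+^{(\vt)})=2(\vec x,\vec y)_\vt$ (and its analogue for $\|\vec x-\vec y\|_\vt^2$) produces the claim; special care is needed at $a=\pm 1$, where the $(1-a^2)^{-1/2}$ and $1/(1-a^2)$ factors seem singular but in fact cancel against contributions from the Gaussian integration, consistent with the manifest regularity of the right-hand side at those points.
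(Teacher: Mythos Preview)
Your approach is correct in principle but takes a genuinely different route from the paper. The paper's proof (Appendix~\ref{AppPartialSum}) works entirely in the Laguerre representation: it inserts the explicit form \eqref{genlandau0} for both $f_{mn}^{(E_\vt)}(\vec x)$ and $f_{nm}^{(E_\vt)}(\vec y)$, and then performs the $n$-sum in one stroke using the classical bilinear generating function for Laguerre polynomials,
\[
\sum_{n=0}^\infty n!\,c^n\,L_n^{m-n}(\xi)\,L_n^{m-n}(\eta)=m!\,\e^{c\xi\eta}\,c^m\,L_m^0\big((1-\xi c)(\eta c-1)/c\big)\ ,
\]
followed by elementary algebra in the complex light-cone variables. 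No integration is needed.

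Your route instead unpacks the Laguerre basis back into its Hermite constituents via the Wigner transform, sums over $n$ using the Mehler kernel, and then reconstitutes a Laguerre polynomial through Gaussian integration. This is conceptually illuminating---it explains \emph{why} the Laguerre bilinear identity holds, since that identity is essentially the Wigner transform of Mehler's formula---but it is considerably longer. One imprecision to flag: the Hermite-to-Laguerre reduction $\int\dd s\,\e^{-s^2}H_m(s+\alpha)H_m(s+\beta)\propto L_m^0(-2\alpha\beta)$ is a \emph{one}-dimensional identity, so ``applied twice after diagonalization'' is not quite right. What actually happens is that the first integration (say over $k_1$) is a Gauss--Weierstrass transform of a single $H_m$, which returns a rescaled $H_m$; only the second integration then pairs two $H_m$'s against a Gaussian and produces the single $L_m^0$. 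If you carry this through, the algebra works, but it is a fair bit heavier than the paper's direct appeal to the Laguerre summation formula.
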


Now we determine the propagator of the Klein-Gordon field in four
dimensions coupled to a constant electric field, where the wave
operator parallel to the electric field is given by the
two-dimensional $\vt$-regularized operator
$\left(\Pmm-\mu^2\right)_{\kappa}$. The coordinate vector is again
written as $\vec x=(\vec x_\parallel,\vec x_\bot)\in \R^4$, with $\vec
x_\bot\in\ker(\vec E)$ the components perpendicular to the electric field, and analogously for the momenta $\vec p=(\vec p_\parallel,\vec p_\bot)\in (\R^4)^*$ and derivatives $\p{\mu}=(\partial_\parallel,\partial_\bot)$.

\begin{proposition}
The propagator of the $\vt$-regularized wave operator
$\big(\Kmm-\mu^2\big)_\kappa =\left(\Pmm-\mu^2\right)_{\kappa}+(\i\p{\bot})^2$ coincides in the limit $\kappa\rightarrow0^+$ with the causal propagator \eqref{propfgs1}.
\end{proposition}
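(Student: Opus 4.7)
The plan is to diagonalize the four-dimensional wave operator into parallel and perpendicular parts, expand the parallel part in the complex Landau matrix basis while Fourier transforming the perpendicular part, invert spectrally, and then resum using Lemma~\ref{partialsum1} together with the standard generating function for Laguerre polynomials to land on the Schwinger proper-time representation in \eqref{propfgs1}.

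First I would record that the operator splits as
\[
\big(\Kmm-\mu^2\big)_\kappa+(\i\p{\bot})^2 \;=\; \e^{\i\kappa}\,\P^2\!\big(\mbox{$\tfrac{\pi}{2}-\kappa$}\big)-\e^{-\i\kappa}\mu^2-\Delta_{\bot},
\]
so that on the basis states $f_{mn}^{(E_\vartheta)}(\vec x_\parallel)\, \e^{\i\vec p_\bot\cdot\vec x_\bot}$ (with $\vt=\tfrac\pi2-\kappa$) the operator acts by multiplication with the eigenvalue $4\i E\,(m+\tfrac12)+\vec p_\bot^{\,2}-\e^{-\i\kappa}\mu^2$, since $\e^{\i\kappa}\lambda_m^{(E_\vt)}=4\i E(m+\tfrac12)$. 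Using the completeness and bi-orthogonality of the complex Landau system developed in \S\ref{genlandau}, the Green's function then has the formal eigenfunction expansion
\[
\Delta^{(\kappa)}(\vec x,\vec y)=\sum_{m,n=0}^{\infty}\int\frac{\dd^2\vec p_\bot}{(2\pi)^2}\,\frac{f_{mn}^{(E_\vartheta)}(\vec x_\parallel)\,f_{nm}^{(E_\vartheta)}(\vec y_\parallel)\,\e^{\i\vec p_\bot\cdot(\vec x_\bot-\vec y_\bot)}}{4\i E(m+\tfrac12)+\vec p_\bot^{\,2}-\e^{-\i\kappa}\mu^2}.
\]
That the $y$-factor is $f_{nm}^{(E_\vartheta)}$ rather than its conjugate follows from the projector property \eqref{genprojector} and \eqref{fmnconj}, which is precisely what makes Lemma~\ref{partialsum1} applicable.

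Next I would perform the $n$-sum by applying Lemma~\ref{partialsum1} at $a=1$, collapsing it to
\[
\sum_{n=0}^{\infty} f_{mn}^{(E_\vartheta)}(\vec x_\parallel)\,f_{nm}^{(E_\vartheta)}(\vec y_\parallel) \;=\; \frac{E}{\pi}\,\e^{-\tfrac{E}{2}\|\vec x_\parallel-\vec y_\parallel\|_\vartheta^{2}\,-\,\i\vec x_\parallel\cdot\vec E\vec y_\parallel}\,L_m\!\big(E\,\|\vec x_\parallel-\vec y_\parallel\|_\vartheta^{2}\big).
\]
Then I would introduce Schwinger's proper-time representation $1/A=\int_0^\infty\dd s\,\e^{-sA}$ for the denominator (legitimate because the imaginary part of $A$ is strictly positive for $\kappa>0$, $m\ge 0$), factor out the $p_\bot^{\,2}$-piece, and carry out the Gaussian integration over $\vec p_\bot$ in the standard way. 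The remaining $m$-sum becomes $\sum_m \e^{-4Es\i(m+\tfrac12)}\,L_m(z)$, which is evaluated by the generating function $\sum_m L_m(z)\,t^m=(1-t)^{-1}\e^{-zt/(1-t)}$ at $t=\e^{-4\i Es}$; the resulting trigonometric combinations $1/(1-\e^{-4\i Es})$ reassemble into $1/\sinh$ and $\coth$ functions, after the rescaling $s\mapsto s/(eE)$ (with the convention $e=1$ here).

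Finally I would take the limit $\kappa\to 0^+$, where $\|\vec x_\parallel-\vec y_\parallel\|_\vartheta^{2}\to\i\|\vec x_\parallel-\vec y_\parallel\|_{\mathrm M}^{2}$ by definition, so that the position-space structure becomes exactly the one displayed in \eqref{propfgs1}; the phase $\e^{-\i\vec x_\parallel\cdot\vec E\vec y_\parallel}$ survives trivially, and the Gaussian from the $\vec p_\bot$-integration produces the $\e^{\i\|\vec x_\bot-\vec y_\bot\|_{\rm E}^2/(4s)}/s$ factor. I expect the main obstacle to be bookkeeping: tracking the correct signs, phases, and factors of $2$ through the analytic continuation so that the hyperbolic functions emerge with the right arguments, and justifying the interchange of summation and integration (in particular the convergence of the $m$-sum for $\kappa>0$ and its well-defined $\kappa\to 0^+$ limit as an oscillatory proper-time integral). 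Once those are in place, the match with \eqref{propfgs1} is termwise and identifies the $\vt$-regularized propagator with the causal one.
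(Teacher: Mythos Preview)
Your proposal is correct and follows essentially the same route as the paper: spectral expansion in the complex Landau basis times plane waves, Schwinger parametrization, Lemma~\ref{partialsum1} at $a=1$ for the $n$-sum, the Laguerre generating function for the $m$-sum, and a Gaussian integral over the perpendicular momenta. The only differences are cosmetic --- the paper introduces the Schwinger parameter before rather than after the $n$-sum, and uses the form $1/a=-\i\int_0^\infty\dd s\,\e^{\i sa}$ valid for $\Im(a)>0$ (which is the correct version here, since your denominator has positive imaginary part but not positive real part) --- so the bookkeeping you anticipate is exactly what remains.
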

\begin{proof}
The inverse of $\big(\Kmm-\mu^2\big)_\kappa$ is given by
\begin{eqnarray}
\Delta^{(\kappa,1)}(\vec x,\vec y)=\bra{\vec x}\frac{1}{\left(\Pmm-\mu^2 \right)_{\kappa}+(\i\p{\bot})^2}\ket{\vec y}\ ,
\end{eqnarray}
where $\big(\Kmm-\mu^2\big)_\kappa=\e^{\i\kappa}\, \P{}^2(\frac\pi2-\kappa)-\e^{-\i\kappa}\, \mu^2+(\i\p{\bot})^2$ with $\kappa>0$ has the eigenvalue equation
\begin{eqnarray}
\big(\Kmm-\mu^2\big)_\kappa \big( f_{mn}^{(E_\vt)}(\vec x_\parallel)\, \e^{-\i\vec p_\bot\cdot\vec x_\bot} \big) =
\Big(4 \i E\,\big(m+\mbox{$\half$}\big)+\|\vec p_\bot\|_{\rm
  E}^2-\e^{-\i\kappa}\, \mu^2\Big) \, f_{mn}^{(E_\vt)}(\vec
x_\parallel)\, \e^{-\i\vec p_\bot\cdot\vec x_\bot} \notag \\
\label{Kmmeigeq}\end{eqnarray}
with $\vt=\frac\pi2-\kappa$. We write $\mu_\kappa^2:=\e^{-\i\kappa}\, \mu^2$ for brevity, remembering that it has a small negative imaginary part. Using the identity
\begin{eqnarray}
\frac{1}a=-\i\int_0^\infty\, \dd s\ \e^{\i s\, a} \qquad \mbox{for} \quad \Im(a)>0\ ,\label{schwpara0}
\end{eqnarray}
we obtain
\begin{eqnarray}
\Delta^{(\kappa,1)}(\vec x,\vec y)&=&-\i\int_0^\infty\, \dd s\ \int\, \frac{\dd^2\vec p_\bot}{(2\pi)^2}\ \sum_{m,n=0}^\infty\, f^{(E_\vt)}_{mn}(\vec x_\parallel)\, f^{(E_\vt)}_{nm}(\vec y_\parallel) \\
&&\qquad \qquad \times\, \e^{-\i s\, \mu_\kappa^2}\, \e^{-4s\, E\,
  (m+\half)}\, \e^{\i s\, \|\vec p_\bot\|_{\rm E}^2-\i(\vec x_\bot-\vec y_\bot)\cdot\vec p_\bot}\ . \notag
\end{eqnarray}
The sum over $n$ is given by Lemma~\ref{partialsum1} with $a=1$, and
the resulting sum over $m$ follows from the identity~\cite[eq.~(48.4.1)]{han75}
\begin{eqnarray}
\e^{-y/2}\, \sum_{m=0}^\infty\,  L^0_m(y)\, t^m = \frac{1}{1-t}\, \exp\Big(\,\frac{y}2\, \frac{t^{1/2}+t^{-1/2}}{t^{1/2}-t^{-1/2}}\,\Big) \qquad \mbox{for} \quad |t|<1
\label{msumHan}\end{eqnarray}
which yields
\begin{eqnarray}
\Delta^{(\kappa,1)}(\vec x,\vec y)&=& -\i\frac{E}{2\pi}\, \e^{-\i\vec
  x_\parallel\cdot\vec E\vec y_\parallel}\, \int_0^\infty\, \dd s\
\frac{\exp\Big(-\i s\, \mu_\kappa^2-\frac{1}2\,
E\, \|\vec x_\parallel-\vec y_\parallel\|^2_\vt\, \coth(2s\,
E)\Big)}{\sinh(2s\, E)}\notag \\ && \qquad \qquad \times \ \int\, \frac{\dd^2\vec
  p_\bot}{(2\pi)^2}\ \e^{\i s\, \|\vec p_\bot\|_{\rm E}^2-\i(\vec x_\bot-\vec y_\bot)\cdot\vec p_\bot}\ .
\end{eqnarray}
The integration over the perpendicular momenta can now be performed by using
\begin{eqnarray}
\int\, \dd p\ \e^{\i s\, p^2-\i(x-y)\, p}=\sqrt{\frac{\i\pi}s}\, \e^{\i(x-y)^2/4s} \ ,
\end{eqnarray}
to get
\begin{eqnarray}
\Delta^{(\kappa,1)}(\vec x,\vec y) &=&\frac{E}{8\pi^2}\, \e^{-\i\vec x_\parallel\cdot\vec E\vec y_\parallel}\, \int_0^\infty\, \frac{\dd s}s\, \frac{1}{\sinh(2s\, E)}\\
&&\qquad\qquad \times\, \exp\Big(\,-\i s\,\mu_\kappa^2-\frac{1}2\,
E\, \|\vec x_\parallel-\vec y_\parallel\|^2_\vt\, \coth(2s\,
E)+\i\frac{\|\vec x_\bot-\vec y_\bot\|_{\rm E}^2}{4s}\, \Big) \ . \notag
\end{eqnarray}
Taking the limit $\kappa\rightarrow0^+$, thus $\vt\rightarrow\frac\pi2$, and substituting $E\rightarrow e\, E/2$ to conform to the conventions of \cite{fgs91}, this result is identical to \eqref{propfgs1}.
\end{proof}

The eigenfunctions for the full regularized wave operator
$\big(\Kmm-\mu^2\big)_\kappa$ factorize into components
perpendicular to the electric field times the eigenfunctions of
$(\Pmm-\mu^2)_\kappa$. Since the eigenvalues of the perpendicular
momentum operators do not produce new pole singularities, we can neglect
them in this calculation and also in the calculation leading to
\eqref{propfgs1}. This result thus easily carries over to the
two-dimensional case confirming that the $\vt$-regularization imposes
causality of the critical LSZ model. We expect that the $\vt$-regularization also leads to the causal propagators for $\sigma\neq1$.

The Schwinger parameter $s>0$ introduced in \eqref{schwpara0} only allows for the regularizations $\vt>0$ and $\mu^2-\i\epsilon$ because of the requirement $\Im(a)>0$, where the latter regularization is normally associated to the Feynman boundary condition on the propagator. The other choices $\vt<0$ and $\mu^2+\i\epsilon$ can be applied by using
\begin{eqnarray}
\frac{1}a=\i\int_{-\infty}^0\, \dd s\ \e^{\i s\, a}\qquad \text{for} \quad \Im(a)<0\ .
\end{eqnarray}
The regularization $\mu^2+\i\epsilon$ is known as the Dyson boundary condition, which leads to an anti-causal propagator where anti-particles travel forward and particles backward in time. This strongly suggests that the regularization $\vt<0$ leads to the Dyson propagator. 

The regularization of the mass $\mu_\kappa^2= \e^{-\i\kappa}\, \mu^2$ is actually irrelevant for the analysis above. Its only function is to provide a continuous interpolation between the hyperbolic and Euclidean wave operators with the help of the parameter $\vt$ alone, without the need to keep track of additional minus signs in front of the mass term. This means that the interpretation in terms of Feynman/Dyson propagators for the cases $\vt\rightarrow\pm \, \frac\pi2$ still holds by regularizing only the operator~$\Pmm$. 

\section{Quantum Duality}\label{quantum1}

In this section we will treat the problem of implementing duality covariance at quantum level for our field theories on Minkowski space. The $\vt$-regularization allows us to regularize the covariant field theories such that the duality is preserved at quantum level. This is done in the same spirit as in \cite{ls02,fs08}, with the $\vt$-regularization now being the only new ingredient. In the following this will be demonstrated for the two-dimensional Grosse-Wulkenhaar model. The more general case of the LSZ model is treated in exactly the same way.

We only need to address how the $\vt$-regularization affects the behaviour of quantities under the duality transformation. The regularized propagator with $\vt=\frac\pi2-\kappa>0$ reads
\begin{eqnarray}
\Delta^{(\kappa)}(\vec x,\vec y)=\bra{\vec x}\big(\mbox{$\half$}\, \Pmm+\mbox{$\half$}\, \Ptm-\mu^2\big)^{-1}_\kappa\ket{\vec y} = \sum_{m,n}\, \frac{f_{mn}^{(E_\vt)}(\vec x)\, f_{nm}^{(E_\vt)}(\vec y)}{2\i E\, \left(m+n+1\right)-\e^{-\i\kappa}\, \mu^2}\ .
\end{eqnarray}
In Appendix~\ref{Fourier} we show that the Fourier transformation of the matrix basis functions is given by
\begin{eqnarray}
\f\big[f_{mn}^{(E_\vt)}\big](\vec k)=f_{nm}^{(1/E_\vt)}(\vec
k)=\frac{(-\i)^{m-n}}{E}\, f_{mn}^{(E_{\vt})}\big(\vec E^{-1}\vec k \big)
\end{eqnarray}
with $\vec E^{-1}\vec k=-E^{-1}\,
(k^1,k^0)$.\footnote{There is a subtle difference here between the
  Euclidean and hyperbolic cases. Contrary to the ordinary Landau
  wavefunctions in Euclidean space, the (unscaled) Fourier transforms
  of the complex Landau wavefunctions have interchanged indices $m\leftrightarrow n$ and a
  reflected regularization parameter $\vt\rightarrow-\vt$. The
  interchange is equivalent to time reversal (or parity), see
  Appendix~\ref{Fourier}. The reflection corresponds to charge
  conjugation, i.e. exchange of particles with anti-particles; this follows from the results of \S\ref{regprop}, where the regularization $\vt>0$ is identified with the Feynman boundary condition and $\vt<0$ with the Dyson boundary condition. The specific rescalings of momenta from $(\R^2)^*$ to $\R^2$ in both cases, which are formally identical but differ by the signature of the metric applied, compensates for this difference.} Since
\begin{equation}
\f\big[\big(\P{}^2(\vt)+\Pt(\vt)\big) f^{(E_\vt)}_{mn} \big](\vec k)=4E_\vt\, \left(m+n+1\right)\,\f\big[f^{(E_\vt)}_{mn} \big](\vec k)\ ,
\end{equation}
we find that Fourier transformation relates the propagator in position space to the momentum space propagator even in the regularized case as
\begin{eqnarray}
(\f\otimes \f)\big[\Delta^{(\kappa)}\big](\vec k,\vec
p)=\frac{1}{E^2}\, \Delta^{(\kappa)}\big(\vec E^{-1}\vec k\,,\, \vec
E^{-1}\vec p \big)\ .\label{lsdualprop}
\end{eqnarray}
This relation just reflects the classical duality covariance for $g=0$.

Analogously to the Euclidean case~\cite{ls02}, the UV/IR-symmetric regularization now amounts to cutting off the matrix element sums at some finite rank $N$ by modifying the regularized position space propagator to
\begin{eqnarray}
\Delta^{(\kappa)}_\Lambda(\vec x,\vec y)=\bra{\vec x}\big(\mbox{$\half$}\, \Pmm+\mbox{$\half$}\, \Ptm-\mu^2\big)^{-1}_\kappa \, L\big(\,\Lambda^{-2}\, \big|\P{}^2(\vt)+\Pt(\vt)\big|\, \big)\ket{\vec y}\ , \label{regprop1}
\end{eqnarray}
where $\Lambda\in\R_+$ is a cutoff parameter, and $L:\R_+\to[0,1]$ a smooth cutoff function which is monotonically decreasing with $L(z)=1$ for $z<1$ and $L(z)=0$ for $z>2$.  We adjust the matrix basis functions so as to diagonalize the regulated Grosse-Wulkenhaar propagator
\begin{eqnarray}
\Delta_{\Lambda|mn;kl}^{(\kappa)}&=&\int\, \dd^2\vec x\ f_{mn}^{\kappa}(\vec x)\, \big(\mbox{$\half$}\, \Pmm+\mbox{$\half$}\, \Ptm-\mu^2\big)^{-1}_\kappa\, L\big(\, \Lambda^{-2}\,\big|\P{}^2(\vt)+\Pt(\vt)\big|\, \big)\, f_{kl}^{\kappa}(\vec x)\notag\\[4pt]
&=&\frac{\delta_{ml}\,\delta_{nk}}{2\i E\, (m+n+1)-\e^{-\i\kappa}\, \mu^2}\, L\big(4 \Lambda^{-2}\, E\, (m+n+1)\big)\ . \label{regprop2}
\end{eqnarray}
The interaction vertices in the matrix space representation are now quite complicated; they are proportional to
\begin{eqnarray}
v^\kappa(m_1,n_1;\dots ;m_4,n_4)= \int\,\dd^2\vec x\ \big(f_{m_1n_1}^{\kappa}\star_\theta f_{m_2n_2}^{\kappa}\star_\theta f_{m_3n_3}^{\kappa}\star_\theta f_{m_4n_4}^{\kappa}\big)(\vec x)\label{intvertex0}
\end{eqnarray}
with $\theta\neq2/E$ in general. Since for $\kappa>0$ the complex Landau wavefunctions $f_{mn}^\kappa$ are elements of the Gel'fand-Shilov spaces $\s_\alpha^\alpha(\R^2)$ with $\alpha\geq\frac12$, which are closed under multiplication of functions with the star-product, the interaction vertex \eqref{intvertex0} is well-defined. 

Feynman diagrams can now be obtained by taking suitable combinations of derivatives of the partition function (\ref{ZJGWmatrix}) with respect to the external sources involving the regularized propagator. Denoting 
\begin{eqnarray}
\Delta_{\Lambda|mn;kl}^{(\kappa)}=:\delta_{mk}\,\delta_{nl}\ \Delta_\Lambda^{(\kappa)}(m,n)\ ,
\end{eqnarray}
they have the schematic form
\begin{eqnarray}
\sum_{n_1,m_1,\ldots,n_K,m_K}\ \prod_{k=1}^K\, \Delta_\Lambda^{(\kappa)}(m_k,n_k)\ (\cdots)\ ,
\end{eqnarray}
where $(\cdots)$ denotes the contributions from products of the noncommutative interaction vertices (\ref{intvertex0}) and combinatorial factors. Since the propagator $\Delta_\Lambda^{(\kappa)}(m,n)$ is nonzero only for $4E\, (m+n+1)<2\Lambda^2$, which at finite $\Lambda$ is only true for a finite number of distinct values of $(m,n)\in\N_0^2$, every Feynman amplitude is represented by a finite sum and thus constitutes well-defined duality covariant Green's functions in the matrix basis; this circumvents the issue of the appropriate test function space for the time being. By multiplying these expressions with $f^{\kappa}_{m_in_i}(\vec x_i)$ for $i=1,\ldots,M$, we get back the position space Green's functions with $M$ external legs by summing over all $m_i,n_i$. They are also well-defined and duality covariant, since they are built from finite sums of well-defined covariant objects. This establishes the quantum duality in Minkowski space for the case $\kappa>0$. 

To prove the duality covariance at $\kappa=0$ in the same manner as
above, one has to ensure that the interaction vertex
(\ref{intvertex0}) away from the self-dual point is well-defined. In
the absence of further analysis, the $\vt$-regularization should be
kept while the matrix cutoff is removed, and all summations and
integrations have been performed. Of course the limit $\Lambda\rightarrow\infty$ can still be ill-defined and may require renormalization; removing this regularization requires a good decay behaviour of the matrix space propagator for large values of its indices, see \S\ref{ren}. In addition, the results from \S\ref{genoscbasis0} are not able to exclude the possibility that even at finite $\kappa>0$ there might be extra divergences at $\Lambda\rightarrow\infty$ if we work in matrix space, stemming from the complex matrix basis itself. This, however, does not affect the duality covariance of the quantum field theory, which has been achieved for the Green's functions in position space through the regularization of the propagators in \eqref{regprop1}. This result is independent of the matrix basis. 

\section{Asymptotic Analysis of Propagators}\label{ren}

One of the most intriguing features of Euclidean duality covariant
field theories is their renormalizability. We will not attempt to
prove here the renormalizability of their Minkowski space
counterparts, but start this program by deriving their propagators in
position and matrix space representations, and studying their asymptotics. We begin by extending the formulas given in \cite{grvt06} to the hyperbolic setting, giving the propagators for the general LSZ models in generic $D=2n$ spacetime dimensions in the position and matrix bases.

\subsection{Position Space Representation\label{proppos}}

In the notations of \S\ref{higher2} and \S\ref{regprop}, the main result from which all causal propagators in Minkowski space and their Euclidean counterparts can be derived is
\begin{proposition}\label{posprop}
The propagator of the regularized LSZ model in {$D=2n$ spacetime dimensions} is given by 
\begin{equation}
\begin{aligned}
\Delta^{(\kappa,\sigma)}(\vec x,\vec y)=& -\i\e^{-\i\vt}\, \frac{E}{2\pi}\, \int_0^\infty\, \dd s\ \frac{\e^{-s\, \mu_\kappa^2}}{\sinh(2s\, E_{-\vt})}\, \exp\Big( -\frac{\sinh(2s\, \tilde E_{-\vt})}{\sinh(2s\, E_{-\vt})}\, \i\vec x_0\cdot\vec E\vec y_0\Big) \\
&\times\, \exp\Big(-\half\, \coth(2s\, E_{-\vt})\, E\, \big(\|\vec x_0\|^2_\vt+\|\vec y_0\|^2_\vt \big)+\frac{\cosh(2s\, \tilde E_{-\vt})}{\sinh(2s\, E_{-\vt})}\, E\, (\vec x_0,\vec y_0)_\vt\Big) \\
&\times\, \prod_{k=1}^{n-1}\, \frac{B_k}{2\pi}\, \frac{1}{\sinh(2s\, B_k)}\, \exp\Big(-\frac{\sinh(2s\, \tilde B_k)}{\sinh(2s\, B_k)}\, \i\vec x_k\cdot\vec B_k\vec y_k\Big) \\
&\times\, \exp\Big(-\half\, \coth(2s\, B_k)\, B_k\, \big(\|\vec x_k\|^2_{\rm E}+\|\vec y_k\|^2_{\rm E} \big)+\frac{\cosh(2s\, \tilde B_k)}{\sinh(2s\, B_k)}\, B_k\, (\vec x_k,\vec y_k)_{\rm E}\Big)
\end{aligned}
\label{propLSZgen}\end{equation}
with $\vt=\frac\pi2-\kappa>0$, $\mu_\kappa^2=\e^{-\i\kappa}\, \mu^2$, $\tilde E=(2\sigma-1)\, E$ and $\tilde B_k=(2\sigma-1)\, B_k$.
\end{proposition}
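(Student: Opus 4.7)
The plan is to reduce the computation to the two-dimensional case via the block decomposition of \S\ref{higher2} and then evaluate the two-dimensional matrix element using the complex Landau basis together with the closed-form summation identities that the paper has already established.

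First I would exploit the fact that the full $\vt$-regularized wave operator splits by \eqref{ptheta2} into a two-dimensional Minkowski block $\sigma\,e^{i\kappa}\P{}^2(\tfrac\pi2-\kappa)+(1-\sigma)\,e^{i\kappa}\Pt(\tfrac\pi2-\kappa)$ acting on the coordinates $\vec x_0$, plus $n-1$ Euclidean blocks $\sigma(\Pii)_k+(1-\sigma)(\Pti)_k$ acting on $\vec x_k$, with an overall $-e^{-i\kappa}\mu^2$. Since all these blocks mutually commute, applying the Schwinger parametrization $1/a=-i\int_0^\infty ds\,e^{isa}$ (valid once $\Im a>0$, which is guaranteed here by $\kappa>0$) gives
\begin{equation}
\Delta^{(\kappa,\sigma)}(\vec x,\vec y)=-i\int_0^\infty\!ds\;e^{-is\mu_\kappa^2}\;K_0(s;\vec x_0,\vec y_0)\;\prod_{k=1}^{n-1}K_k(s;\vec x_k,\vec y_k),
\end{equation}
where $K_k(s;\cdot,\cdot)=\bra{\vec x_k}\exp(is[\sigma(\Pii)_k+(1-\sigma)(\Pti)_k])\ket{\vec y_k}$ for $k\ge1$ and $K_0$ is the analogous Minkowski heat kernel. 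The Euclidean factors $K_k$ are exactly the heat kernels computed in~\cite{grvt06}, and produce the last two lines of \eqref{propLSZgen}. The remaining work is to compute $K_0$.

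For $K_0$ I would expand in the complex Landau basis $f_{mn}^{(E_\vt)}$ of \S\ref{genlandau}, which at the self-dual point diagonalises both $\P{}^2(\vt)$ and $\Pt(\vt)$ with eigenvalues $4E_\vt(m+\tfrac12)$ and $4E_\vt(n+\tfrac12)$, respectively. Using the bi-orthogonality of the basis together with the projector property \eqref{projprop1}, the kernel becomes
\begin{equation}
K_0(s;\vec x_0,\vec y_0)=\sqrt{\tfrac{E}{4\pi}}\sum_{m,n=0}^\infty f_{mn}^{(E_\vt)}(\vec x_0)\,f_{nm}^{(E_\vt)}(\vec y_0)\,\exp\!\bigl[-4sE\bigl(\sigma(m+\tfrac12)+(1-\sigma)(n+\tfrac12)\bigr)\bigr],
\end{equation}
where the factor $e^{i\kappa}\,e^{i(\pi/2-\kappa)}=i$ in the eigenvalues has converted the oscillatory exponential into a damped one in $s$. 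I would then collect the $n$-dependence as $a^n$ with $a=\exp(-4sE(1-\sigma))$ and apply Lemma~\ref{partialsum1} directly to perform the $n$-sum. The result is an exponential in the $\|\cdot\|_\vt$ and $(\cdot,\cdot)_\vt$ bilinears of $\vec x_0-\vec y_0$, multiplied by $L_m^0$ evaluated at a Laguerre argument that depends on $a$. The remaining sum over $m$ is a generating function in $t=a\,e^{-4sE}=\exp(-4sE(2-\sigma))$, and is carried out by the identity \eqref{msumHan}. Regrouping the $\sigma$-dependence via the natural combinations $E_{-\vt}=E\,e^{-i\vt}$ and $\tilde E_{-\vt}=(2\sigma-1)\,E\,e^{-i\vt}$ should convert the exponentials of $\sigma$ into $\cosh(2s\tilde E_{-\vt})$, $\sinh(2s\tilde E_{-\vt})$ and $\coth(2sE_{-\vt})$, reproducing the first two lines of \eqref{propLSZgen}.

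The main obstacle is the bookkeeping in this last step: because $\sigma$ distinguishes the roles of $m$ and $n$, the natural variable in Lemma~\ref{partialsum1} is $a\neq1$, and the arguments of the Laguerre polynomials mix $\|\vec x_0-\vec y_0\|_\vt^2$, $(\vec x_0,\vec y_0)_\vt$ and $\vec x_0\cdot\vec E\vec y_0$ in a $\sigma$-dependent way. Matching the resulting combination of exponentials against the hyperbolic functions of $2sE_{-\vt}$ and $2s\tilde E_{-\vt}$ will require the algebraic identities $\cosh X\pm\sinh X=e^{\pm X}$ and careful use of the relation $\tilde E_{-\vt}+E_{-\vt}=2\sigma E_{-\vt}$ to recognise the geometric combinations appearing in \eqref{propLSZgen}. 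Once these identifications are made, multiplying in the Euclidean heat kernels $K_k$ for $k\ge1$ (which have exactly the same shape but with $E_{-\vt}\to B_k$, $\tilde E_{-\vt}\to\tilde B_k$, and $(\cdot,\cdot)_\vt\to(\cdot,\cdot)_{\rm E}$) yields the claimed formula.
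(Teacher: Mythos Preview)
Your approach is essentially the same as the paper's proof in Appendix~\ref{APPpositionprop}: Schwinger parametrization, expansion in the complex Landau basis, the $n$-sum via Lemma~\ref{partialsum1}, the $m$-sum via \eqref{msumHan}, and then hyperbolic identities to reorganise the $\sigma$-dependence into $\cosh(2s\tilde E_{-\vt})$ and $\sinh(2s\tilde E_{-\vt})$; the paper treats the Euclidean blocks by the identical computation with ordinary Landau wavefunctions rather than citing~\cite{grvt06}, but this is cosmetic.

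One bookkeeping slip worth fixing: after applying Lemma~\ref{partialsum1} with $a=\e^{-4sE(1-\sigma)}$, the lemma already contributes a factor $a^m$, so combining with the leftover $\e^{-4sE\sigma m}$ gives $t=\e^{-4sE}$, independent of $\sigma$, not $t=\e^{-4sE(2-\sigma)}$. The $\sigma$-dependence sits entirely in the exponential prefactors and in the argument of $L_m^0$ coming from Lemma~\ref{partialsum1}; it is precisely this dependence that, after the $m$-sum, generates the $\tilde E_{-\vt}=(2\sigma-1)E_{-\vt}$ combinations. Also, the diagonalisation by $f_{mn}^{(E_\vt)}$ holds for any $\theta$, not just at the self-dual point, so that qualifier is unnecessary here.
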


The proof of Proposition~\ref{posprop} is found in Appendix~\ref{APPpositionprop}. We can now read off the causal propagators for the four-dimensional LSZ and Grosse-Wulkenhaar models. Since $(-,-)_{\pi/2}=\i(-,-)_{\rm M}$ and thus $\|-\|_{\pi/2}^2=\i\|-\|_{\rm M}^2$, one finds
\begin{cor}
The causal propagator of the LSZ model for generic $\sigma\in[0,1]$ in four-dimensional Minkowski space is given by 
\begin{eqnarray}
\Delta^{(0,\sigma)}(\vec x,\vec y)&=&-\frac{\i E\, B}{(2\pi)^2}\, \int_0^\infty\, \dd s\ \frac{\e^{-s\, \mu^2- A_{\rm M}-A_{\rm E}}}{\sin(2s\, E)\, \sinh(2s\, B)} \notag\\
&&\times\, \exp\Big(-\frac{\sin(2s\, \tilde E)}{\sin(2s\, E)}\, \i\vec x_0\cdot\vec E\vec y_0-\frac{\sinh(2s\, \tilde B)}{\sinh(2s\, B)}\, \i\vec x_1\cdot\vec B\vec y_1\Big)
\end{eqnarray}
with
\begin{eqnarray}
A_{\rm M} &=& -\frac E2\, \cot(2s\, E)\, \left(\|\vec x_0\|_{\rm M}^2+\|\vec y_0\|_{\rm M}^2\right)+\frac{\cos(2s\, \tilde E)}{\sin(2s\, E)}\, E\,  (\vec x_0,\vec y_0)_{\rm M}^2 \ , \notag \\[4pt]
A_{\rm E}&=& \frac B2\, \coth(2s\, B)\, \left(\|\vec x_1\|_{\rm E}^2+\|\vec y_1\|_{\rm E}^2\right)-\frac{\cosh(2s\, \tilde B)}{\sinh(2s\, B)}\, B\, (\vec x_1,\vec y_1)_{\rm E}^2\ .
\end{eqnarray}
\end{cor}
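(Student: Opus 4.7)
The plan is to obtain the corollary as a direct specialization of Proposition~\ref{posprop} to the four-dimensional case ($n=2$, so one electric block and one magnetic block with $B_1=B$, $\tilde B_1=\tilde B$) followed by the pointwise limit $\kappa\to 0^+$, i.e.\ $\vt\to\frac\pi2^-$. First I would substitute $n=2$ into \eqref{propLSZgen}, observing that the product over $k$ collapses to a single factor depending on $(\vec x_1,\vec y_1)$, $B$ and $\tilde B$; this factor contains only real hyperbolic functions and survives the limit unchanged, producing the $A_{\rm E}$ contribution together with the prefactor $B/(2\pi\sinh(2sB))$ and the phase involving $\sinh(2s\tilde B)/\sinh(2sB)$. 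The mass regulator gives $\mu_\kappa^2\to \mu^2$ trivially.

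The substantive content is the analytic continuation of the electric block. Recall $E_\vt=E\,\e^{\i\vt}$, so at $\vt=\frac\pi2$ one has $E_{-\vt}\to -\i E$ and $\tilde E_{-\vt}\to -\i\tilde E$. Using the identities $\sinh(-\i z)=-\i\sin z$, $\cosh(-\i z)=\cos z$ and $\coth(-\i z)=\i\cot z$, I would convert each hyperbolic function of $2sE_{-\vt}$ or $2s\tilde E_{-\vt}$ in the electric block of \eqref{propLSZgen} into the corresponding trigonometric function of $2sE$ or $2s\tilde E$. Combined with the limits $\|\vec x_0\|_\vt^2\to \i\|\vec x_0\|_{\rm M}^2$ and $(\vec x_0,\vec y_0)_\vt\to \i(\vec x_0,\vec y_0)_{\rm M}$ given in the text, the quadratic form in the exponent becomes precisely $-A_{\rm M}$ after extracting the overall $\i$ from $\|-\|_{\pi/2}^2$ and $(-,-)_{\pi/2}$ against $\coth$ and $\cosh/\sinh$; likewise the linear-in-$\vec E$ phase reduces to $-\sin(2s\tilde E)/\sin(2sE)\cdot \i\vec x_0\cdot\vec E\vec y_0$.

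Finally I would collect the overall constant. The proposition's prefactor is $-\i\,\e^{-\i\vt}\,E/(2\pi)$, which at $\vt=\frac\pi2$ becomes $-\i\cdot(-\i)\cdot E/(2\pi)=-E/(2\pi)$; the reciprocal of the $\sinh$ then supplies an additional factor $\i/\sin(2sE)$, yielding $-\i E/(2\pi\sin(2sE))$. Multiplied by $B/(2\pi\sinh(2sB))$ this gives the stated $-\i EB/(2\pi)^2\cdot 1/(\sin(2sE)\sinh(2sB))$. I do not expect any genuine obstacle beyond bookkeeping of phases and signs; the only subtle point is that the limit is taken formally at the level of the integrand, since at $\kappa=0$ the proper-time integral develops the standard string of poles of $1/\sin(2sE)$ on the positive real axis, and convergence must be read as inherited from the $\kappa>0$ regularization (this matches the causal boundary prescription identified in \S\ref{regprop}).
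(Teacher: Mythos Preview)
Your proposal is correct and is exactly the approach taken in the paper: the corollary is simply ``read off'' from Proposition~\ref{posprop} by setting $n=2$ and taking $\vt\to\frac\pi2$, using $(-,-)_{\pi/2}=\i(-,-)_{\rm M}$ and $\|-\|_{\pi/2}^2=\i\|-\|_{\rm M}^2$ together with the standard hyperbolic-to-trigonometric identities for purely imaginary argument. Your tracking of the overall prefactor and of the phase conventions is accurate, and your remark about the formal nature of the $\kappa\to0^+$ limit at the level of the integrand is in keeping with the paper's treatment.
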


\begin{cor}
The causal propagator of the four-dimensional critical LSZ model in Minkowski space is given by
\begin{eqnarray}
\Delta^{(0,1)}(\vec x,\vec y)&=&-\frac{\i E\, B}{(2\pi)^2}\, \e^{-\i\vec x_0 \cdot\vec E\vec y_0-\i\vec x_1 \cdot\vec B\vec y_1}\, \int_0^\infty\, \dd s\ \frac{\e^{-s\,\mu^2}}{\sin(2s\, E)\, \sinh(2s\, B)} \\
&&\times\, \exp\big(\, \mbox{$\half$}\, E\, \|\vec x_0-\vec y_0\|^2_{\rm M}\, \cot(2s\, E)-\mbox{$\half$}\, B\, \|\vec x_1-\vec y_1\|^2_{\rm E}\, \coth(2s\, B)\, \big)\ .\notag
\label{LSZpossigma1}\end{eqnarray}
\end{cor}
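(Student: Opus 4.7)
My plan is to obtain the stated identity as a direct specialisation of the preceding corollary (or equivalently, of Proposition~\ref{posprop}) to the critical value $\sigma=1$, performing the resulting simplifications before asserting the formula. Since the critical LSZ model is defined by $\sigma=1$, the parameters $\tilde E=(2\sigma-1)E$ and $\tilde B=(2\sigma-1)B$ collapse to $\tilde E=E$ and $\tilde B=B$, so that the trigonometric/hyperbolic ratios occurring in the general expression reduce drastically: $\sin(2s\tilde E)/\sin(2sE)=\sinh(2s\tilde B)/\sinh(2sB)=1$, while $\cos(2s\tilde E)/\sin(2sE)=\cot(2sE)$ and $\cosh(2s\tilde B)/\sinh(2sB)=\coth(2sB)$.

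Next, I would combine the quadratic pieces in each two-dimensional block using the elementary polarisation identity
\begin{equation}
\tfrac12\, \big(\|\vec x_k\|^2+\|\vec y_k\|^2\big)-(\vec x_k,\vec y_k)=\tfrac12\, \|\vec x_k-\vec y_k\|^2
\end{equation}
applied with respect to the appropriate inner product ($(-,-)_{\rm M}$ for $k=0$ after taking the limit, and $(-,-)_{\rm E}$ for $k=1$). Already at the $\sigma=1$ level this rewrites the Gaussian factors in the integrand of Proposition~\ref{posprop} as $\exp\bigl(-\tfrac12\, E\,\coth(2sE_{-\vt})\,\|\vec x_0-\vec y_0\|_\vt^{2}-\tfrac12\, B\,\coth(2sB)\,\|\vec x_1-\vec y_1\|_{\rm E}^{2}\bigr)$, while the Aharonov--Bohm phases reduce to $\exp\bigl(-\i\vec x_0\cdot\vec E\vec y_0-\i\vec x_1\cdot\vec B\vec y_1\bigr)$, which can be pulled outside the integral since they are $s$-independent.

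Finally I would take the causal limit $\kappa\to0^+$, i.e.\ $\vt\to\frac\pi2$, using the elementary substitutions $E_{-\vt}\to-\i E$, $\sinh(2sE_{-\vt})\to-\i\sin(2sE)$, $\coth(2sE_{-\vt})\to\i\cot(2sE)$, and the identities $\|-\|_{\pi/2}^{2}=\i\|-\|_{\rm M}^{2}$ and $(-,-)_{\pi/2}=\i(-,-)_{\rm M}$ recorded in \S\ref{regprop}. Combined with the prefactor $-\i\e^{-\i\vt}\to-1$ and the $-\i$ factor pulled out of $\sinh(2sE_{-\vt})^{-1}$, one obtains the overall coefficient $-\i EB/(2\pi)^{2}$, and the two factors of $\i$ inside the exponent for the electric block produce the sign flip $-\tfrac12\coth(2sE_{-\vt})\|\cdot\|_\vt^{2}\to+\tfrac12\cot(2sE)\|\cdot\|_{\rm M}^{2}$ which accounts for the crucial difference in sign between the electric and magnetic Gaussian factors in the stated formula.

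There is no real obstacle here: the content of the corollary is purely a specialisation and algebraic rearrangement of Proposition~\ref{posprop} together with the analytic continuation rules made explicit in \S\ref{regprop}. The one point that deserves a brief justification is that taking $\kappa\to0^+$ inside the integral is legitimate, but the exponential damping $\e^{-s\mu_\kappa^{2}}$ with $\Re(\mu_\kappa^{2})>0$ for $\kappa>0$ combined with the pole structure of $\sinh(2sE_{-\vt})$ only on the imaginary $s$-axis guarantees that the Schwinger-parameter integral converges uniformly in $\kappa$ on compact subsets of $(0,\infty)$ and the limit can be passed under the integral sign, yielding the causal (Feynman) oscillatory integral displayed in the corollary.
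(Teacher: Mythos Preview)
Your proposal is correct and follows essentially the same approach as the paper: the corollary is obtained by specialising Proposition~\ref{posprop} (or the preceding corollary) to $\sigma=1$, simplifying the resulting trigonometric/hyperbolic ratios, combining the quadratic terms via the polarisation identity, and taking the limit $\kappa\to0^+$ using $\|-\|_{\pi/2}^2=\i\|-\|_{\rm M}^2$. The paper merely states that one can ``read off'' the result, so your explicit bookkeeping of the phase factors and the sign flip in the electric block is more detailed than what the paper provides, but the route is the same.
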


\begin{cor}
The causal propagator of the four-dimensional Grosse-Wulkenhaar model in hyperbolic signature is given by 
\begin{eqnarray}
\Delta^{(0)}(\vec x,\vec y)&=& -\frac{\i E\, B}{(2\pi)^2}\, \int_0^\infty\, \dd s\ \frac{\e^{- s\, \mu^2}}{\sin(2s\, E)\, \sinh(2s\, B)} \notag \\
&& \times\, \exp\Big(\, \half\, E\, \cot(2s\, E)\, \left(\|\vec x_0\|_{\rm M}^2+\|\vec y_0\|_{\rm M}^2\right)-\frac{E}{\sin(2s\, E)}\, (\vec x_0,\vec y_0)_{\rm M} \, \Big) \notag \\
&& \times\, \exp\Big(-\half\, B\, \coth(2s\, B)\, \left(\|\vec x_1\|_{\rm E}^2+\|\vec y_1\|_{\rm E}^2\right)+\frac{B}{\sinh(2s\, B)}\, (\vec x_1,\vec y_1)_{\rm E}\Big) \ .
\end{eqnarray}
\label{GWxspacecor}\end{cor}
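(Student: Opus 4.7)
The plan is to derive this corollary directly from Proposition~\ref{posprop} by specializing to the Grosse-Wulkenhaar values of the parameters and then removing the $\vt$-regularization. Since the Grosse-Wulkenhaar model is the special case of the LSZ model at $\sigma=\frac12$ with $\alpha=\beta=\frac12$, I would first set $\sigma=\frac12$ in formula \eqref{propLSZgen}, which gives $\tilde E=(2\sigma-1)\, E=0$ and $\tilde B_k=(2\sigma-1)\, B_k=0$. This immediately collapses both factors of the form $\exp\bigl(-\tfrac{\sinh(2s\tilde{F})}{\sinh(2sF)}\,\i\vec x\cdot\vec F\vec y\bigr)$ to unity (since $\sinh(0)=0$), and replaces $\cosh(2s\tilde E_{-\vt})$ and $\cosh(2s\tilde B)$ by $1$ in the remaining exponents. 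I would then restrict to four dimensions, so that only the electric block ($k=0$) and a single magnetic block ($k=1$) survive in the product.

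Next I would take the limit $\kappa\to 0^+$, i.e.\ $\vt\to\frac\pi2$, in the surviving expression. The key analytic-continuation identities I need are
\begin{equation}
E_{-\vt}=E\,\e^{-\i\vt}\ \longrightarrow\ -\i E\ ,\qquad \sinh(-2\i sE)=-\i\sin(2sE)\ ,\qquad \coth(-2\i sE)=\i\cot(2sE)\ ,
\end{equation}
together with the remarks preceding the corollary, namely $\|\cdot\|_{\pi/2}^2=\i\|\cdot\|_{\rm M}^2$ and $(-,-)_{\pi/2}=\i(-,-)_{\rm M}$. Substituting these into the electric $(k=0)$ factor, the prefactor $-\i\, \e^{-\i\vt}\, \frac{E}{2\pi\sinh(2sE_{-\vt})}$ becomes $-\frac{\i E}{2\pi\sin(2sE)}$, while the two $\i$'s in $\coth(2sE_{-\vt})\,\|\cdot\|_\vt^2$ and in $\frac{1}{\sinh(2sE_{-\vt})}(-,-)_\vt$ combine as $\i^2=-1$, producing the stated real Minkowski-signature expressions $+\tfrac12 E\cot(2sE)(\|\vec x_0\|_{\rm M}^2+\|\vec y_0\|_{\rm M}^2)$ and $-\frac{E}{\sin(2sE)}(\vec x_0,\vec y_0)_{\rm M}$ in the exponent. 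The magnetic $(k=1)$ factor is unaffected by the Wick rotation (it is already Euclidean) and reproduces the last line of the claim verbatim. Finally $\mu_\kappa^2=\e^{-\i\kappa}\,\mu^2\to\mu^2$.

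Combining the two factors yields the overall prefactor $-\frac{\i E\, B}{(2\pi)^2\sin(2sE)\sinh(2sB)}$ and the stated exponent, which is exactly the claim. The only subtlety worth checking carefully is that the $\kappa\to 0^+$ limit can indeed be exchanged with the Schwinger parameter integral $\int_0^\infty\dd s$: this is controlled by the $\e^{-s\mu_\kappa^2}$ factor, whose real part of the exponent equals $-s\mu^2\cos\kappa<0$ uniformly on compact $\kappa$-intervals, together with the fact that the deformation of contour implicit in $E\mapsto E_{-\vt}$ is permissible because the integrand has no singularities in the relevant sector of the complex $s$-plane except on the real axis at $s=0$, where the pole is uniformly integrable in $\kappa$. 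This is the only genuinely non-routine step; the rest is purely algebraic bookkeeping of the substitutions $\sigma=\frac12$ and $\vt\to\frac\pi2$.
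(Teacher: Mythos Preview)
Your proposal is correct and follows essentially the same route as the paper: the corollary is obtained directly from Proposition~\ref{posprop} by specializing to $\sigma=\tfrac12$ (so $\tilde E=\tilde B=0$) in four dimensions and then sending $\vt\to\tfrac\pi2$ using the identities $\|-\|_{\pi/2}^2=\i\|-\|_{\rm M}^2$, $(-,-)_{\pi/2}=\i(-,-)_{\rm M}$ together with $\sinh(-2\i sE)=-\i\sin(2sE)$ and $\coth(-2\i sE)=\i\cot(2sE)$. The paper in fact presents the result as an immediate corollary without spelling out the limit-exchange argument you add; note, however, that your claim that the integrand has singularities only at $s=0$ is not quite accurate in the strict $\kappa=0$ limit, since $\sin(2sE)$ vanishes at $s=n\pi/(2E)$ --- this is exactly the ambiguity the $\vt$-regularization is designed to resolve, and the paper likewise treats the final expression formally.
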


The Euclidean parts of the propagators here coincide with those found
in~\cite{grvt06} after suitable redefinitions of parameters.

\subsection{Matrix Space Representation}\label{matrixprops}

Below we set $\theta_0=\theta_1=\dots=\theta_{n-1}=: \theta$ for simplicity.
\begin{proposition}\label{thm1}
The matrix space propagator for the $2n$-dimensional regularized LSZ model in Minkowski space is given by
\begin{eqnarray}
\Delta_{\vec m,\vec m+\vec \alpha;\vec l+\vec\alpha,\vec l}^{(\kappa,\sigma)}
&=& -\e^{\i\kappa}\, \frac{\theta}{8\Omega}\, \int_0^1\, \dd s\
s^{-\i\e^{\i\kappa}\, (\sigma\, \alpha_0+\frac12)+\sum_{i=1}^{n-1}\,
  (\sigma\, \alpha_i+ \frac12)-1+\frac{\theta\, \mu^2}{8\Omega}} \label{matrixprop0}\\
&& \qquad \qquad \qquad \qquad \times\,
\Delta_{m_0,m_0+\alpha_0;l_0+\alpha_0,l_0}^{(\kappa)}(s) \
\prod_{i=1}^{n-1}\, \Delta^{\rm E}_{m_i,m_i+\alpha_i;l_i+\alpha_i,l_i}(s) \notag
\end{eqnarray}
with hyperbolic part
\begin{eqnarray}
\Delta_{m,m+\alpha;l+\alpha,l}^{(\kappa)}(s) 
&=&\sum_{u=\max(0,-\alpha)}^{\min(m,l)}\ 
\frac{s^{-\i\e^{\i\kappa}\, u}\,
  \big(1-s^{-\i\e^{\i\kappa}}\,\big)^{m+l-2u}}{
  \Big(1-\frac{(1-\Omega)^2}{(1+\Omega)^2}\,
  s^{-\i\e^{\i\kappa}}\Big)^{\alpha+m+l+1}} \\ && \qquad
\qquad \qquad \times\,
\Big(\, \frac{4\Omega}{(1+\Omega)^2}\, \Big)^{\alpha+2u+1}\,
\Big(\, \frac{1-\Omega}{1+\Omega}\, \Big)^{m+l-2u}\, \mathcal{A}(m,l,\alpha,u) \notag 
\end{eqnarray}
and Euclidean part
\begin{eqnarray}
\Delta^{\rm E}_{m,m+\alpha;l+\alpha,l}(s) 
&=&\sum_{u=\max(0,-\alpha)}^{\min(m,l)}\ \frac{s^{u}\,
  (1-s)^{m+l-2u}}{\Big(1-\frac{(1-\Omega)^2}{
    (1+\Omega)^2}\, s\Big)^{\alpha+m+l+1}} \\ && \qquad \qquad \qquad \times\,
\Big(\, \frac{4\Omega}{(1+\Omega)^2}\, \Big)^{\alpha+2u+1}\,
\Big(\, \frac{1-\Omega}{1+\Omega}\, \Big)^{m+l-2u}\,
\mathcal{A}(m,l,\alpha,u)\ , \notag 
\end{eqnarray}
where
\begin{eqnarray}
\mathcal{A}(m,l,\alpha,u)=\sqrt{{\alpha+m\choose \alpha+u}\,
  {\alpha+l\choose \alpha+u}\, {m\choose u}\, {l\choose u}}
\label{calAfact}\end{eqnarray}
and $\vec\alpha=(\alpha_0,\alpha_1,\ldots,\alpha_{n-1})\in\Z^n$ with
$\alpha_j=n_j-m_j=k_j-l_j$.
\end{proposition}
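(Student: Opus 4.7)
The plan is to invert the $2n$-dimensional wave operator $D^{(\kappa,\sigma)}$ by exploiting the block decomposition \eqref{2nlsz}, which splits it into one hyperbolic block $i\mathcal{D}^{0,(\sigma)}$, a sum of $n-1$ commuting Euclidean blocks $-e^{-i\kappa}\mathcal{D}^{i,(\sigma)}$, and a diagonal mass contribution $-e^{-i\kappa}\mu^{2}$. Since these blocks act on disjoint pairs of matrix indices and therefore mutually commute, a Schwinger-type parametrisation
\[
\big(D^{(\kappa,\sigma)}\big)^{-1}\=-i\int_{0}^{\infty}d\tau\ e^{i\tau D^{(\kappa,\sigma)}}
\]
factorises the propagator into a product of matrix exponentials of the individual two-dimensional blocks, weighted by the mass exponential $e^{-i\tau e^{-i\kappa}\mu^{2}}$. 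The small negative imaginary part that the $\vt$-regularisation produces in the spectrum of each block, already exploited in \S\ref{regprop}, guarantees absolute convergence of the $\tau$-integral for $\kappa>0$ and justifies interchanging sum and integral.

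Each two-dimensional block of \eqref{LSZop} is tridiagonal and commutes with the shift $\alpha=n-m=k-l$, so on each fixed-$\alpha$ subspace it reduces to an infinite tridiagonal matrix which can be diagonalised in closed form by Meixner polynomials. For the $n-1$ Euclidean blocks I would apply verbatim the diagonalisation carried out in \cite{grvt06}: the overlap of the complex Landau basis with the Meixner eigenbasis yields the square-root factor $\mathcal{A}(m,l,\alpha,u)$ of \eqref{calAfact} and the finite summation range $u\in\{\max(0,-\alpha),\dots,\min(m,l)\}$, and the Meixner generating function resums the spectral integral into the rational expression displayed for $\Delta^{\rm E}_{m,m+\alpha;l+\alpha,l}(s)$. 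The hyperbolic block is algebraically identical but carries the relative phase $i/(-e^{-i\kappa})=-ie^{i\kappa}$ between its prefactor in \eqref{2nlsz} and the Euclidean ones, so the same Meixner computation applies after the formal replacement $s\mapsto s^{-ie^{i\kappa}}$; this precisely produces the twisted powers $s^{-ie^{i\kappa}u}$ and $(1-s^{-ie^{i\kappa}})^{m+l-2u}$ of $\Delta^{(\kappa)}_{m,m+\alpha;l+\alpha,l}(s)$.

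Finally, I would change variables $\tau\mapsto s$ via $s=\exp\!\big({-i\tau e^{-i\kappa}}\cdot\tfrac{8\Omega}{\theta}\big)$, which maps the Schwinger ray $(0,\infty)$ onto the unit interval $(0,1)$ with Jacobian $-e^{i\kappa}\tfrac{\theta}{8\Omega}\tfrac{ds}{s}$; the mass exponential becomes the power $s^{\theta\mu^{2}/(8\Omega)}$, and the constant pieces of the diagonal eigenvalues $\tfrac{2(\Omega^{2}+1)}{\theta}(m+n+1)+\tfrac{4\tilde\Omega}{\theta}(n-m)$ contribute, block by block, the $\sigma$- and $\alpha$-dependent powers $s^{\sigma\alpha_{i}+\frac12}$ and $s^{-ie^{i\kappa}(\sigma\alpha_{0}+\frac12)}$ appearing in the exponent of \eqref{matrixprop0}. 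Collecting the block contributions gives the stated formula.

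The main technical obstacle is the validity of the Meixner diagonalisation on the hyperbolic block. While the algebraic manipulations carry over formally from the Euclidean case, the substitution $s\mapsto s^{-ie^{i\kappa}}$ sends $s\in(0,1)$ to a logarithmic spiral inside the punctured unit disk, and one must verify that the associated Meixner generating-function expansion, together with the binomial identities underlying $\mathcal{A}(m,l,\alpha,u)$, continue to converge absolutely and term-by-term along this rotated contour. For $\kappa>0$ this is a contour-deformation argument inside the common domain of analyticity of the generating series, and is ultimately controlled by the Gaussian decay of the complex Landau wavefunctions $f^{\kappa}_{mn}$ established in \S\ref{genlandau}; the limit $\kappa\to 0^{+}$ is then understood in the distributional sense, consistently with the $\vt$-regularisation prescription of \S\ref{Mink}.
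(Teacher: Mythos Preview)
Your proposal is correct and follows essentially the same route as the paper's proof in Appendix~\ref{APPmatrixprop}: block decomposition via \eqref{2nlsz}, diagonalisation of each two-dimensional block by Meixner polynomials (the paper cites~\cite{gw05} rather than~\cite{grvt06}, but the content is the same eigenvector formula \eqref{U}), Schwinger parametrisation, resummation of the Meixner index via the ${}_2F_1$ identity \eqref{ff}, and finally the substitution to the compact variable $s$.

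Two small points of comparison. First, the paper orders the steps slightly differently: it diagonalises \emph{before} parametrising, so the Schwinger integral is applied to the scalar eigenvalue \eqref{GWeig3}; the phase $e^{i\kappa}$ is absorbed into the integrand so that the Schwinger variable $t$ is real and the substitution is simply $s=e^{-t}\in(0,1)$. Your stated change of variable $s=\exp\!\big({-i\tau e^{-i\kappa}}\cdot\tfrac{8\Omega}{\theta}\big)$ does not in fact map $\tau\in(0,\infty)$ to the real segment $(0,1)$ for $\kappa\in(0,\tfrac\pi2)$, but to a spiral; the paper's ordering avoids this cosmetic issue. Second, the combinatorial factor $\mathcal{A}(m,l,\alpha,u)$ does not arise directly as a basis overlap but from expanding the ${}_2F_1$ that results \emph{after} the Meixner sum \eqref{ff} has been performed; the binomial square roots coming from the eigenvectors $U^{(\alpha)}_{mv}$ combine with the polynomial expansion of the resulting hypergeometric function to give~\eqref{calAfact}. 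Your convergence concern about the hyperbolic block is handled cleanly in the paper's version: the identity \eqref{ff} requires $|z|<1$, and with $z=e^{ite^{i\kappa}}\tfrac{(1-\Omega)^2}{(1+\Omega)^2}$ one has $|z|=e^{-t\sin\kappa}\tfrac{(1-\Omega)^2}{(1+\Omega)^2}<1$ for all $t>0$ and $\kappa>0$, so no contour deformation is needed.
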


The proof of Proposition~\ref{thm1} is found in
Appendix~\ref{APPmatrixprop}. The respective special cases, like the
four-dimensional Grosse-Wulkenhaar model, can easily be read off from
this general expression.

\subsection{Power Counting\label{power}}

The {power counting theorem} for general
{non-local} matrix models was proven by Grosse and Wulkenhaar
in~\cite{gw03a}. In the matrix basis, every Feynman diagram of the
duality covariant field theory is represented by a {ribbon
  graph}, whose topology is decisive for the question of whether or not it is
divergent. Power counting in a dynamical matrix model depends
crucially on this topological data. For a \emph{regular} matrix model,
the power counting degree of divergence for an $N$-leg ribbon graph
$G$ of genus $g$ with $V$ vertices and $B$ loops carrying external
legs is given by~\cite{gw03a}
\beq
\omega(G)=D+V\,(D-4)-\mbox{$\frac12$}\, N\,(D-2) -D\,(2g+B-1) \ .
\eeq
Here we briefly recall the role played by
the asymptotic behaviour of the propagator in the derivation of this
power counting theorem. For this, one uses multiscale analysis of the
Schwinger parametric representation of the propagator, which works in
both position and matrix space.

The slicing of the propagator is defined as
\begin{eqnarray}
\Delta=\sum_{i=0}^\infty \, \Delta^i\qquad \text{through} \quad
\int_0^1\, \dd s=\sum_{i=0}^\infty\ \int_{M^{-2i}}^{M^{-2(i-1)}}\,
\dd s
\label{propslice}\end{eqnarray}
with an arbitrary constant $M>1$. This leads to a scale decomposition
of the amplitude ${\mathcal A}_G$ of
any given Feynman graph $G$ as
\beq
{\mathcal A}_G=\sum_I\, {\mathcal A}_G^I \ ,
\eeq
where $I=\{i_\ell\}$ runs through all assignments of positive integers
$i_\ell$ to each line $\ell$ of $G$. 
One then seeks appropriate bounds on the sliced
propagators.

For the $i$-th slice, the main bounds in matrix space are given by~\cite{rvtw05,gmrvt06}
\begin{eqnarray}
\big| \Delta^i_{\vec m\vec n;\vec k\vec l}\big| &\leq& K\,M^{-2i}\, \e^{-c\,
  M^{-2i}\, \|\vec m+\vec n+\vec k+\vec l\|_1} \ , \label{GWbound1}\\[4pt]
\sum_{\vec l}\, \max_{\vec n,\vec k} \, \big| \Delta^i_{\vec m\vec n;\vec
  k\vec l}\big| &\leq& K'\,M^{-2i}\, \e^{-c'\, M^{-2i}\, \|\vec m\|_1} \label{GWbound2}
\end{eqnarray}
for some positive constants $K,K'$ and $c,c'$, where we have introduced the
$\ell^1$-norm $\|\vec m\|_1:=m_0+m_1+ \dots+ m_{n-1}$. Perturbative power
counting amounts to finding which summations cost a factor $M^{2n\, i}$ through~\eqref{GWbound1},
\begin{eqnarray}
\sum_{\vec m\in\N_0^n}\, \e^{-c\, M^{-2i}\, \|\vec
  m\|_1}=\frac{1}{\big(1-\e^{-c\, M^{-2i}}\big)^n}=\frac{M^{2n\, i}}{c^n}\,
\big(1+{O}(M^{-2i})\big)\ ,
\end{eqnarray}
and which cost ${O}(1)$ due to the bound \eqref{GWbound2}. Integrating out loops at higher scales of a graph then gives effective coupling constants in powers of $M$. The
important point is that the faster the propagator decays, the smaller
is the contribution of the integration over internal lines to
effective coupling constants. This in turn reduces the number of
divergent graphs which require renormalization.

As an immediate application of this result in the present context, we can straightforwardly establish
the power counting theorem for the 1+1-dimensional self-dual
Grosse-Wulkenhaar model. In this case the matrix model is local with
propagator $\Delta^{(\kappa)}_{mn;kl}=\delta_{mk}\, \delta_{nl}\,
\Delta^{(\kappa)}(m,n)$ given by
\bea
\Delta^{(\kappa)}(m,n)&=& -\frac\i{2E\,(m+n+1)+\i\e^{-\i\kappa}\,
  \mu^2} \\[4pt] &=& -\i\, \int_0^\infty\, \dd s\ \e^{-2E\,(m+n+1)\,
    s-\i\e^{-\i\kappa}\, \mu^2\, s} \ = \ K\, \int_0^1\, \dd s\ \e^{-2E\,(m+n+1)\,
    s-\i\e^{-\i\kappa}\, \mu^2\, s} \ , \notag
\eea
where $K=\i\big(\e^{-2E\,(m+n+1)-\i\e^{-\i\kappa}\, \mu^2}-1\big)^{-1}$. Slicing this propagator as in (\ref{propslice}), we easily find that the $i$-th slice can be bounded as
\beq
\big| \Delta^{(\kappa)\, i}(m,n)\big| \leq |K|\, M^{-2i}\, \big(M^2-1\big)\, \e^{-2E\,(m+n+1)\,M^{-2i}}\, \e^{-\sin(\kappa)\, \mu^2\, M^{-2i}} \ .
\eeq
The Minkowski space propagator thus has the
requisite exponential decays (\ref{GWbound1})--(\ref{GWbound2}), and
hence the perturbative multiscale renormalization in this case can be
treated exactly as in the Euclidean setting~\cite{rvtw05}; we expect
renormalizability to hold in this case. The case $\Omega<1$ is much
more difficult; in the Euclidean case the coupling $\Omega$ flows very
rapidly to the self-dual point $\Omega=1$, and it would be interesting
to see if this is also the case for the hyperbolic self-dual point.

\subsection{Asymptotics}

As discussed in \S\ref{power}, the asymptotics of the propagators play
an important role in perturbative renormalization. In this paper we
are also interested in determining to what extent the complex matrix
basis is applicable to the perturbative analysis of the duality
covariant field theories; here the asymptotics also give crucial
information. However, the asymptotic behaviours of the hyperbolic parts of the propagators are difficult to investigate due to the oscillatory behaviours of the integrands.

For example, consider the Grosse-Wulkenhaar model in four-dimensional
Minkowski space with propagator given by
Corollary~\ref{GWxspacecor}. Introducing \emph{short variables} $\vec
u_k=\vec x_k-\vec y_k$ and \emph{long variables} $\vec v_k=\vec x_k+\vec y_k$ for $k=0,1$, and using elementary hyperbolic and trigonometric identities, we can write this propagator in the form
\begin{eqnarray}
\Delta^{(0)}(\vec u,\vec v)&=&-\frac{\i B}{(2\pi)^2}\, \int_0^\infty\, \dd s\ \e^{- s\,\mu^2/B}\, \frac{1}{\sin(2s)}\, \frac{1}{\sinh(2s)}\\
&&\qquad \qquad \qquad \qquad \times\, \exp\Big(\,\frac{B}4\,
\cot(s)\, \|\vec u_0\|_{\rm M}^2-\frac{B}4\, \tan(s)\, \|\vec
v_0\|_{\rm M}^2\,\Big) \notag\\
&&\qquad \qquad \qquad \qquad \times\, \exp\Big(-\frac{B}4\,
\coth(s)\, \|\vec u_1\|_{\rm E}^2-\frac{B}4\, \tanh(s)\, \|\vec
v_1\|_{\rm E}^2\Big)\ , \notag
\end{eqnarray}
where we set $E=B$ for simplicity. The integral is sliced in the usual way to get
\begin{eqnarray}
\Delta^{(0)\,j}(\vec u,\vec v)&=&-\frac{\i B}{(2\pi)^2}\, \int_{M^{-2j}}^{M^{-2(j-1)}}\, \dd s\ \e^{- s\,\mu^2/B}\, \frac{1}{\sin(2s)}\, \frac{1}{\sinh(2s)} \label{4Dslice} \\
&&\qquad \qquad \qquad \qquad \qquad \times\, \exp\Big(\,\frac{B}4\,
\cot(s)\, \|\vec u_0\|_{\rm M}^2-\frac{B}4\, \tan(s)\, \|\vec
v_0\|_{\rm M}^2\,\Big) \notag\\
&&\qquad \qquad \qquad \qquad \qquad \times\, \exp\Big(-\frac{B}4\,
\coth(s)\, \|\vec u_1\|_{\rm E}^2-\frac{B}4\, \tanh(s)\, \|\vec
v_1\|_{\rm E}^2\Big) \notag
\end{eqnarray}
with $M>1$.

The Euclidean part of the modulus of the integral (\ref{4Dslice}) can be easily bounded from above by
maximizing each of the hyperbolic functions in the integrand on the interval
$[M^{-2j},M^{-2(j-1)}]$. The factor $\e^{-\frac{B}4\,\tanh(s)\,
  \|\vec v_1\|_{\rm E}^2}$ takes its maximum at $s=M^{-2j}$ where
$\tanh(s)= M^{-2j}-\frac13\, M^{-6j}+O\big((M^{-2j})^5\big) <c'\, M^{-2j}$ for some
constant $c'>0$, while $\e^{-\frac{B}4\, \coth(s)\, \|\vec u_1\|_{\rm
    E}^2}$ takes its maximum value at $s=M^{-2(j-1)}$ with $\coth(s)<
M^{2(j-1)}+M^{-2(j-1)}<c''\, M^{2j}$ and some constant $c''>0$. The
function $\sinh(2s)^{-1}$ can be bounded from above by $M^{2j}$, and
in this way one arrives at the very rough bound
\begin{eqnarray}
\big|\Delta^{(0)\,j}(\vec u,\vec v)\big| &\leq& K\, M^{2j}\, \e^{-c\,
  (M^{2j}\, \|\vec u_1\|_{\rm E}^2+M^{-2j}\, \|\vec v_1\|_{\rm
    E}^2)} \\ && \times\, \int_{M^{-2j}}^{M^{-2(j-1)}}\, \dd s\
\frac{\e^{- s\, \mu^2/B}}{\big|\sin(2s) \big|}\, \exp\Big(\,
\frac{B}4\, \cot(s)\, \|\vec u_0\|_{\rm M}^2-\frac{B}4\, \tan(s)\,
\|\vec v_0\|_{\rm M}^2\,\Big) \notag
\end{eqnarray}
for some positive constants $K$ and $c$. This reproduces the first required
bound which gives exponential decay in both short and long variables
in the Euclidean plane~\cite{gmrvt06}; in particular, integrating over
long Euclidean coordinates costs a factor $M^{2j}$ while short Euclidean
coordinates cost $M^{-2j}$. However, the asymptotic behaviour of the full propagator remains unclear; the hyperbolic part of the integrand is oscillatory, so that more sophisticated methods are needed to bound this integral.

There is a special case in which one can deduce the qualitative
behaviour. The propagator of the critical, regularized massless LSZ model in 1+1
dimensions can be written using Proposition~\ref{posprop} as
\begin{eqnarray}
\Delta_{\mu^2=0}^{(\kappa,1)}(\vec x,\vec y)=-\frac{\i E}{2\pi}\,
\int_0^\infty\, \dd s\ \frac{\e^{-\i\vec x\cdot\vec E\vec
    y}}{\sinh(2s\, E)}\, \exp\Big(-\frac{E}2\, \coth(2s\, E)\, \|\vec
x-\vec y\|^2_\vt\Big) \ ,
\end{eqnarray}
where the integration contour has been rotated as $s\rightarrow s\,
\e^{\i\vt}$. Substituting
\beq
u=\mbox{$\frac12$}\, E\, \|\vec x-\vec y\|_\vt^2\, \big(\coth(2s\, E)-1\big)
\eeq
we get
\begin{eqnarray}
\Delta_{\mu^2=0}^{(\kappa,1)}(\vec x,\vec y)&=& -\frac{\i}{4\pi}\, \e^{-\i\vec
  x\cdot\vec E\vec y}\, \int_0^\infty\, \dd u\
\frac{\e^{-u-\frac{E}2\, \|\vec x-\vec y\|_\vt^2}}{\sqrt{u^2+E\, u\,
    \|\vec x-\vec y\|_\vt^2}} \notag \\[4pt] &=& -\frac{\i}{4\pi}\, \e^{-\i\vec
  x\cdot\vec E\vec y}\, K_0\Big(\, \frac{E}2\, \|\vec x-\vec
y\|_\vt^2\, \Big)\ ,
\label{Deltamu0}\end{eqnarray}
with $K_0(z)$ the modified Bessel function of the second kind of
order~$0$.

This implies that there is still a logarithmic ultraviolet divergence at $\vec x=\vec y$ due to the singular behaviour of $K_0(z)$ at $z=0$. Since~\cite[9.7.2]{as70}
\begin{eqnarray}
K_0(z)= \sqrt{\frac{\pi}{2z}}\, \e^{-z}\,
\left(1+{O}(z^{-1})\right)\qquad \mbox{for} \quad z\to\infty \ , \label{k0}
\end{eqnarray}
this also implies that the propagator $\Delta_{\mu^2=0}^{(\kappa,1)}(\vec
x,\vec y)$ has an asymptotic exponential decay in the {short variable}
$\vec u=\vec x-\vec y$ only for 
\begin{eqnarray}
\Re\left(\|\vec u\|^2_\vt\right)>0\ ,
\end{eqnarray}
and thus only for $|\vt|<\frac\pi2$.
We believe that for $\sigma<1$ the asymptotic exponential
decay in the long variable $\vec v=\vec x+\vec y$ also persists as long as
$|\vt|<\frac\pi2$. We conclude that the propagator has a worse
behaviour in Minkowski space than in Euclidean space, but we can
control its asymptotic behaviour with the help of the parameter
$\vt$. Regarding the restriction $|\vt|<\frac\pi2$ as being part of the
regularization of the field theory, one could then try to carry out
the perturbative multiscale renormalization of the Minkowski space
duality covariant field theory. 

In the matrix space representation there is a similar problem, since
the integrand in \eqref{matrixprop0} is oscillatory. Thus bounding the
magnitude of the integral by an integral over the magnitude of the
integrand possibly produces a big error and might lead to poor
estimates of the asymptotic behaviour. One can use this approximation to show
that the Minkowski space Grosse-Wulkenhaar propagator at
$|\vt|=\frac\pi2$ has an exponential decay in each index separately,
as in (\ref{GWbound1}). To find the other bounds, however, one has to
take care of the oscillating behaviour of the integrand. The
asymptotics of the special case \eqref{Deltamu0} for $|\vt|<\frac\pi2$
raises the hope that the propagators at hand may have such an
asymptotic behaviour in position space so that the matrix basis is
applicable.\footnote{As these propagators are duality covariant, they
  also have a similar decay in momentum space.}

\subsection*{Acknowledgments}

We thank Dorothea Bahns, Lyonell Boulton, Harald Grosse, Edwin Langmann, Jochen
Zahn and Konstantin Zarembo for helpful discussions. The work of RJS is supported in part by grant ST/G000514/1 ``String Theory
Scotland'' from the UK Science and Technology Facilities Council.

\appendix

\section{Proof of Proposition~\ref{genlandautheo}}\label{matrixfunctions}

The complex Landau wavefunctions are built on tensor products of the
complex harmonic oscillator wavefunctions $f_m^{(E_\vt)}$ as
\begin{eqnarray}
f_{mn}^{(E_\vt)}(\vec x)
=\sqrt{\frac{E}{4\pi}}\,\wi{\kb{f_m^{(E_\vt)}}{f_n^{(E_{-\vt})}}}(\vec
x)\ .
\end{eqnarray}
Using \eqref{wignertrafo} and \eqref{genosc1} we get
\begin{eqnarray}
f_{mn}^{(E_\vartheta)}(t,x)&=&\sqrt{\frac{E}{4\pi}}\, \int\, \dd k\
\e^{\i E\,k\,x/2}\,
f_m^{(E_\vartheta)}(t+k/2)\,f_n^{(E_\vartheta)}(t-k/2)\notag\\[4pt]
&=&\sqrt{\frac{E}{4\pi}}\,
\sqrt{\frac{E_\vt}{2\pi}}\,\frac{1}{\sqrt{2^{m+n}\, m!\, n!}} \, \int\, \dd k\
\e^{\i E \,k\,x/2}\, \e^{-\frac14 \,
  E_\vt\,\left((t+k/2)^2+(t-k/2)^2\right)} \notag\\
&&\times\, H_m\big(\sqrt{E_\vt/2}\,(t+k/2)\big)\,
H_n\big(\sqrt{E_\vt/2}\,(t-k/2)\big)\ .
\end{eqnarray}
The generating function for the Hermite polynomials
\begin{eqnarray}
\e^{-a^2\, (\xi^2-2\xi\, q)}=\sum_{m=0}^\infty\, \frac{1}{m!}\, (a\,
\xi)^m\, H_m(a\, q)
\end{eqnarray}
is used to obtain the generating function for the complex matrix
basis functions as
\begin{eqnarray}
K^{(E_\vartheta)}(\xi,\eta;t,x)&:=&\sqrt{\frac{4\pi}{E}}\,
\sum_{m,n=0}^\infty\, \sqrt{\frac{2^{m+n}}{m!\, n!}}\,
\big(\sqrt{E_\vt/2}\, \xi\big)^m\, \big(\sqrt{E_\vt/2}\, \eta\big)^n\,
f_{mn}^{(E_\vartheta)}(t,x)\notag\\[4pt]
&=&\sqrt{\frac{E_\vt}{2\pi}}\, \int\, \dd k\ \e^{\i E\,k\,x/2}\,
\e^{-\frac14\, E_\vt\,\left((t+k/2)^2+(t-k/2)^2\right)} \notag\\
&&\qquad \qquad \qquad \times\, \e^{-\half\,
  E_\vt\, \left(\xi^2-2\xi\, (t+k/2)+\eta^2-2\eta\, (t-k/2)\right)} \notag\\[4pt] &=&
2\e^{\half\, E_\vt\, (-x^{(\vt)}_+\, x^{(\vt)}_-+2\xi \,
    x^{(\vt)}_-+2\eta \,x^{(\vt)}_+-2\eta\, \xi)} \ ,
\end{eqnarray}
where we used the complex light-cone coordinates
(\ref{genLCcoords}). 
The complex matrix basis functions can now be obtained by taking
suitable derivatives with respect to the variables $\xi$ and $\eta$ to get
\begin{eqnarray}
f_{mn}^{(E_\vartheta)}(t,x)&=&\sqrt{\frac{E_\vt}{4\pi}}\,
\frac{1}{\sqrt{m!\, n!}}\, \left(\frac{1}{E_\vt }\right)^{\frac{m+n}2}\,
\left.\pdm{m}{\xi}\,
  \pdm{n}{\eta}K^{(E_\vartheta)}(\xi,\eta;t,x)\right|_{\xi=\eta=0}
\notag\\[4pt]
&=&\sqrt{\frac{E}{\pi}}\, \sqrt{m!\, n!}\, (E_\vt )^{\frac{m-n}2}\,
\e^{-E_\vt \, x^{(\vt)}_+\, x^{(\vt)}_-/2}\, \big(x^{(\vt)}_-\big)^{m-n}\notag\\
&&\times\, \sum_{p=0}^n\, \big(E_\vt\, x^{(\vt)}_+\, x^{(\vt)}_-\big)^{n-p}\,
\frac{(-1)^{p}}{(m-p)!\, (n-p)!\, p!}\ ,
\end{eqnarray}
where we assumed $m\geq n$. This last sum can be identified with an associated Laguerre polynomial 
\begin{eqnarray}
L_n^k(z)=\sum_{q=0}^n\, \frac{(n+k)!}{(n-q)!\, (k+q)!\, q!}\ (-z)^q
\end{eqnarray}
by shifting $p\rightarrow q=n-p$. We finally arrive at
\begin{eqnarray}
f_{mn}^{(E_\vartheta)}(t,x)= (-1)^n\, \sqrt{\frac{E}{\pi}}\,
\sqrt{\frac{n!}{m!}}\, \left(E_\vt \right)^{\frac{m-n}2}\, \e^{-E_\vt
  \, x^{(\vt)}_+\, x^{(\vt)}_-/2}\, \big(x^{(\vt)}_-\big)^{m-n}\, L_n^{m-n}\big(E_\vt\, x^{(\vt)}_+\,
  x^{(\vt)}_-\big) \ .
\end{eqnarray}
An identical calculation for $n\geq m$ leads to the same result with
$+\leftrightarrow -$ and $m\leftrightarrow n$, yielding~(\ref{genlandau0}).

\section{One-Loop Effective Action in Matrix Space}\label{effaction}

We will now reconstruct a classic result in quantum electrodynamics
using the $\vt$-regularization and the complex matrix basis. In his seminal paper~\cite{Schw51} Schwinger
calculated the effective action for both a Dirac field and a Klein-Gordon field of charge $e$ in a uniform external electromagnetic background in four spacetime dimensions. In a pure electric field $E$ the one-loop correction for the Klein-Gordon theory (before charge renormalization) is given by the Lagrangian
\begin{eqnarray}
\mathcal{L}^{(1)}=\frac{1}{16\pi^2 }\, \int_0^\infty\, \dd s\ s^{-3}\,
\e^{-\mu^2\, s}\, \Big(\,\frac{e\, E\, s}{\sin(e\, E\, s)}-1\, \Big) \
.\label{efflagkg}
\end{eqnarray}
By deforming the contour of integration above the real axis one picks
up the poles at $s=s_n=n\, \pi/e\, E$ for $n\in\N$ by the residue theorem; this leads
to the famous formula for the probability per unit time and unit
volume $2\, \Im\big(\mathcal{L}^{(1)}\big)$ to create a particle-antiparticle pair in the scalar field
theory. We will now show that the regularized matrix basis approach leads to the same result quite effortlessly. We work throughout in the notations of \S\ref{cont}.

The generating functional for connected graphs $W[J,J^*]$ is defined
via the vacuum-to-vacuum amplitude (\ref{vactovacampl}) in the
presence of the external sources $J$ and $J^*$ as
\begin{eqnarray}
W[J,J^*]=-\i \log Z_0[J,J^*] \ .
\end{eqnarray}
It can be
expressed in terms of the causal propagator (\ref{propfeyn1}) as~\cite{Schw51}
\begin{eqnarray}
W[J,J^*]=\int \, \dd^4\vec x \ \int\, \dd^4\vec y\ J^*(\vec x)\, \Delta_c(\vec x,\vec y)\, J(\vec y)-\i\log\det\left( \Delta_F^{-1}\, \Delta_c \right)\ ,
\end{eqnarray}
with $\Delta_F=\Delta_c\big|_{E=0}$ the usual Feynman propagator. By using the $\vt$-regularization we can write
\begin{eqnarray}
W[J,J^*]=\int \, \dd^4\vec x \ \int\, \dd^4\vec y\ J^*(\vec x)\, \Delta^{(\kappa,1)}(\vec x,\vec y)\, J(\vec y)-\i\log \det\Big(\, \frac{-\p{\mu}^2-\mu^2}{\Kmm-\mu^2}\, \Big)_{\kappa}\ ,
\end{eqnarray}
which is understood in the limit $\kappa\rightarrow0^+$ with 
\begin{eqnarray}
\Big(\, \frac{-\p{\mu}^2-\mu^2}{\Kmm-\mu^2}\, \Big)_{\kappa}= \frac{- \p{\mu}^2-\e^{-\i\kappa}\, \mu^2}{\e^{\i\kappa}\, \P{}^2\big(\mbox{$\frac\pi2$}-\kappa \big)-\e^{-\i\kappa}\, \mu^2+(\i\p{\bot})^2} \ .
\end{eqnarray}

The effective action is now defined as the Legendre transformation
\begin{eqnarray}
\Gamma[\phi_{\rm cl},\phi^*_{\rm cl}]= W[J,J^*]-\int\, \dd^4\vec x\ J(\vec x)\, \phi_{\rm cl}^*(\vec x)-\int\,\dd^4\vec x\ J^*(\vec x)\, \phi_{\rm cl}(\vec x) \label{eff1}
\end{eqnarray}
of $W[J,J^*]$ with respect to the ``classical'' fields $\phi_{\rm cl}(\vec x)$ and $\phi^*_{\rm cl}(\vec x)$ defined by
\begin{equation}
\begin{aligned}
\phi_{\rm cl}(\vec x)&=\frac{\delta W[J,J^*]}{\delta J^*(\vec x)}=\int\, \dd^4\vec y\ \Delta^{(\kappa,1)}(\vec x,\vec y)\, J(\vec y) \ , \\[4pt]
\phi_{\rm cl}^*(\vec x)&=\frac{\delta W[J,J^*]}{\delta J(\vec x)}=\int\, \dd^4\vec y\ J^*(\vec y)\, \Delta^{(\kappa,1)}(\vec y,\vec x) \ .
\end{aligned}
\end{equation}
These equations may be inverted to give
\begin{eqnarray}
J(\vec x)=(\Kmm-\mu^2)_{\kappa}\phi_{\rm cl}(\vec x)\qquad\text{and}\qquad J^*(\vec x)=-(\Kmm-\mu^2)_{\kappa}\phi^*_{\rm cl}(\vec x)\ ,
\end{eqnarray}
and inserting this into \eqref{eff1} yields
\begin{eqnarray}
\Gamma[\phi_{\rm cl},\phi_{\rm cl}^*]&=&- \int\,\dd^4\vec x\ \int\,
\dd^4\vec y\ (\Kmm-\mu^2)_{\kappa}\phi^*_{\rm cl}(\vec x)\,
\Delta^{(\kappa,1)}(\vec x,\vec y)\,
(\Kmm-\mu^2)_{\kappa}\phi_{\rm cl}(\vec y) \notag\\ && -\, \i\log \det\Big(\, \frac{-\p{\mu}^2-\mu^2}{\Kmm-\mu^2}\, \Big)_{\kappa}\notag\\
&&-\, \int\, \dd^4\vec x\ \big( (\Kmm-\mu^2)_{\kappa}\phi_{\rm cl}(\vec x)\big)\, \phi_{\rm cl}^*(\vec x)+\int\, \dd^4\vec x\ \big((\Kmm-\mu^2)_{\kappa}\phi_{\rm cl}^*(\vec x)\big)\, \phi_{\rm cl}(\vec x) \notag\\[4pt]
&=&\so[\phi_{\rm cl},\phi_{\rm cl}^*]+\i\log\det\Big(\,
\frac{\Kmm-\mu^2}{-\p{\mu}^2-\mu^2}\, \Big)_{\kappa} \ .
\end{eqnarray}
This is the \emph{full} effective action of the quantum field theory;
the quantum mechanical content is completely captured by the one-loop correction
\begin{eqnarray}
\i\log\det\Big(\, \frac{\Kmm-\mu^2}{-\p{\mu}^2-\mu^2}\,
\Big)_{\kappa}=W[0,0]\ .
\end{eqnarray}
We define $W[0,0] =:\int\, \dd^4\vec x\ \mathcal{L}^{(1)}(\vec x)$,
with $\mathcal L^{(1)}$ the one-loop effective Lagrangian. The
probability that no pair gets produced out of the vacuum is given by
$\big|\bk{0,{\rm out}}{0,{\rm in}}\big|^2=\e^{-2\, \Im (W[0,0])}$.

The effective action is given by
\begin{eqnarray}
W[0,0]= \i\, \text{Tr}\, \log\Big(\, \frac{\Kmm-\mu^2}{-\p{\mu}^2-\mu^2}\,
\Big)_{\kappa}\ ,
\end{eqnarray}
and the eigenvalue equation for the operator
$\big(\Kmm-\mu^2\big)_\kappa$ is given by (\ref{Kmmeigeq}). We
adhere to Schwinger's convention by substituting $E\rightarrow e\, E/2$. Using the identity
\begin{eqnarray}
\log\left(\frac{a}b\right)=\int_0^\infty\, \frac{\dd s}s\ \big(\e^{\i
  s\, a}-\e^{\i s\, b}\, \big)
\end{eqnarray}
which is valid for $\Im(a)>0$ and $\Im(b)>0$, the effective Lagrangian can be obtained through
\begin{eqnarray}
\mathcal{L}^{(1)}(\vec x)&=&\i\bra{\vec x}\log\Big(\,
\frac{\Kmm-\mu^2}{-\p{\mu}^2-\mu^2}\, \Big)_\kappa\ket{\vec x}
\\[4pt]
&=&\i\int_0^\infty \, \frac{\dd s}s\ \int\frac{\dd^2\vec
  p_\bot}{(2\pi)^2}\ \e^{-\i s\, \mu_\kappa^2}\, \e^{\i s\, \|\vec p_\bot\|_{\rm
    E}^2}
\notag\\ && \qquad \qquad \times\, \Big(\, \sum_{m,n=0}^\infty\,
f^{(E_\vt)}_{nm}(\vec x_\parallel)\, f^{(E_\vt)}_{mn}(\vec x_\parallel)\, \e^{-2s\, e\,
  E\, (m+\half)}-\int\, \frac{\dd^2\vec p_\parallel}{(2\pi)^2}\ \e^{\i
  s\, \|\vec p_\parallel\|_{\rm M}^2}\, \Big) \ .\notag
\end{eqnarray}
The integration over parallel momenta gives $\frac1{4\pi\, s}$.

We can now use Lemma~\ref{partialsum1} with $\vec x=\vec y$ and $a=1$ to obtain
\begin{eqnarray}
\mathcal{L}^{(1)}(\vec x)&=&\i\int_0^\infty\, \frac{\dd s}s\ \int\,
\frac{\dd^2\vec p_\bot}{(2\pi)^2}\ \e^{-\i s\, \mu_\kappa^2}\,
\Big(\, \frac{e\, E}{2\pi}\, \e^{-s\,e\, E}\ \sum_{m=0}^\infty\,
\e^{-2s\, e\, E\, m}-\frac{1}{4\pi\, s}\, \Big)\, \e^{\i s\, \|\vec
  p_\bot\|_{\rm E}^2}\notag\\[4pt]
&=&\frac{1}{16\pi^2}\, \int_0^\infty\, \frac{\dd s}{s^2}\ \e^{-\i s\,
  \mu_\kappa^2}\, \Big(\, \frac{e\, E}{\sinh(e\, E\,
  s)}-\frac{1}{s}\, \Big)\ ,
\end{eqnarray}
which is independent of $\vec x$. The integral converges near
infinity since $\mu_\kappa^2$ has a small imaginary part, and near $0$
due to the $\frac1s$ subtraction of the free scalar propagator. By
rotating the integration contour as $s\to-\i s$, and taking the
limit $\kappa\to0^+$, this Lagrangian coincides with Schwinger's
result \eqref{efflagkg}. The case of four-dimensional Dirac fields can
be treated in the same way, by transforming the spinor propagator to the
scalar propagator; see~\cite[App.~F]{thesis} for details. This analysis again
exemplifies the fact that the matrix basis provides an easy way of
doing otherwise cumbersome calculations in quantum electrodynamics. 

\section{Proof of Lemma~\ref{partialsum1}}\label{AppPartialSum}

For $m\geq n$ the explicit expression for the first eigenfunctions
on the left-hand side of (\ref{partialsum1expl}) is
\begin{eqnarray}
f_{mn}^{(E_\vartheta)}(\vec x)= (-1)^{n}\, \sqrt{\frac{E}{\pi}}\,
\sqrt{\frac{n!}{m!}}\, \e^{-E_\vt \, x^{(\vt)}_+\, x^{(\vt)}_-/2}\,
\big(\sqrt{E_\vt}\,x^{(\vt)}_-\big)^{m-n}\,L_{n}^{m-n}\big(E_\vt\,x^{(\vt)}_+\,
  x^{(\vt)}_-\big) \ ,
\end{eqnarray}
while the second eigenfunctions have a similar representation
\begin{eqnarray}
f_{nm}^{(E_\vartheta)}(\vec y)= (-1)^{n}\, \sqrt{\frac{E}{\pi}}\,
\sqrt{\frac{n!}{m!}}\, \e^{-E_\vt \, y^{(\vt)}_+\, y^{(\vt)}_-/2}\,
\big(\sqrt{E_\vt}\, y^{(\vt)}_+\big)^{m-n}\,L_{n}^{m-n}\big(E_\vt\,y^{(\vt)}_+\,
  y^{(\vt)}_-\big)\ ,
\end{eqnarray}
with the notations (\ref{Evartheta}) and (\ref{genLCcoords}). These representations can also be used for $n>m$ due to the identity
\begin{eqnarray}
(-1)^n\, r^{m-n}\, L_n^{m-n}(r^2)=(-1)^m\, r^{n-m}\, \frac{m!}{n!}\,
L_m^{n-m}(r^2)\ .
\end{eqnarray}
The sum over $n$ thus has the form
\begin{eqnarray}
\sum_{n=0}^\infty\, f_{mn}^{(E_\vt)}(\vec x)\,f_{nm}^{(E_\vt)}(\vec
y)\, a^n& =& \frac{E}{\pi}\, \frac{\big(E_\vt \, x^{(\vt)}_-\,
    x^{(\vt)}_+\big)^m}{m!}\, \e^{-E_\vt\,( x^{(\vt)}_+\, x^{(\vt)}_-+y^{(\vt)}_+\, y^{(\vt)}_-)/2} \\
&& \times\, \sum_{n=0}^\infty\,  n!\, \Big(\, \frac{a}{E_\vt \, x^{(\vt)}_-\,
  y^{(\vt)}_+}\, \Big)^{n}\, L_n^{m-n}\big(E_\vt\,x^{(\vt)}_+\,
x^{(\vt)}_- \big)\,
L_n^{m-n}\big(E_\vt\,y^{(\vt)}_+\, y^{(\vt)}_- \big)\ . \notag
\end{eqnarray}
It can be done explicitly by using the identity~\cite[48.23.11]{han75}
\begin{eqnarray}
\sum_{n=0}^\infty\, n!\, c^n\, L_n^{m-n}(\xi)\, L_n^{k-n}(\eta)=k!\,
\e^{c\,\xi\,\eta}\, (1-\eta\,c)^{m-k}\, c^m\,
L_k^{m-k}\big((1-\xi\,c)\, (\eta\,c-1)/c\big) 
\end{eqnarray}
with $k=m$, $\xi=E_\vt\,x^{(\vt)}_+\, x^{(\vt)}_-$, $\eta=E_\vt\,y^{(\vt)}_+\, y^{(\vt)}_-$, and
$c=a/E_\vt\,x^{(\vt)}_-\, y^{(\vt)}_+$. This yields
\begin{eqnarray}
\sum_{n=0}^\infty \, f_{mn}^{(E_\vt)}(\vec x)\,f_{nm}^{(E_\vt)}(\vec
y)\, a^n=\frac{E}{\pi}\, \e^{-(\xi+\eta)/2}\, \e^{c\, \xi\, \eta}\,
a^m\, L^0_m\left(\eta+\xi-c\, \xi\, \eta-c^{-1}\right)\ ,
\end{eqnarray}
which after some elementary algebra gives (\ref{partialsum1expl}).

\section{Matrix Basis in Momentum Space}\label{Fourier}

The complex Landau wavefunctions have special symmetries which will be
useful in analysing their Fourier transforms.
\begin{proposition}
The complex Landau wavefunctions satisfy the relations
\begin{eqnarray}
f_{mn}^{(E_\vt)}\big(E^{-1}\, t\,,\,E^{-1}\, x\big)&=&E\,
f_{mn}^{(1/E_{-\vt})}(t,x) \ , \label{sym1}\\[4pt]
f_{mn}^{(E_\vt)}(-t,x)&=&(-1)^{m-n}\, f_{nm}^{(E_\vt)}(t,x) \ , \label{sym2}\\[4pt]
f_{mn}^{(E_\vt)}(t,-x)&=&f_{nm}^{(E_\vt)}(t,x) \ , \label{sym3}\\[4pt]
f_{mn}^{(E_\vt)}(x,t)&=&(-\i)^{m-n}\, f_{nm}^{(E_{-\vt})}(t,x) \ . \label{sym4}
\end{eqnarray}
\end{proposition}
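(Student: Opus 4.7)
The plan is to verify each of the four identities by direct substitution into the closed-form expression \eqref{genlandau0} from Proposition~\ref{genlandautheo}. The key observation is that the parameters $E$ and $\vt$ enter this formula only through the complex frequency $E_\vt=E\e^{\i\vt}$, the overall normalization $\sqrt{E/\pi}$, and the complex light-cone coordinates $x_\pm^{(\vt)}=t\pm\i\e^{-\i\vt}\, x$; all other pieces are built from the single combination $E_\vt\, x_+^{(\vt)}\, x_-^{(\vt)}=E\,(\e^{\i\vt}\, t^2+\e^{-\i\vt}\, x^2)$ together with the factor $\bigl(x_{-\mathrm{sgn}(m-n)}^{(\vt)}\bigr)^{|m-n|}$. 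Each identity then reduces to tracking how these building blocks transform under the indicated substitution.

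For \eqref{sym1}, the rescaling $(t,x)\mapsto(E^{-1}t,E^{-1}x)$ acts linearly as $x_\pm^{(\vt)}\mapsto E^{-1}x_\pm^{(\vt)}$, so both the Gaussian and Laguerre arguments become $E^{-1}\e^{\i\vt}\, x_+^{(\vt)}\, x_-^{(\vt)}$; since $1/E_{-\vt}=E^{-1}\e^{\i\vt}$, this is precisely the combination appearing in $f_{mn}^{(1/E_{-\vt})}(t,x)$, which one interprets as $E\to E^{-1}$ with unchanged angle. The remaining prefactors recombine as $\sqrt{E/\pi}\cdot E_\vt^{|m-n|/2}\cdot E^{-|m-n|}=E\cdot\sqrt{E^{-1}/\pi}\cdot(1/E_{-\vt})^{|m-n|/2}$, yielding the claim.

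For \eqref{sym2} and \eqref{sym3} I would use the elementary relations $x_\pm^{(\vt)}|_{t\to-t}=-x_\mp^{(\vt)}$ and $x_\pm^{(\vt)}|_{x\to-x}=x_\mp^{(\vt)}$. In both cases the product $x_+^{(\vt)}\, x_-^{(\vt)}$ is invariant, so the Gaussian factor and the Laguerre polynomial are preserved, and the only effect is to swap the subscript of $x_{-\mathrm{sgn}(m-n)}^{(\vt)}$ into the subscript $-\mathrm{sgn}(n-m)$ that appears in $f_{nm}^{(E_\vt)}(t,x)$. Under $t\to-t$ one additionally picks up an overall sign $(-1)^{|m-n|}=(-1)^{m-n}$, while $x\to-x$ produces no extra sign, establishing \eqref{sym2} and \eqref{sym3}.

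The most delicate case is \eqref{sym4}. Here I would first establish the algebraic identity $x\pm\i\e^{-\i\vt}\, t=\pm\i\e^{-\i\vt}\, x_\mp^{(-\vt)}$ by factoring out $\pm\i\e^{-\i\vt}$, giving $x_\pm^{(\vt)}(x,t)=\pm\i\e^{-\i\vt}\, x_\mp^{(-\vt)}(t,x)$. Multiplying the two relations yields $E_\vt\, x_+^{(\vt)}(x,t)\, x_-^{(\vt)}(x,t)=E_{-\vt}\, x_+^{(-\vt)}\, x_-^{(-\vt)}$, so the Gaussian and Laguerre factors reduce precisely to those of $f_{nm}^{(E_{-\vt})}(t,x)$. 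The main obstacle is bookkeeping the overall phase: splitting into the cases $m>n$ and $m<n$, the factor $(x_{-\mathrm{sgn}(m-n)}^{(\vt)}(x,t))^{|m-n|}$ contributes $(-\mathrm{sgn}(m-n)\,\i\e^{-\i\vt})^{|m-n|}$, which combined with the ratio $(E_\vt/E_{-\vt})^{|m-n|/2}=\e^{\i\vt\,|m-n|}$ and the elementary identity $(-\i)^{-\alpha}=\i^\alpha$ collapses in both cases to exactly $(-\i)^{m-n}$. This final phase tracking in \eqref{sym4} is the only genuinely nonroutine step in the argument.
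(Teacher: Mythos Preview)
Your proposal is correct and follows essentially the same route as the paper's own proof: both verify each identity by direct substitution into the explicit formula \eqref{genlandau0}, tracking how the complex light-cone coordinates $x_\pm^{(\vt)}$ and the combinations $E_\vt\, x_+^{(\vt)}\, x_-^{(\vt)}$ and $E_\vt^{|m-n|/2}\,(x_{-\mathrm{sgn}(m-n)}^{(\vt)})^{|m-n|}$ transform under the four substitutions. Your version is in fact somewhat more explicit than the paper's, particularly in the prefactor bookkeeping for \eqref{sym1} and the phase tracking for \eqref{sym4}, where the paper simply states the coordinate transformation $x_\pm^{(\vt)}\to\pm\i\e^{-\i\vt}\, x_\mp^{(-\vt)}$ and leaves the collapse to $(-\i)^{m-n}$ to the reader.
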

\begin{proof}
The relation \eqref{sym1} follows directly from the explicit
expression \eqref{genlandau0} by noting that $E$ and $x_\pm^{(\vt)}$
occur only in the combinations $\sqrt{E}\, x_\pm^{(\vt)}$ and $E\,
x_+^{(\vt)}\, x_-^{(\vt)}$. Time reversal $t\rightarrow-t$ only
affects the term involving $x_{-{\rm sgn}(m-n)}^{(\vt)}\rightarrow -x_{{\rm
    sgn}(m-n)}^{(\vt)}=-x_{-{\rm sgn}(n-m)}^{(\vt)}$, which gives
\eqref{sym2}. Parity $x\rightarrow-x$ sends
$x_{-{\rm sgn}(m-n)}^{(\vt)}\rightarrow x_{{\rm
    sgn}(m-n)}^{(\vt)}=x_{-{\rm sgn}(n-m)}^{(\vt)}$, which shows
\eqref{sym3}. Under interchange of $t$ and $x$, we find
$x_\pm^{(\vt)}\rightarrow \pm\i\e^{-\i\vt}\, x_\mp^{(-\vt)}$, and thus
$\sqrt{E_{\vt}}\, x_\pm^{(\vt)}\rightarrow\sqrt{E_{-\vt}}\, \big(\pm\i
x_\mp^{(-\vt)} \big)$ and 
$E_\vt \, x_+^{(\vt)}\, x_-^{(\vt)}\rightarrow E_{-\vt}\,
x_+^{(-\vt)}\, x_-^{(-\vt)}$. Putting these transformations into \eqref{genlandau0} proves \eqref{sym4}.
\end{proof}

\begin{proposition}
The Fourier transformation of the complex Landau wavefunction $f_{mn}^{(E_\vt)}(\vec x)$ is given by
\begin{eqnarray}
\f\big[f_{mn}^{(E_\vt)}\big](\vec k)=f_{nm}^{(1/E_\vt)}(\vec
k)=\frac{(-\i)^{m-n}}E\, f_{mn}^{(E_{\vt})}\big(\vec E^{-1}\vec k \big)
\label{Fourierfmn}\end{eqnarray}
with $\vec E^{-1}\vec k=-E^{-1}\, (k^1,k^0)$.
\end{proposition}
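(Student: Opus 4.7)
My proof would have two parts. The second equality, $f_{nm}^{(1/E_\vt)}(\vec k)=\tfrac{(-\i)^{m-n}}{E}\,f_{mn}^{(E_\vt)}(\vec E^{-1}\vec k)$, is a purely algebraic consequence of the four symmetry identities \eqref{sym1}--\eqref{sym4} established in the previous proposition. Starting from the right, I would apply \eqref{sym1} (with $\vt\mapsto-\vt$) to pull the scaling $E^{-1}$ out of the arguments and trade it for the parameter change $E\mapsto1/E,\ \vt\mapsto-\vt$, thus passing from $f_{mn}^{(E_\vt)}(-k^1/E,-k^0/E)$ to $E\, f_{mn}^{(1/E_{-\vt})}(-k^1,-k^0)$. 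Next I would use \eqref{sym2} and \eqref{sym3} to absorb the two minus signs, producing a factor $(-1)^{m-n}$ together with an index swap $m\leftrightarrow n$. Finally \eqref{sym4} allows me to interchange the two arguments back into their natural order, at the cost of an additional $(-\i)^{m-n}$, a second index swap, and another reflection of $\vt$. Collecting the phases $(-1)^{m-n}(-\i)^{m-n}=\i^{m-n}=(-\i)^{-(m-n)}$ and dividing by $E$ produces exactly the stated identity.

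For the first equality, which carries the actual content, I would use the generating function
\[
K^{(E_\vt)}(\xi,\eta;t,x)=2\exp\Big(\tfrac12 E_\vt\,\big(-x_+^{(\vt)}x_-^{(\vt)}+2\xi x_-^{(\vt)}+2\eta x_+^{(\vt)}-2\eta\xi\big)\Big)
\]
from Appendix~\ref{matrixfunctions}. Since $x_\pm^{(\vt)}$ are linear in $(t,x)$, the exponent is a quadratic form in $(t,x)$ whose symmetric part is $-\tfrac12 E_\vt\, t^2-\tfrac12 E_{-\vt}\, x^2$; its real part is strictly negative definite for $|\vt|<\tfrac\pi2$. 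Hence the Fourier transform in $(t,x)$ reduces to a standard complex Gaussian integral that I would evaluate by completing the square, using $\int\,\dd u\ \e^{-\frac12 a u^2+b u}=\sqrt{2\pi/a}\,\e^{b^2/2a}$ for $\Re(a)>0$. The result is another Gaussian in the dual variables $\vec k$, with quadratic part $-\tfrac{k_0^2}{2E_\vt}-\tfrac{k_1^2}{2E_{-\vt}}$ and linear couplings between $(\xi,\eta)$ and $\vec k$ that I would simplify using $E_\vt\e^{-\i\vt}=E$ and $E^2/E_{-\vt}=E_\vt$.

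The crucial observation is that the Fourier transform inverts the two Gaussian widths: $1/E_\vt\mapsto E_\vt$ becomes the $k_0$-variance while $1/E_{-\vt}\mapsto E_{-\vt}$ becomes the $k_1$-variance, and these coincide precisely with the widths that appear in $K^{(1/E_\vt)}$ with its parameter interpreted as $\widetilde E=1/E,\ \widetilde\vt=-\vt$ (so that $\widetilde E_{\widetilde\vt}=1/E_\vt$). This simultaneously produces the reflection $\vt\mapsto-\vt$ (the ``charge-conjugation'' effect noted in the footnote) and, after matching the linear $(\xi,\eta)$-dependence, forces an interchange $\xi\leftrightarrow\eta$ — the generating-function avatar of the ``time-reversal'' swap $m\leftrightarrow n$. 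Concretely, I would verify the identity
\[
\f\big[K^{(E_\vt)}(\xi,\eta;\cdot,\cdot)\big](\vec k)=\tfrac{1}{E}\,K^{(1/E_\vt)}(E\eta,E\xi;\vec k)
\]
(up to the paper's Fourier-normalization convention) and then read off the Fourier transform of $f_{mn}^{(E_\vt)}$ by matching coefficients of $\xi^m\eta^n$ on both sides, which delivers $\f[f_{mn}^{(E_\vt)}](\vec k)=f_{nm}^{(1/E_\vt)}(\vec k)$.

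The main obstacle is the bookkeeping in the final matching: one has to track that the combination of the $\vt$-reflection in the reparameterization $E_\vt\mapsto 1/E_\vt$ and the $\xi\leftrightarrow\eta$ exchange together produce exactly the generating function $K^{(1/E_\vt)}$ with the predicted scaling $(\xi,\eta)\mapsto(E\eta,E\xi)$ — none of these three features is present in the familiar Euclidean Landau case of \cite{lsz04}, and they arise here only because the complex light-cone coordinates $x_\pm^{(\vt)}$ treat $t$ and $x$ asymmetrically. Once the Gaussian exponent is correctly matched, the coefficient comparison is routine, and the factor $E^{-1}$ encodes the relative normalization prescribed by the claim.
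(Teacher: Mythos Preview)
Your treatment of the second equality via the symmetry relations \eqref{sym1}--\eqref{sym4} is correct and coincides with what the paper does. For the first equality, however, you take a genuinely different route. The paper argues indirectly at the level of operators: it computes how Fourier transformation intertwines $\P^2(\vt)$ with the dual operator $\tilde{\mathcal P}^2(-\vt)$ (obtained by $x\to k$, $\partial\to\hat\partial$, $E\to E^{-1}$), deduces from the eigenvalue equation that $\f[f_{mn}^{(E_\vt)}]$ is a simultaneous eigenfunction of the dual pair with eigenvalues $4E_\vt^{-1}(m+\tfrac12)$ and $4E_\vt^{-1}(n+\tfrac12)$, and then invokes Parseval to fix the normalization. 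Your approach is a direct Gaussian computation on the generating function $K^{(E_\vt)}$, followed by coefficient matching. Your strategy has the advantage of being fully explicit and of pinning down the overall phase automatically, whereas the paper's eigenvalue-plus-Parseval argument, as written, determines the Fourier transform only up to a unimodular constant. Conversely, the paper's operator argument is cleaner conceptually and explains \emph{why} the parameter transforms as $E_\vt\mapsto 1/E_\vt$ (it is forced by the Weyl-algebra automorphism underlying Fourier transformation), a fact that in your approach emerges only after the Gaussian dust settles. One small correction: in your generating-function identity the rescaling of $(\xi,\eta)$ should be by $E_\vt$ rather than by $E$, i.e.\ $\f\big[K^{(E_\vt)}(\xi,\eta;\cdot)\big](\vec k)=E^{-1}\,K^{(1/E_\vt)}(E_\vt\eta,E_\vt\xi;\vec k)$; this is exactly the detail you flag as ``bookkeeping'' and would surface immediately upon completing the square.
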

\begin{proof}
Denote momentum space derivatives as
$\hat{\partial}_\mu:=\frac\partial{\partial k^\mu}$. Using the
explicit forms of the hyperbolic and Euclidean space wave operators
given in \S\ref{Formulation}, in Fourier space we find that these operators have the form
\begin{eqnarray}
\frac1{2\pi}\, \int\,\dd^2\vec x\ (\Pmm\phi)(\vec x)\, \e^{-\i\vec k\cdot\vec x}
&=& \frac1{2\pi}\, \int\, \dd^2\vec x\ \phi(\vec x)\, \Ptm\e^{-\i\vec k\cdot\vec x}\\[4pt]
&=&\big((k^2_0-k^2_1)+2\i E\, (k^0\, \hat\partial^1-k^1\,
\hat\partial^0)+E^2\, (\hat\partial_{0}^2-\hat\partial_{1}^2)\big) \f[\phi](\vec k) \notag
\end{eqnarray}
and
\begin{eqnarray}
\frac1{2\pi}\, \int\,\dd^2\vec x\ (\Pii\phi)(\vec x)\, \e^{-\i\vec k\cdot\vec x}
&=& \frac1{2\pi}\, \int\, \dd^2\vec x\ \phi(\vec x)\,\Pti\e^{-\i\vec k\cdot\vec x}\\[4pt]
&=&\big((k^2_0+k^2_1)+2\i E\, (k^0\, \hat\partial^1+k^1\,
\hat\partial^0)-E^2\,
(\hat\partial_{0}^2+\hat\partial_{1}^2)\big)\f[\phi](\vec k) \ . \notag
\end{eqnarray}
From the explicit forms of the regularized wave operators
(\ref{ptheta3}), this gives
\begin{eqnarray}
\f\big[\P{}^2(\vt)\phi\big](\vec k)&=&\e^{\i\vt}\,E^2\,
\big(-(\e^{\i\vt}\, \hat\partial_{0}^2+\e^{-\i\vt}\,
\hat\partial_{1}^2)+2\i E^{-1}\, (\e^{\i\vt}\,
k^1\, \hat\partial^0+\e^{-\i\vt}\, k^0\, \hat\partial^1)\notag\\ && \qquad
\qquad +\, (\e^{-\i\vt}\, k_0^2+\e^{\i\vt}\, k_1^2)\big)\f[\phi](\vec
k) \notag\\[4pt]
&=&\e^{2\i\vt}\, E^2\, \tilde{\mathcal P}{}^2(-\vt)\f[\phi](\vec k)\ ,
\end{eqnarray}
where the differential operator $\tilde{\mathcal P}{}^2(-\vt)$ has the same form as
$\Pt(-\vt)$ with the substitutions $\p{\mu}\rightarrow\hat\partial_\mu$, $x^\mu\rightarrow k^\mu$ and $E\rightarrow E^{-1}$. On the other hand, by substituting $\phi=f_{mn}^{(E_\vt)}$ we find
\begin{eqnarray}
\f\big[\P{}^2(\vt)f_{mn}^{(E_\vt)}\big](\vec k)=4E_\vt\,
\big(m+\mbox{$\half$}\big)\, \f\big[f_{mn}^{(E_\vt)}\big](\vec k)
\end{eqnarray}
and thus
\begin{eqnarray}
\tilde{\mathcal P}{}^2(-\vt)\f\big[f_{mn}^{(E_\vt)}\big](\vec
k)=4E_\vt^{-1}\, \big(m+\mbox{$\half$}\big)\, \f\big[f_{mn}^{(E_\vt)}
\big](\vec k)\ .
\end{eqnarray}
By Parseval's theorem the Fourier transforms of the matrix basis
functions have the same normalization as the position space wavefunctions, from which
we conclude the first equality of (\ref{Fourierfmn}). The second
equality of (\ref{Fourierfmn}) follows from the symmetry relations \eqref{sym1}--\eqref{sym4}.
\end{proof}

As a simple application of this result, we can establish that the
Feynman propagator for the free Klein-Gordon theory in the complex matrix
basis possesses the same mass-shell singularities as in momentum space.
\begin{lemma}
The Feynman propagator in the complex matrix basis is given
by
\beq
\big(\Delta_{F}^{(\kappa)}\big)_{mn;kl}:= \big(G^{(\kappa)\,-1
}\big)_{mn;kl} =\int\, \dd^2\vec k\ \frac{f_{mn}^{(1/E_\vt)}(\vec k)\, f_{kl}^{(1/E_\vt)}(\vec k)}{-\|\vec k\|_{\rm M}^2+\mu_\kappa^2} \ ,
\eeq
with
\beq
G^{(\kappa)}_{mn;kl}=\big\langle
f_{nm}^{(E_{-\vt})}\big|\big(\partial_\mu^2+\mu^2\big)_\kappa \big|f_{kl}^{(E_\vt)}
\big\rangle
\eeq
and $\mu_\kappa^2:=\e^{-\i\kappa}\,\mu^2$ for
$\vt=\frac\pi2-\kappa >0$.
\label{massshelllem}\end{lemma}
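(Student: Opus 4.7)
The strategy is to diagonalize the free Klein-Gordon operator $(\partial_\mu^2+\mu^2)_\kappa$ via Fourier transform, where it becomes multiplication by $M_\kappa(\vec k):=-\|\vec k\|_{\rm M}^2+\mu_\kappa^2$, and then to exploit the Fourier formula $\f[f_{mn}^{(E_\vt)}](\vec k)=f_{nm}^{(1/E_\vt)}(\vec k)$ proved in Appendix~\ref{Fourier}. The starting observation is that in the vanishing background limit $E\to 0$ the $\vt$-regularization (as noted at the end of \S\ref{Spectral}) reduces to the Feynman $\i\epsilon$-prescription, so that $(\partial_\mu^2+\mu^2)_\kappa$ is simply a Fourier multiplier whose symbol is $M_\kappa(\vec k)$ with a pole prescription equivalent to the standard Feynman one. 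The Fourier transform is therefore a bi-unitary equivalence between $(\partial_\mu^2+\mu^2)_\kappa$ acting on $\phi(\vec x)$ and multiplication by $M_\kappa(\vec k)$ acting on $\f[\phi](\vec k)$.

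The computation then proceeds in two steps. First, using the conjugation identity $[f_{nm}^{(E_{-\vt})}]^*=f_{mn}^{(E_\vt)}$ and the bra-ket conventions of \S\ref{matrix4}, one writes $G^{(\kappa)}_{mn;kl}=\int\,\dd^2\vec x\ f_{mn}^{(E_\vt)}(\vec x)\,(\partial_\mu^2+\mu^2)_\kappa f_{kl}^{(E_\vt)}(\vec x)$. Second, applying Parseval together with the Fourier formula from Appendix~\ref{Fourier}, and combining with the parity relation $f_{mn}^{(E_\vt)}(-\vec x)=(-1)^{m-n}f_{mn}^{(E_\vt)}(\vec x)$ (an elementary consequence of the symmetries derived there), transforms $G^{(\kappa)}_{mn;kl}$ into a momentum-space integral involving $f_{nm}^{(1/E_\vt)}(\vec k)$, $f_{lk}^{(1/E_\vt)}(\vec k)$, and the multiplier $M_\kappa(\vec k)$.

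Since the momentum-space action is merely scalar multiplication, the matrix inverse is obtained by formally replacing $M_\kappa(\vec k)$ with $M_\kappa(\vec k)^{-1}$. To verify that the resulting expression is genuinely the two-sided inverse in the index convention of \S\ref{matrix4}, I would compute $\sum_{k,l}G^{(\kappa)}_{mn;kl}\,(\Delta_F^{(\kappa)})_{lk;rs}$, interchange the order of the two momentum integrals with the matrix sum, and collapse the $(k,l)$ sum to a momentum-space delta distribution $\delta(\vec k-\vec p)$ using the bi-orthogonal completeness of the momentum-space complex Landau basis $\{f_{nm}^{(1/E_\vt)}\}$ and its dual $\{f_{nm}^{(1/E_{-\vt})}\}$, which is inherited from the position-space completeness of the original basis. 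The remaining single momentum integral then reduces by bi-orthogonality to $\delta_{mr}\delta_{ns}$, establishing that the stated integral is indeed $(G^{(\kappa)\,-1})_{mn;kl}$.

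The principal obstacle is the careful bookkeeping of the parity phases $(-1)^{m-n}$ which arise both from Parseval's identity (through the reversed argument $-\vec k$) and from the Fourier formula itself; these must conspire to cancel so that the final integral appears without any spurious sign factors. A secondary subtlety, already flagged in \S\ref{genoscbasis0}, is that the bi-orthogonal completeness and the interchange of sums with integrals are only rigorously justified on a suitable test function space such as the Gel'fand-Shilov space $\s_\alpha^\alpha(\R^2)$ with $\alpha\geq\frac12$, and the argument must be carried out in that setting. Once the matrix identity is established, the pole of the integrand at $\|\vec k\|_{\rm M}^2=\mu^2$ in the limit $\kappa\to 0^+$ manifestly reproduces the standard mass-shell singularity of the free Feynman propagator, proving the claim.
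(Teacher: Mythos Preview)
Your proposal is correct and follows essentially the same route as the paper: write $G^{(\kappa)}_{mn;kl}$ as a position-space integral of two complex Landau wavefunctions against the Klein--Gordon operator, pass to Fourier space where the operator becomes multiplication by $-\|\vec k\|_{\rm M}^2+\mu_\kappa^2$, and then invoke the identity $\f[f_{mn}^{(E_\vt)}]=f_{nm}^{(1/E_\vt)}$ from Appendix~\ref{Fourier}. The paper's own argument stops once it observes that the Fourier transforms of the $f_{mn}^{(E_\vt)}$ diagonalize $G^{(\kappa)}$ and simply asserts that the inverse follows; your additional step of contracting $\sum_{k,l}G^{(\kappa)}_{mn;kl}(\Delta_F^{(\kappa)})_{lk;rs}$ and collapsing the sum via bi-orthogonal completeness is a more explicit justification of the same conclusion, and your flagged concern about the parity phases $(-1)^{m-n}$ is precisely the bookkeeping the paper sweeps into the phrase ``the result now follows from (\ref{Fourierfmn}).''
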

\begin{proof}
We simply relate the Klein-Gordon operator in the different basis sets. One has
\beq
\big(\partial_\mu^2+\mu^2\big)_\kappa \delta(\vec x-\vec
y)=\langle\vec x|\big( \partial_\mu^2+\mu^2\big)_\kappa |\vec
y\rangle = \sum_{n,m,k,l}\, f_{mn}^{(E_\vt)}(\vec x)\,
G^{(\kappa)}_{mn;kl}\, f_{lk}^{(E_\vt)}(\vec y) \ ,
\eeq
and thus
\bea
G^{(\kappa)}_{mn;kl}&=& \int\, \dd^2\vec x\ \int\, \dd^2\vec y\ f_{mn}^{(E_\vt)}(\vec x)\, \langle \vec x|\big(\partial_\mu^2+\mu^2\big)_\kappa|\vec y\rangle\, f_{kl}^{(E_\vt)}(\vec y) \notag \\[4pt] &=& \int\, \frac{\dd^2\vec k}{(2\pi)^2}\ \int\, \dd^2\vec x \ \int\, \dd^2\vec y\ f_{mn}^{(E_\vt)}(\vec x)\, \e^{\i \vec k\cdot\vec x}\, \big(-\|\vec k\|_{\rm M}^2+\mu^2_\kappa\big)\, \e^{-\i \vec k\cdot \vec y}\, f_{kl}^{(E_\vt)}(\vec y) \notag\\[4pt] &=& \int\, \dd^2\vec k\ \f\big[f_{nm}^{(E_{-\vt})}\big](\vec k)^*\, \big(-\|\vec k\|_{\rm M}^2+\mu^2_\kappa\big)\, \f\big[f_{kl}^{(E_\vt)}\big](\vec k) \ .
\eea
It follows that the Fourier transforms of the functions $f_{mn}^{(E_\vt)}$ diagonalize $G^{(\kappa)}_{mn;kl}$, and the result now follows from (\ref{Fourierfmn}).
\end{proof}

\section{Proof of Proposition~\ref{posprop}}\label{APPpositionprop}

The propagator is given by
\begin{eqnarray}
\Delta^{(\kappa,\sigma)}(\vec x,\vec y)&=&\bra{\vec x}\big(\sigma\,
\e^{\i\kappa}\, \K^2(\vt)+\tilde\sigma\,
\e^{\i\kappa}\,\Kt(\vt)-\e^{-\i\kappa}\, \mu^2\big)^{-1}\ket{\vec
  y} \\[4pt]
&=&\e^{-\i\kappa}\, \bra{\vec x}\Big( \sigma\,
\P{}^2(\vt)+\tilde\sigma\, \Pt(\vt)+\e^{2\i\vt}\,
\sum_{k=2}^{n}\, \big(\sigma\, (\P{}^2_{i})_{k}+\tilde\sigma\,
(\Pt_{i})_{k} \big)+\e^{2\i\vt}\, \mu^2\Big)^{-1}\ket{\vec y} \notag
\end{eqnarray}
where $\vt=\frac\pi2-\kappa>0$ and we have set $\tilde\sigma=1-\sigma$. The (regularized) wave operators have the eigenvalue equations
\begin{equation}
\begin{aligned}
\big(\sigma\, \P{}^2(\vt)+\tilde\sigma\,
\Pt(\vt)\big)f_{m_0n_0}^{(E_\vt)}(\vec x_0)&=4 E_\vt\, \big(\sigma\,
m_0+\tilde\sigma\, n_0+\mbox{$\half$}\big)\, f_{m_0n_0}^{(E_\vt)}(\vec
x_0) \ , \\[4pt]
\big(\sigma\, (\Pii)_{k+1} +\tilde\sigma\, (\Pti)_{k+1}
\big)f^{(B_k)}_{m_kn_k}(\vec x_k)&=4B_k\, \big(\sigma \,
m_k+\tilde\sigma\, n_k+\mbox{$\half$} \big)\, f_{m_kn_k}^{(B_k)}(\vec
x_k) \ ,
\end{aligned}
\end{equation}
with $f_{m_kn_k}^{(B_k)}(\vec x_k)$ the usual Landau wavefunctions and
$B_k\in\R_+$ for $k=1,\dots,n-1$. Using the identity
\begin{eqnarray}
a^{-1}=\int_0^\infty\, \dd s\ \e^{-s\, a}
\end{eqnarray}
which is valid for $\Re(a)>0$, we find
\begin{eqnarray}
\Delta^{(\kappa,\sigma)}(\vec x,\vec y)
&=&-\i\e^{-\i\vt}\, \int_0^\infty\, \dd s\ \e^{- s\, \mu_\kappa^2}\ \sum_{m_0,n_0=0}^\infty \,
f_{m_0n_0}^{(E_\vt)}(\vec x_0)\, f_{n_0m_0}^{(E_\vt)}(\vec y_0)\,
\e^{-4s\, E_{-\vt}\, (\sigma\, m_0+\tilde\sigma\, n_0+\frac12)}\notag\\
&&\qquad \qquad \times\, \prod_{k=1}^{n-1}\, \Big(\,
\sum_{m_k,n_k=0}^\infty\, f_{m_kn_k}^{(B_k)}(\vec x_k)\,
f_{n_km_k}^{(B_k)}(\vec y_k)\, \e^{-4s\,B_k\, (\sigma \,
  m_k+\tilde\sigma \, n_k+\frac12)}\, \Big) \ .
\end{eqnarray}
By Lemma~\ref{partialsum1} the sum over $n_0$ gives
\begin{eqnarray}
&&\frac{E}{\pi}\, \sum_{m_0=0}^\infty\,  \e^{-4s\, E_{-\vt}\, (m_0+\frac12)}\\
&&\qquad \times\, \exp\Big(-\frac{E}2\, \|\vec x_0-\vec
y_0\|_\vt^2+\big(\e^{-4s\, E_{-\vt}\, \tilde\sigma }-1 \big)\, E\,
(\vec x_0,\vec y_0)_\vt-\e^{-4s\, E_{-\vt}\, \tilde\sigma }\,\i\vec
x_0\cdot\vec E\vec y_0\Big) \notag\\
&&\qquad \times\, L^0_{m_0}\left(E\, \|\vec x_0-\vec
  y_0\|_\vt^2-4\sinh^2(2s\, E_{-\vt}\, \tilde\sigma )\, E\, (\vec
  x_0,\vec y_0)_\vt-2\sinh(4s\, E_{-\vt}\, \tilde\sigma )\,\i\vec
  x_0\cdot\vec E\vec y_0\right)\ . \notag
\end{eqnarray}
The sum over $m_0$ can be performed by using the identity
(\ref{msumHan}) with $t=\e^{-4 s\, E_{-\vt}}$ to get
\begin{eqnarray}
&&\frac{E}{2\pi\, \sinh(2s\, E_{-\vt})}\, \exp\left(-\frac{\cosh(2s\,
    E_{-\vt})}{2\sinh(2s\, E_{-\vt})}\, E\, \|\vec x_0-\vec y_0\|_\vt^2\right.\\
&&\qquad \qquad \qquad \qquad \qquad \qquad +\, \Big(\, \e^{-4s\,
  E_{-\vt}\, \tilde\sigma}-1 +2\e^{-2s\, E_{-\vt}}\,
\frac{\sinh^2(2s\, E_{-\vt}\, \tilde\sigma)}{\sinh(2s\, E_{-\vt})}\,
\Big)\, E\, (\vec x_0,\vec y_0)_\vt\notag\\
&&\qquad \qquad \qquad \qquad \qquad \qquad \left.+\,\Big(\,
  -\e^{-4s\, E_{-\vt}\, \tilde\sigma}+\e^{-2s\, E_{-\vt}}\,
  \frac{\sinh(4s\, E_{-\vt}\, \tilde\sigma)}{\sinh(2s\, E_{-\vt})}\,
  \Big) \,\i\vec x_0\cdot\vec E\vec y_0\right) \ . \notag
\end{eqnarray}
By using elementary hyperbolic identities, the terms proportional to $(\vec x_0,\vec y_0)_\vt$ can be simplified to
\begin{eqnarray}
\e^{-4s\, E_{-\vt}\, \tilde\sigma}-1+2\e^{-2s\, E_{-\vt}}\,
\frac{\sinh^2(2s\, E_{-\vt}\, \tilde\sigma)}{\sinh(2s\, E_{-\vt})} =
\frac{\cosh(2s\, \tilde E_{-\vt})}{\sinh(2s\,
  E_{-\vt})}-\frac{\cosh(2s\, E_{-\vt})}{\sinh(2s\,E_{-\vt})}
\end{eqnarray}
where we defined $\tilde
E_{-\vt}:=(1-2\tilde\sigma)\, E_{-\vt}=(2\sigma-1)\,
E_{-\vt}$. Likewise, 
the terms proportional to $\i\vec x_0\cdot\vec E\vec y_0$ can be rearranged to
\begin{eqnarray}
-\e^{-4s\, E_{-\vt}\, \tilde\sigma}+\e^{-2s\, E_{-\vt}}\,
\frac{\sinh(4s\, E_{-\vt}\, \tilde\sigma)}{\sinh(2s\,
  E_{-\vt})}=-\frac{\sinh(2s\, \tilde E_{-\vt})}{\sinh(2s\,
  E_{-\vt})}\ .
\end{eqnarray}
The triangle relation $\|\vec x_0-\vec
y_0\|^2_\vt=\|\vec x_0\|^2_\vt+\|\vec y_0\|^2_\vt-2(\vec x_0,\vec y_0)_\vt$
allows us to combine further terms. The sums over $n_k$ and $m_k$ for
$k=1,\dots,n-1$ are
treated in exactly the same way, and putting everything together we
finally get (\ref{propLSZgen}).

\section{Proof of Proposition~\ref{thm1}}\label{APPmatrixprop}

The $2n$-dimensional regularized LSZ wave operator in the matrix basis
is given by \eqref{2nlsz}--\eqref{G0} with $\theta_j=\theta$
and $\mathcal D^{(\sigma)}_{mn;kl}:=\mathcal D^{j\, (\sigma)}_{mn;kl}$
for $j=0,1,\dots,n-1$. Each of these operators have non-vanishing
matrix elements only for
\begin{eqnarray}
n_j-m_j=k_j-l_j=:\alpha_j \qquad \mbox{for} \quad j=0,1,\ldots,n-1 \ .
\end{eqnarray}
This is due to the $SO(1,1)\times SO(2)^{n-1}$ symmetry of the action. We can thus eliminate $n$ components and write instead
\begin{eqnarray}
D_{\vec m,\vec m+\vec \alpha;\vec l+\vec\alpha,\vec
  l}^{(\kappa,\sigma)}=\i\mathcal
D^{(\sigma)}_{m_0,m_0+\alpha_0;l_0+\alpha_0,l_0}-\e^{-\i\kappa}\,
\sum_{i=1}^{n-1}\, \mathcal
D^{(\sigma)}_{m_i,m_i+\alpha_i;l_i+\alpha_i,l_i}-\e^{-\i\kappa}\,
\mu^2\ \delta_{\vec m\vec l}
\label{DSOsym}\end{eqnarray}
with $\vec\alpha\in\Z^n$.

The $n$ components of the wave operator
(\ref{DSOsym}) are independent and its eigenvectors are therefore products
of the eigenvectors of the individual matrices. The mass term is
already diagonal and so are the terms proportional to
$\tilde\Omega$. Thus for every $\alpha\in\Z$ we seek solutions of the
eigenvalue equations
\begin{eqnarray}
\sum_{l=0}^\infty\, \mathcal D^{(1/2)}_{m,m+\alpha;l+\alpha, l} U_{ l
  v}^{(\alpha)}= v\,U_{m v}^{(\alpha)} \label{GWeig2}\ .
\end{eqnarray}
This equation has been solved in~\cite{gw05}. The eigenvectors are given
by
\begin{eqnarray}
U_{mv}^{(\alpha)}=\sqrt{ {\alpha+m\choose m}\, {\alpha+y\choose y} }
\, \Big(\, \frac{2\, \sqrt{\Omega}}{1+\Omega}\, \Big)^{\alpha+1}\,
\Big(\, \frac{1-\Omega}{1+\Omega}\, \Big)^{m+y}\,
{}_2F_1\left(\begin{array}{c}-m,-y\\ 1+\alpha\end{array}\left | \ -\frac{4\Omega}{(1-\Omega)^2}\right)\right. \label{U}
\end{eqnarray}
and the eigenvalues are
\begin{eqnarray}
v=\frac{4\Omega}{\theta}\, \big(2y+\alpha+1 \big)
\label{vyrel}\end{eqnarray}
for $y\in \N_0$. As expected, this is the usual harmonic oscillator
spectrum. The hypergeometric function ${}_2F_1$ appearing in
(\ref{U}) with negative integer values in its first two arguments is
an orthogonal Meixner polynomial. In particular, $U_{mv}^{(\alpha)}$
is symmetric in its lower indices. 

For the full wave matrix the
addition of the $\tilde\Omega$-term modifies the eigenvalues $v\to v'$ with
\begin{eqnarray}
v'= \frac{4\Omega}{\theta}\, \big(2y+2\sigma\, \alpha+1\big)\ .
\end{eqnarray}
The complete matrix space wave operator in $D=2n$ dimensions has the representation
\begin{eqnarray}
D_{\vec m,\vec m+\vec \alpha;\vec l+\vec \alpha,\vec
  l}^{(\kappa,\sigma)}=\sum_{\vec v}\, U_{\vec m\vec
  v}^{(\vec\alpha)}\, \Big(\i v_0'-\e^{-\i\kappa}\,
\sum_{i=1}^{n-1}\, v'_i-\e^{-\i\kappa}\, \mu^2\, \Big)\,
\big(U^{(\vec\alpha)\,-1}\big)_{\vec l\vec v} \ ,
\end{eqnarray}
where 
\begin{eqnarray}
U_{\vec m\vec v}^{(\vec\alpha)}=\prod_{j=0}^{n-1}\, U_{m_jv_j}^{(\alpha_j)}
\end{eqnarray}
and
\begin{eqnarray}
\i v'_0-\e^{-\i\kappa}\, \sum_{i=1}^{n-1}\,
v'_i-\e^{-\i\kappa}\, \mu^2 &=&\frac{8\Omega}{\theta}\, \Big(\, \i
y_0+\i \big(\sigma\, \alpha_0+\mbox{$\frac12$}\big)-\e^{-\i\kappa}\,
\mu^2\, \frac{\theta}{8\Omega} \notag\\ && \qquad \qquad -\,
\e^{-\i\kappa}\, \sum_{i=1}^{n-1}\, y_i-\e^{-\i\kappa}\,
\sum_{i=1}^{n-1}\, \big(\sigma\, \alpha_i+\mbox{$\frac12$}\big)\, \Big) \label{GWeig3}
\end{eqnarray}
with $y_j\in\N_0$. From the orthogonality relations for the Meixner
polynomials it follows that 
\begin{eqnarray}
\big(U^{(\vec\alpha)\,-1}\big)_{\vec m\vec v} = U_{\vec m\vec
  v}^{(\vec\alpha)} \ .
\end{eqnarray}

In the following we will use the notation
$U_{mv}^{(\alpha)}=U_{m}^{(\alpha)}(y)$ where $v$ and $y$ are related by \eqref{vyrel}. Using the Schwinger parametrization this yields the propagator
\begin{eqnarray}
\Delta_{\vec m,\vec m+\vec \alpha;\vec l+\vec\alpha,\vec
  l}^{(\kappa,\sigma)} &=& \sum_{y_0,y_1,\dots,y_{n-1}=0}^\infty\, \Big(\i v_0'-\e^{-\i\kappa}\,
\sum_{i=1}^{n-1}\, v'_i-\e^{-\i\kappa}\, \mu^2\, \Big)^{-1}\
\prod_{j=0}^{n-1}\, \left(U_{m_j}^{(\alpha_j)}(y_j)\,
  U_{l_j}^{(\alpha_{j})}(y_j)\right)\notag\\[4pt] &=&
-\e^{\i\kappa}\, \frac{\theta}{8\Omega}\,
\int_0^\infty\, \dd t\ \e^{\i t\, \e^{\i\kappa}\, (\sigma\,
  \alpha_0+\frac12) -t\, \sum_{i=1}^{n-1}\, (\sigma\,
  \alpha_i+\frac12)-t\, \frac{\theta\, \mu^2}{8\Omega}}\\
&&\times\, \Big(\, \sum_{y_0=0}^\infty\, \e^{\i t\, \e^{\i\kappa}\,
  y_0}\, U_{n_0}^{(\alpha_0)}(y_0)\, U_{ l_0}^{(\alpha_0)}(y_0)\,
\Big)\ \prod_{i=1}^{n-1}\, \Big(\, \sum_{y_i=0}^\infty\, \e^{-t\,
  y_i}\, U_{m_i}^{(\alpha_i)}(y_i)\, U_{ l_i}^{(\alpha_{i})}(y_i)\,
\Big) \ . \notag
\end{eqnarray}
The sum over $y_0$ can be performed by using the explicit formula for
the eigenvectors \eqref{U}, and the hypergeometric identity~\cite{gw05}
\begin{eqnarray}
&&\sum_{y=0}^\infty\, {\alpha+y\choose y} \,
{}_2F_1\left.\left(\begin{array}{c}-m,-y\\1+\alpha\end{array}\right|w\right)\,
{}_2F_1\left.\left(\begin{array}{c}-l,-y\\1+\alpha\end{array}\right|w\right)
\ z^y\label{ff} \\
&& \qquad \ =\ \frac{\big(1-(1-w)\,
  z\big)^{m+l}}{(1-z)^{\alpha+m+l+1}}\,
{}_2F_1\left(\begin{array}{c}-m,-l\\1+\alpha\end{array}\left| \ \frac{z\,
    w^2}{\big(1-(1-w)\, z\big)^2}\right)\right. \qquad \mbox{for}
\quad|z|<1 \notag 
\end{eqnarray}
with $z=\e^{\i t\, \e^{\i\kappa}}\,(1-\Omega)^2\, (1+\Omega)^{-2}$
and $w=-4\Omega\, (1-\Omega)^{-2}$.

After some algebra this leads to
\begin{eqnarray}
&& \sum_{y_0=0}^\infty\, \e^{\i t\, \e^{\i\kappa}\, y_0}\,
U_{m_0}^{(\alpha_0)}(y_0)\, U_{ l_0}^{(\alpha_{0})}(y_0) \\ && \qquad
\qquad \ =\ \frac{\big(1-\e^{\i t\, \e^{\i\kappa}}\big)^{m_0+
    l_0}}{\left(1-\frac{\e^{\i t\, \e^{\i\kappa}}\,
      (1-\Omega)^2}{(1+\Omega)^2}\right)^{\alpha_0+m_0+ l_0+1}}\,
\sqrt{{\alpha_0+m_0\choose m_0}\, {\alpha_0+ l_0\choose l_0}}\notag\\
&&\qquad \qquad \qquad \qquad \qquad \times\,{}_2F_1\left(\begin{array}{c}-m_0,-
      l_0\\1+\alpha_0\end{array}\left| \ \Big(\,
  \frac{4\Omega}{(1+\Omega)^2}\, \Big)^2\, \Big(\,
  \frac{1+\Omega}{1-\Omega}\, \Big)^2\, \frac{\e^{\i t\,
      \e^{\i\kappa}}}{\big(1-\e^{\i t\, \e^{\i\kappa}}
    \big)^2}\right)\right. \ .\notag
\end{eqnarray}
Now we substitute $s=\e^{-t}$ (with Jacobian $s^{-1}$) and use the expansion of the hypergeo\-metric functions
\begin{eqnarray}
&&{}_2F_1\left.\left(\begin{array}{c}-m,-l\\
      1+\alpha\end{array}\right|z\right)= 
\sum_{u=\max(0,-\alpha)}^{\min(m,
  l)}\, \frac{m!\, l!\,\alpha!}{(m-u)!\,( l-u)!\,(\alpha+u)!\,
  u!}\ z^u \ .
\end{eqnarray}
After a bit of algebra the various factorial terms can be recombined into the quantity
(\ref{calAfact}), and we find
\begin{eqnarray}
&& \sum_{y_0=0}^\infty\, \e^{\i t\, \e^{\i\kappa}\, y_0}\,
U_{m_0}^{(\alpha_0)}(y_0)\, U_{ l_0}^{(\alpha_{0})}(y_0)
\label{minkpart} \\ && \qquad
\qquad \ =\ 
\sum_{u_0=\max(0,-\alpha_0)}^{\min(m_0, l_0)}\,
\frac{s^{-\i\e^{\i\kappa}\, u_0}\,
  \big(1-s^{-\i\e^{\i\kappa}}\big)^{m_0+
    l_0-2u_0}}{\left(1-\frac{(1-\Omega)^2}{(1+\Omega)^2}\,
    z^{-\i\e^{\i\kappa}}\right)^{\alpha_0+m_0+ l_0+1}} \notag\\ &&
\qquad \qquad \qquad \qquad \qquad \times\, \Big(\,
\frac{4\Omega}{(1+\Omega)^2}\, \Big)^{\alpha_0+2u_0+1} \,
\Big(\,
\frac{1-\Omega}{1+\Omega}\, \Big)^{m_0+ l_0-2u_0}\, \mathcal{A}(m_0,
l_0,\alpha_0,u_0) \ . \notag
\end{eqnarray}

The sums over $y_i$ for $i=1,\ldots,n-1$ are performed in a completely
analogous way. The only difference between the Euclidean and
hyperbolic parts of the propagator is the additional factor $-\i\e^{\i\kappa}$ in the
exponential of $y_0$. This simply changes $\i t\, \e^{\i\kappa}\to
-t$ and $s^{-\i\e^{\i\kappa}}\to s$ everywhere in the above
derivation, and we arrive finally at the expression
(\ref{matrixprop0}).

\bibliographystyle{alpha}
\bibliography{references}

\end{document}